\newcommand {\bdm}{\begin{displaymath}}
\newcommand {\edm}{\end{displaymath}}
\newcommand {\R}{\mathbb R}
\newcommand{\N}{\mathbb N}
\newcommand{\p}{\mathbb P}
\newcommand{\F}{\mathbb F}
\newcommand{\h}{\mathcal H}
\newcommand{\E}{\mathbb E}
\newcommand{\f}{\mathcal F}
\newcommand{\g}{\mathcal G}
\newcommand{\G}{\mathbb G}
\newcommand{\B}{\mathcal B}
\newcommand{\T}{\mathcal T}
\DeclareMathOperator*{\esssup}{ess\,sup}
\newtheorem{thm}{Theorem}[section]
\newtheorem{lem}[thm]{Lemma}
\newtheorem{exa}[thm]{Example}
\newtheorem{pro}[thm]{Proposition}
\newtheorem{defn}[thm]{Definition}
\newtheorem{cor}[thm]{Corollary}
\newtheorem{rem}[thm]{Remark}
\newtheorem{assum}[thm]{Assumption}
\title{American Options with Asymmetric Information and Reflected BSDE}
\author{Neda Esmaeeli\\Department of Mathematical Sciences\\ Sharif University of Technology\\ Azadi Avenue\\ Tehran 14588-89694\\ Iran\\E-mail: n\_esmaili@mehr.sharif.ir    \and Peter Imkeller$^{*}$\\ Institut f\"ur Mathematik\\ Humboldt-Universit\"at zu Berlin\\ Unter den Linden 6\\ 10099 Berlin\\Germany\\E-mail: imkeller@math.hu-berlin.de}
\date{}
\begin{document}

\maketitle

\begin{abstract}
We consider an American contingent claim on a financial market where the buyer has additional information. Both agents (seller and buyer)
observe the same prices, while the information available to them may differ due to some extra exogenous knowledge the buyer has.
The buyer's information flow is modeled by an initial enlargement of the reference filtration. It seems natural to investigate the value of the
American contingent claim with asymmetric information. We provide a representation for the cost of the additional information relying on some results
on reflected backward stochastic differential equations (RBSDE). This is done by using an interpretation of prices of American contingent claims with extra information for the buyer by solutions of appropriate
RBSDE.
\end{abstract}

{\bf 2010 AMS subject classifications:} primary 60G40, 91G20; secondary 91G80, 60H07.

{\bf Keywords and phrases:} American contingent claims; asymmetric information; cost of information; initial enlargement of filtrations; reflected BSDE.

\section{Introduction}

A European contingent claim is a contract on a financial market whose payoff depends on the market state at maturity or exercise time. The problem of valuation and hedging of contingent claims on complete markets, first studied by Black and Scholes \citep{Black}, Merton \citep{Merton1, Merton2},
Harrison and Kreps \citep{Harri1}, Harrison and Pliska \citep{Harri2}, Duffie \citep{Duffi}, and Karatzas \citep{Kara}, among others, can be formulated in terms of backward stochastic differential equations (BSDE). Pricing and hedging on incomplete markets has been investigated by many authors for some decades. We only mention pioneering papers by F\"{o}llmer and Schweizer \citep{Foll}, M\"uller \citep{Muller}, F\"ollmer and Sondermann \citep{Follmer}, Schweizer \citep{Sch}, Sch\"al \citep{Schal}, Bouchaud and Sornette
\citep{Bouch} and El Karoui and Quenez \citep{Nichole2} who were among the first to link this problem to BSDE. BSDE were introduced, on a Brownian filtration, by Bismut \citep{Bismut}.
Pardoux and Peng \citep{Pardoux} proved existence and uniqueness of adapted solutions under suitable square-integrability assumptions for coefficients and  terminal condition.
For some decades, BSDE represent a vibrant field of research, due to its close ties with stochastic control and mathematical finance.

 In contrast to their European counterparts, American contingent claims (ACC), such as American call or put options, can be exercised
at any time before maturity. Ignoring interest rates, it is well known that the value of the process of an
American contingent claim is related to the Snell envelope of the payoff process, i.e. the smallest supermartingale dominating it. The optimal exercise time is given by the hitting time of the payoff process by the Snell envelope. This key observation links optimal stopping problems to reflected backward stochastic differential equations (RBSDE), i.e. BSDE constrained to stay above a given barrier which in the case of the ACC is given by the payoff function. RBSDE in continuous time, the variant related to ACC, were first investigated in El Karoui et al. \citep{Nichole}. In this context the solution process is kept above the reflecting barrier by means of an additional process. As in the classical Skorokhod problem, this process is non-decreasing. The support of the associated positive random measure is included in the set of times at which the solution process touches the barrier.

In this paper, we consider American contingent claims in a scenario in which the buyer has better information than the seller. While the decisions of the latter are based on the public information flow $\F=(\f_{t})_{t\in[0,T]}$, the buyer possesses additional information modeled by some random variable $G$ which is already available initially. So his information evolution is described by the enlarged filtration $\G=(\g_{t})_{t\in[0,T]}$ with $\g_t = \f_t\vee \sigma(G)$.
We study the effect of this additional information on the value and the optimal exercise time of an American contingent claim.
The situation is similar to an insider's optimal investment problem in the simplest possible model, where he aims to maximize expected utility from the terminal value of his portfolio, and his investment decisions are based on the associated larger flow of information.
Pikovsky and Karatzas \citep{Pik} first studied this problem in the framework of an initially enlarged filtration.
Variants of the model were investigated among others by Elliott et al. \citep{Elli}, Grorud and Pontier \citep{Gror1, Gror2}, Amendinger et al. \citep{Amendinger}, or Ankirchner et al. \citep{Ankirchner}.

Building on results about initial enlargements of filtrations by Jacod \citep{jacod2}, in the first part of the paper we reduce the problem to a
standard optimal stopping problem on an enlarged probability space in case $G$ possesses conditional laws with respect to the smaller filtration that are smooth enough (density hypothesis). Under the density hypothesis we write the value function of an American contingent claim obtained with additional information as the value function of a modified American contingent claim on the enlarged space.
To define it as the product of the underlying probability space and the (real) space of possible values of $G$, we give a factorization of $\G$--stopping times in terms of parametrized $\F-$stopping times. This
is a rational choice, since the initial enlargement is related to a measure change on this product space; see for instance Jacod \citep{jacod2} or Amendinger et al. \citep{Amendinger}.

In the second part, following the well known link between optimal stopping problems and RBSDE in El Karoui et al. \citep{Nichole}, on a Brownian basis we define a corresponding
RBSDE on the product space associated to the initial enlargement of the filtration. BSDE for (initially or progressively) enlarged filtrations have been studied by Eyraud-Loisel \citep{Anne} or Kharroubi et al. \citep{Idris}. The approach used in
\citep{Anne} is based on measure changes, which is one, but not the main, tool for our approach. Our treatment of the RBSDE is based on Ito calculus and the canonical decomposition of semimartingales in $\G$.
Extending results in El Karoui et al. \citep{Nichole}, we rewrite the value function of the American contingent claim with asymmetric information
in terms of the solution of the RBSDE on the product space. This provides a solution of the RBSDE with respect to the larger filtration. Possessing additional information, the buyer has a larger value of the expected payoff than the seller.
We study the advantage of the buyer in terms of the solutions of two different RBSDE.

The outline of the paper is the following. After presenting notations and assumptions in Section 2, we introduce the financial market model with asymmetric information.
In Section 3, we factorize $\G$--stopping times as parametrized $\F$--stopping times, and give a formula for the value of an ACC with asymmetric information. We also study the value function for conditional expectations with respect
to the small filtration - an optimal projection problem. Section 4 is concerned with the
link between optimal stopping problems and RBSDE. We recall some results from El Karoui et al. \citep{Nichole} and extend them
to parametrized RBSDE. We define an RBSDE that corresponds to the optimal stopping problem on the product space. By changing variables in the solution of this RBSDE, we obtain an alternative
expression for the value function with additional information in terms of the solution of the RBSDE in the initially enlarged filtration.
In Section 5, we define the cost of additional information by utility indifference. We obtain a formula for the cost in terms of
a difference of solutions of two RBSDE on different spaces. Finally, we compute it in a simple case.

\section{Setup and Preliminaries}

Let $T>0$ represent a finite time.
We consider a filtered probability space $\left(\Omega,\f,\F,\p\right)$, where $ \F = \left(\f_{t}\right)_{t \in [0,T]}$  is the reference filtration satisfying the
usual conditions of right-continuity and completeness. Moreover we assume that $\f_{0}$ is trivial. Equations resp. inequalities involving random variables are usually understood in the almost sure sense.
We consider a random variable $G: \Omega \rightarrow \R$.
Let ${\G}$ be the initial enlargement of $\F$ by $G$, i.e. ${\G} =  \left( {\g}_{t}\right)_{t \in [0,T]}$ where ${\g}_{t} = \f_{t} \vee \sigma(G), t \in [0,T]$.

 We denote by $P^{G}$ the law of $G$ and for $t\in[0,T]$ by $P_{t}^{G}(\omega, du)$ the regular version of the conditional law of $G$ given $\f_{t}$.
Throughout this paper, we will assume that Jacod's density hypothesis (\citep{jacod1}, \citep{jacod2}) stated in the following assumption is satisfied.
\begin{assum}\label{jacod_hypothesis}
 For $t\in[0,T]$, the regular conditional law of $G$ given $\f_{t}$ is equivalent with the law of $G$ for $\p$-almost all $\omega \in \Omega$
 i.e.
 \begin{eqnarray*}
  \p\left[ G \in \cdot |\f_{t}\right] \sim \p\left(G\in\cdot\right),\quad\p \text{-- a.s}.
 \end{eqnarray*}
\end{assum}

 According to \citep{jacod2}, for each $t\in[0,T]$ there exists an $\f\otimes\B(\R)-$measurable version of $\alpha_{t}(u)(\omega):=\frac{dP_{t}^{G}(u,\omega)}{dP^{G}(u)}$  which is strictly positive.
 And for each $u\in\R,$ $\{\alpha_{t}(u)\}_{t\in[0,T]}$ is a martingale w.r.t ${\F}$.
We recall that it is shown in \citep[Proposition 1.10]{Amen01} that the strict positivity of $\alpha$ implies the right continuity of the filtration $\mathbb{G}$. Let $t \in \R^{+}$ and $\mathbb{H}$ a filtration in $\mathcal{F}.$ We denote by $\mathcal{T}_{t,T}\left(\mathbb{H}\right)$ the set of $\mathbb{H}-$stopping times with values in $[t,T]$.
\begin{defn}\label{payoff}
 Consider the following payoff process
 \begin{eqnarray}
 R&=& L1_{[0,T)}+\xi1_{\{T\}},
\end{eqnarray}

where $L$ is an $\F-$adapted real-valued c\`{a}dl\`{a}g process and $\xi$ an $\f_{T}-$measurable random variable, satisfying the integrability condition
\begin{eqnarray}
 \E[\sup_{t\in[0,T]}|L_{t}|+|\xi|]<\infty.
\label{integasssump}\end{eqnarray}

For $t\in[0,T], \tau\in\mathcal{T}_{t,T}\left(\mathbb{F}\right),$
the value function of an American contingent claim is defined by
\begin{eqnarray}\label{equality}
 V_t =  \esssup_{\tau \in \mathcal{T}_{t,T}\left(\mathbb{F}\right)}\E\left[ R(\tau)|\f_{t}\right].
\label{def1}\end{eqnarray}
\end{defn}
$\tau$ is the buyer's stopping time and plays the role of a control tool. We suppose throughout this paper that $0 \leq L_T\leq \xi<+\infty$.

We consider an American contingent claim where, in contrast to the seller, the buyer possesses additional information. This extra information may be based for instance on a good analyst
or better software. The additional information is
described by the random variable we denote by $G$. A natural question one may ask is ''what is the value of an American contingent claim with extra information? ``
Another one addresses the following problem. As the buyer has more information, he has access to a larger set of available stopping times leading to a higher expected payoff. This immediately leads to the
question ''what is the cost of this extra information? ``

A filtration usually encodes a flow of information. So it is natural to model extra information by an enlargement of a
filtration. We will consider an initial enlargement of the reference filtration. This means that we add all the extra information
at initial time to the reference filtration.
As introduced above, $\G = \left(\g_{t}\right)_{t \in [0,T]}$ is the initial enlargement of $\F$ by $G$.
Formally, incorporating extra information leads to working on the following product spaces whose second component is the space of possible values of the additional information given by a real valued random variable. So we consider the probability space $(\widehat{\Omega},\widehat{\f},\widehat{\F},\widehat{\p})$, where
\begin{equation}
\begin{aligned}
 \widehat{\Omega}&:= \Omega\times \R,&\\
 \widehat{\f}_{t}&:=\bigcap_{s>t}\left(\f_{s}\otimes \mathcal{B}(\R\right)), t \in [0,T],&\\
 \widehat{\F}&:=({\widehat{\f}}_{t})_{t\in[0,T]}, \quad \widehat{\f}=\f\otimes\mathcal{B}(\R),&\\
 \widehat{\p}&:= \p\otimes \eta,&
\label{product}\end{aligned}
\end{equation}
where $\eta$ is a probability measure on $(\R,\mathcal{B}(\R))$ playing the role of the law of the additional information. Without loss of generality we may assume that $(\widehat{\Omega},\widehat{\f},\widehat{\p})$ is complete and that $\widehat{\f}_{0}$ contains all $\widehat{\p}-$null sets of $\widehat{\f}$. We denote by $\widehat{\E}$ the expectation w.r.t. $\widehat{\p}$. Taking expectations with respect to $\widehat{\p}$ takes into account averaging over the possible values the additional information can assume, with respect to its law governing the second component in the product space. In other words, for a random variable $X$ defined on $\widehat{\Omega}$ we have
\begin{equation}
\widehat{\E}(X)=\E{\left(\int_{\R}X(u)d\eta(u)\right)}.
\label{rrem}\end{equation}
where $X(u)=X(.,u), u\in\R$.

Due to the definition of the value function of an American contingent claim (\ref{def1}),  our first step on the way to answer the above questions is to study

\begin{equation}
 \esssup_{\tau \in \mathcal{T}_{t,T}\left(\mathbb{G}\right)}\E\left[ R(\tau)|\h_{t}\right],
\label{aim}\end{equation}

 where $\h_{t}=\g_{t}$. We also study the case $\h_{t}=\f_{t}$ which will be seen to be understood as an optimal projection problem.
Our main idea is to look for a suitable representation of $\G-$stopping times as ''parametrized'' $\mathbb{F}-$stopping times, and then reduce the problem to a corresponding problem in
a product filtration which contains the reference filtration. We will answer the first question in this section, while the second one is treated in Section \ref{sec4}.
We denote by $$V^{G}:=\esssup_{\tau \in \mathcal{T}_{0,T}\left(\mathbb{G}\right)}\E\left[ R(\tau)|\g_{0}\right]$$ the value of the American contingent claim with extra information.
We will use the density hypothesis to write this value as the value of an American contingent claim in the product filtration $\widehat{\F}.$ For this purpose, we need some properties
of the filtration $\G.$ We begin with the following remark.
\begin{rem}
 $\g_{0}=\sigma(G).$ This holds true by the fact that $\F$ is right-continuous and $\f_{0}$ is trivial.
\label{r2.1}\end{rem}

It is clear that $V^{G}$ is a $\g_{0}-$measurable random variable. Hence by factorization it is of the form $f(G)$ where $f$ is a real-valued measurable function.

\section{American contingent claims in an initially enlarged filtration}\label{section 2}

In this section, we present a characterization of $\G-$stopping times. We then derive a formula representing the value function of an American contingent claim with extra information. Throughout this section, we work on the probability space $(\widehat{\Omega},\widehat{\f},\widehat{\F},\widehat{\p})$ from (\ref{product}) where $\eta=P^{G}.$

\subsection{Factorization of $\G$-stopping times}
We start with the following proposition.

\begin{pro}\label{mypro1}
  Let $X : \widehat{\Omega}\times\R^{+}\longrightarrow\R$ be an $\widehat{\F}-$adapted process. Then for the random variable $G$, the process
  $X(G): {\Omega}\times\R^{+}\longrightarrow\R$ is $\G-$adapted.
\end{pro}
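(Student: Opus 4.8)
The plan is to reduce the statement to a pointwise (in $u$) measurability check and then use the right-continuity of $\G$ together with Remark \ref{r2.1}. First I would unwind the definitions: saying that $X(G)$ is $\G$-adapted means that for each $t\in[0,T]$ the map $\omega\mapsto X_t(\omega,G(\omega))$ is $\g_t$-measurable. Since $\g_t=\bigcap_{s>t}\bar\g_s$ and each $\bar\g_s=\f_s\vee\sigma(G)$, it suffices to show that $\omega\mapsto X_s(\omega,G(\omega))$ is $\bar\g_s$-measurable for every $s>t$; the right-continuous intersection is then automatic. So the real content is: if $Y:\widehat\Omega\to\R$ is $\widehat\f_s$-measurable (equivalently, $\f_s\otimes\B(\R)$-measurable modulo the intersection, which again only helps), then $\omega\mapsto Y(\omega,G(\omega))$ is $\f_s\vee\sigma(G)$-measurable.

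The key step is a monotone-class / functional-form argument. I would first verify the claim for product-type generators $Y(\omega,u)=A(\omega)B(u)$ with $A$ bounded $\f_s$-measurable and $B$ bounded Borel: then $Y(\omega,G(\omega))=A(\omega)B(G(\omega))$ is a product of an $\f_s$-measurable and a $\sigma(G)$-measurable random variable, hence $\bar\g_s=\f_s\vee\sigma(G)$-measurable. The collection of bounded $Y$ for which the conclusion holds is a vector space, closed under bounded monotone limits, and contains the constants; the product functions above form a $\pi$-system generating $\f_s\otimes\B(\R)$. By the functional monotone class theorem the conclusion extends to all bounded $\f_s\otimes\B(\R)$-measurable $Y$, and then to all such $Y$ by truncation. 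Applying this with $Y=X_s$ for each fixed $s$, and then letting $s\downarrow t$ through the intersection defining $\g_t$, gives that $X_t(\cdot,G(\cdot))$ is $\g_t$-measurable. One should also note measurability of the path map: since $X$ is a process one typically wants joint measurability in $(t,\omega,u)$, but adaptedness is a statement for each fixed $t$, so the fixed-$t$ argument above is all that is needed.

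The main obstacle, such as it is, is bookkeeping around the right-continuous regularization: $\widehat\f_t=\bigcap_{s>t}(\f_s\otimes\B(\R))$ is not literally $\f_t\otimes\B(\R)$, and $\g_t=\bigcap_{s>t}\bar\g_s$ is not literally $\f_t\vee\sigma(G)$, so one must make sure the two intersections are handled compatibly rather than trying to match the non-regularized $\sigma$-algebras directly. The clean way is to prove the inclusion-level statement ``$Y$ is $\f_s\otimes\B(\R)$-measurable $\Rightarrow$ $Y(\cdot,G)$ is $\bar\g_s$-measurable'' for each \emph{fixed} $s$, as above, and only afterwards intersect over $s>t$ on both sides; since $X$ being $\widehat\F$-adapted gives that $X_t$ is $\widehat\f_t$-measurable and hence $\f_s\otimes\B(\R)$-measurable for every $s>t$, we get $X_t(\cdot,G)$ measurable w.r.t. $\bar\g_s$ for every $s>t$, i.e. w.r.t. $\g_t$. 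No density hypothesis is needed for this particular statement; it is a pure measurability fact about initial enlargements.
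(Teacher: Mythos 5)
Your proof is correct and follows essentially the same route as the paper: both reduce the claim to measurability of the pullback under $\omega\mapsto(\omega,G(\omega))$ and verify it on product generators $C\times D$ with $C\in\f_s$, $D\in\B(\R)$ (the paper at the level of sets, you via the functional monotone class theorem, which is the same idea). Your explicit bookkeeping of the right-continuous regularizations $\widehat\f_t=\bigcap_{s>t}(\f_s\otimes\B(\R))$ and $\g_t=\bigcap_{s>t}\bar\g_s$ is in fact slightly more careful than the paper's, which works directly with $\f_t\otimes\B(\R)$.
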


\begin{proof}
 We define
 \begin{eqnarray*}
  &\bar{G}:&\Omega\longrightarrow\widehat{\Omega}\\
         & &\omega\longmapsto(\omega,G(\omega)).\\
 \end{eqnarray*}
 Then for fixed $t\in[0,T]$, we have for each $B\in\B(\R)$
 \begin{equation}
 \begin{aligned}
  {X_{t}(\omega,G(\omega))}^{-1}(B)&=&{X_{t}(\bar{G}(\omega))}^{-1}(B)\\
  &=&\bar{G}^{-1}(X_{t}^{-1}(B)).
  \end{aligned}
\label{stop}\end{equation}

 Since $X_{t}^{-1}(B)\in\widehat{\f}_{t},$ it is sufficient to prove that $\bar{G}^{-1}(C\times D) \in \mathcal{G}_t$ for $C\in\f_{t}$ and $D\in\B(\R)$.

Indeed, we have
$$\bar{G}^{-1}(C\times D)=C\cap G^{-1}(B)\in\f_{t}\vee \sigma(G)=\g_{t}.$$
\end{proof}

\begin{rem}
With similar arguments, one can show that if $X : \widehat{\Omega}\times\R^{+}\longrightarrow\R$ is an $\widehat{\F}-$progressively measurable (resp. predictable) process, then
 $X(G): {\Omega}\times\R^{+}\longrightarrow\R$ is $\G-$progressively measurable(resp. predictable).
\label{remprog}\end{rem}

The following proposition characterizes $\G-$stopping times in terms of $\widehat{\F}-$stopping times.
\begin{pro}\label{my lemma}
  Let $\tau^{'}: \Omega \rightarrow \R^+$ be a random time.
 $\tau^{'}$  is a $\G-$stopping time if and only if there exists an $\widehat{\F}-$stopping time ${\tau}:\widehat{\Omega} \rightarrow \R^+$ such that $\tau^{'}(\omega) = \tau(\omega,G(\omega))$ for $\p-$a.e. $\omega\in \Omega$.

\end{pro}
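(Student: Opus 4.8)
The plan is to prove both implications by passing through the map $\bar G:\omega\mapsto(\omega,G(\omega))$ introduced in the proof of Proposition \ref{mypro1}, together with the standard characterization of a stopping time via its graph or, equivalently, via the events $\{\tau\le t\}$. The key structural fact I would use is that, because $\widehat\f_t=\bigcap_{s>t}(\f_s\otimes\B(\R))$ and $\g_t$ is the right-continuous regularization of $\f_t\vee\sigma(G)$, the map $\bar G$ pulls $\widehat\f_t$ back into $\g_t$: indeed for $s>t$ one has $\bar G^{-1}(\f_s\otimes\B(\R))\subseteq\f_s\vee\sigma(G)=\bar\g_s$, so $\bar G^{-1}(\widehat\f_t)\subseteq\bigcap_{s>t}\bar\g_s=\g_t$ by right-continuity of $\G$. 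This is essentially the computation already carried out in Proposition \ref{mypro1}, and I would quote it.

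For the "if" direction, suppose $\tau:\widehat\Omega\to\R^+$ is an $\widehat\F$-stopping time and set $\tau'=\tau\circ\bar G$. For each $t$, $\{\tau\le t\}\in\widehat\f_t$, hence $\{\tau'\le t\}=\bar G^{-1}(\{\tau\le t\})\in\g_t$ by the pullback fact above; thus $\tau'$ is a $\G$-stopping time. (If one prefers to work with the process viewpoint, $\tau$ is the debut-type object associated to the $\widehat\F$-adapted process $t\mapsto 1_{\{\tau\le t\}}$, and Proposition \ref{mypro1} applied to this process already gives that $t\mapsto 1_{\{\tau'\le t\}}$ is $\G$-adapted, which is the assertion.)

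The substantive direction is "only if": given a $\G$-stopping time $\tau'$, construct an $\widehat\F$-stopping time $\tau$ on $\widehat\Omega$ with $\tau'(\omega)=\tau(\omega,G(\omega))$ for $\p$-a.e.\ $\omega$. The natural idea is to realize $\tau'$ via its $\G$-adapted indicator process $Y_t:=1_{\{\tau'\le t\}}$ and to "disintegrate" it along $\sigma(G)$. Concretely, for each rational $t$ the event $\{\tau'\le t\}\in\g_t$, and using $\g_t=\bigcap_{s>t}(\f_s\vee\sigma(G))$ together with a measurable selection / factorization of $\f_s\vee\sigma(G)$-measurable sets as $\f_s\otimes\B(\R)$-sections (this is where Jacod's setup and the product structure $(\widehat\Omega,\widehat\F,\widehat\p)$ with $\eta=P^G$ are invoked), one produces for each rational $t$ a set $\widehat A_t\in\widehat\f_t$ with $\{\tau'\le t\}=\bar G^{-1}(\widehat A_t)$ up to a $\p$-null set. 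The delicate points are (i) arranging these $\widehat A_t$ to be monotone in $t$ and right-continuous so that $\widehat Y_t:=1_{\widehat A_t}$ defines a genuine adapted increasing right-continuous process on $\widehat\Omega$ — this can be forced by replacing $\widehat A_t$ with $\bigcap_{q>t,\,q\in\Q}\bigcup_{r\le q}\widehat A_r$ and checking that the pullback is unchanged $\widehat\p$-a.e.; and (ii) handling the null sets uniformly, which works because there are only countably many rationals and $\bar G_*\p=\widehat\p$ on the relevant sub-$\sigma$-algebras. Then $\tau(\widehat\omega):=\inf\{t:\widehat Y_t(\widehat\omega)=1\}$ is an $\widehat\F$-stopping time (debut of a progressive set in a filtration satisfying the usual conditions — here $\widehat\F$ is right-continuous by construction, and one completes $\widehat\p$ if needed), and by construction $\tau\circ\bar G=\tau'$ $\p$-a.s.

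I expect the main obstacle to be exactly step (i)–(ii) above: turning the family of pullback identities "$\{\tau'\le t\}=\bar G^{-1}(\widehat A_t)$ mod $\p$-null" into a single process-level identity "$\tau'=\tau\circ\bar G$ mod $\p$-null". The issue is that $\bar G^{-1}$ is far from injective on $\sigma$-algebras, so $\widehat A_t$ is only determined up to sets $N$ with $\bar G^{-1}(N)$ $\p$-null (equivalently $\widehat\p(N\setminus(\Omega\times\{G\text{-support}\}))$ irrelevant), and one must choose representatives coherently across all rational $t$ before taking the right-continuous regularization and the debut. Once a coherent countable family is fixed, right-continuity of $\widehat\F$ and monotone-class arguments close the gap; the a.e.\ qualifier in the statement is precisely what absorbs the unavoidable ambiguity. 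An alternative, slightly slicker route avoiding explicit selection is to note that $L^0(\Omega,\g_t,\p)$ is generated as an algebra by $L^0(\Omega,\f_t,\p)$ and bounded functions of $G$, apply this to the bounded $\g_t$-measurable random variable $1_{\{\tau'\le t\}}$ to get an $\widehat\f_t$-measurable $\widehat\p$-version $h_t(\cdot,G)$, and then regularize in $t$; the bookkeeping is the same but one leans on a density/approximation statement rather than a raw measurable selection.
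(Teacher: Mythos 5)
Your proposal is correct, and the ``if'' direction is exactly the paper's argument (pull back $\{\tau\le t\}\in\widehat\f_t$ through $\bar G$ using the computation from Proposition \ref{mypro1}). For the ``only if'' direction you follow the same overall strategy --- factorize a $\G$-measurable object through $\bar G$ and then recover the stopping time as a debut --- but you implement it differently. The paper applies Dynkin's $\pi$--$\lambda$ theorem once, to the predictable $\sigma$-algebra $\mathcal{P}(\G)$: since $[0,\tau']$ is $\G$-predictable, there is a single $\widehat\F$-predictable process $J$ with $1_{[0,\tau']}(t,\omega)=J_t(\omega,G(\omega))$ a.s., and $\tau(\omega,u):=\inf\{t>0: J_t(\omega,u)=0\}$ is the desired $\widehat\F$-stopping time by the D\'ebut theorem. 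Because the factorization is done at the level of the whole stochastic interval, all times are handled simultaneously and the coherence problem you single out as the main obstacle --- choosing the sets $\widehat A_t$ consistently across countably many rationals before regularizing --- simply does not arise. Your route, factorizing each event $\{\tau'\le t\}$ separately and then forcing monotonicity and right-continuity by $\widehat A_t\mapsto\bigcap_{q>t}\bigcup_{r\le q}\widehat A_r$, does close the gap (countably many null sets, right-continuity of $\widehat\F$, and then $\{\tau\le t\}=\widehat A_t$ directly), so it is a valid alternative; it is just more bookkeeping where the paper's predictable-$\sigma$-algebra argument is one clean monotone-class step. Two minor remarks: the per-time factorization of $\g_t$-measurable sets as $\bar G^{-1}$ of $\widehat\f_t$-sets is a pure $\pi$--$\lambda$ fact about $\f_s\vee\sigma(G)$ versus $\f_s\otimes\B(\R)$ and needs neither Jacod's hypothesis nor measurable selection, so you can drop that invocation; and your regularized family is increasing and right-continuous by construction, so the debut is an honest stopping time without appealing to the D\'ebut theorem for progressive sets (the paper does need it, since it only knows $J$ is predictable).
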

\begin{proof}
Suppose first that $\tau$ is an $\widehat{\F}-$stopping time. For $t\in[0,T]$ we have to show that $$\{\tau(\omega,G(\omega))\leq t\}\in\g_{t}.$$
We have with $\bar{G}$ as defined in the previous proposition
\begin{equation}
\begin{aligned}
 \{\tau(\omega,G(\omega))\leq t\}&=&(\tau\circ{\bar G})^{-1}(-\infty,t]\\
 &=&\bar{G}^{-1}(\tau^{-1}(-\infty,t])\in \mathcal{G}_t,
\end{aligned}
\end{equation}
where the last equality follows from the proof of this proposition.

Now to prove the inverse claim, we first show that for every $\G-$predictable set $H$ there exists an $\widehat{\F}-$predictable process $\{J_{t}(\omega,u)\}_{t\in[0,T]}$
 which is measurable in $(t,\omega,u)$ such that $$1_{H}(s,\omega)=J_{s}(\omega,G(\omega)),\ \p-a.s,\ s\in[0,T].$$
 We have $$\g_{t}=\f_{t}\vee\sigma(G)=\sigma(\{F\cap G^{-1}(B)\ : \  F\in\f_{t}, B\in\B(\R)\}),\quad t\in[0,T].$$
From the definition of a predictable $\sigma-$algebra, we get
$$\mathcal{P}(\G)=\sigma(\{(t,\infty)\times(F\cap G^{-1}(B)):F\in\f_{t}, B\in\B(\R), t\in[0,T]\}\cup\{\{0\}\times(F_{0}\cap G^{-1}(B)):F_{0}\in\f_{0}, B\in\B(\R)\}).$$
We start with a set in the generator of $\mathcal{P}(\G).$ So let $t\in[0,T]$, $F\in \f_t$, $B\in \B(\R)$ and suppose that $H=(t,\infty)\times(F\cap G^{-1}(B)).$ Define $$J_{s}(\omega,u):=1_{(t,\infty)\times F\times B}(s,\omega,u),\quad s\in[0,T].$$
 Then $J_{s}(\omega,u)$ is $\widehat{\F}-$measurable and $\widehat{\F}-$predictable because $(t,\infty)\times F\times B$ is an $\widehat{\F}-$predictable set. Moreover, for $(s,\omega)\in[0,T]\times \Omega$ we have
 $$1_{H}(s,\omega)=1_{(t,\infty)\times F}(s,\omega)\cdot1_{(t,\infty)\times B}(s,G(\omega))=J_{s}(\omega,G(\omega)).$$ For
$H=\{0\}\times(F_{0}\cap G^{-1}(B))$ with $F_0\in \f_0, B\in \B(\R)$ we argue similarly. Now define
\begin{eqnarray*}
 \Lambda:=\{H\in\mathcal{P}(\G)\ |\ \exists J:\ \widehat{\F}-\rm{predictable}\ \rm{such}\ \rm{that}\ 1_{H}(t,\omega)=J_{t}(\omega,G(\omega)),\ \p-a.s,\rm{for}\,\, t\in[0,T].\}
\end{eqnarray*}
We know that the generator set of $\mathcal{P}(\G)$ is a subset of $\Lambda.$  Furthermore $\Lambda$ is a $\lambda-$system, so that according to Dynkin's $\pi-\lambda$ theorem, we have $\mathcal{P}(\G)\subseteq\Lambda.$
Now suppose that $\tau^{'}$ is a $\G-$stopping time. Then $[0,\tau^{'}]\in\mathcal{P}(\G)$. So by what has been shown, there exists an $\widehat{\F}-$predictable process $J$ which is measurable
in $(\omega,u)$ such that $1_{[0,\tau^{'}]}(t,\omega)=J_{t}(\omega,G(\omega)),\ \p-a.s,\ t\in[0,T].$ Now define $$\tau(\omega,u):=\inf\{t>0\ : \ J_{t}(\omega,u)=0\}.$$
 The process $J$ is $\widehat{\F}-$predictable so it is $\widehat{\F}-$progressively measurable. Hence by the D\a'{e}but theorem, $\tau$ is an $\widehat{\F}-$stopping time. Moreover, for $\p-a.e\ \omega\in\Omega$ we have $ \tau^{'}(\omega)=\tau(\omega,G(\omega)).$ This completes the proof.

\end{proof}

\begin{cor}
 Let $\tau:\widehat{\Omega} \rightarrow \R^{+}$ be an $\widehat{\F}-$stopping time. Then for every $u\in\R,\ \tau(u)=\tau(\cdot,u)$ is an $\F-$stopping time.
\label{cor1}\end{cor}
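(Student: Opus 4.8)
The plan is to verify directly that, for a fixed $u\in\R$, the random time $\tau(u)=\tau(\cdot,u)$ is an $\F$-stopping time, i.e.\ that $\{\tau(u)\le t\}\in\f_t$ for every $t\in[0,T]$. First I would fix $t$ and use that $\tau$ is an $\widehat{\F}$-stopping time, which by the definition of $\widehat{\F}$ in (\ref{product}) means
$\{\tau\le t\}\in\widehat{\f}_t=\bigcap_{s>t}\bigl(\f_s\otimes\B(\R)\bigr)$; in particular $\{\tau\le t\}\in\f_s\otimes\B(\R)$ for every $s>t$.

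The second step is the elementary measure-theoretic fact that the $u$-section of a set in a product $\sigma$-algebra $\mathcal{A}\otimes\B(\R)$ belongs to $\mathcal{A}$. This follows since the family of subsets of $\Omega\times\R$ whose $u$-section lies in $\f_s$ is a $\sigma$-algebra containing every measurable rectangle $A\times B$ with $A\in\f_s$, $B\in\B(\R)$ (its $u$-section is $A$ when $u\in B$ and $\emptyset$ otherwise), hence contains $\f_s\otimes\B(\R)$. Applying this with $A=\{\tau\le t\}$ gives $\{\tau(u)\le t\}=\{\omega\in\Omega:(\omega,u)\in\{\tau\le t\}\}\in\f_s$ for every $s>t$.

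Finally, since $\F$ satisfies the usual conditions, in particular right-continuity, we have $\f_t=\bigcap_{s>t}\f_s$, so the previous step yields $\{\tau(u)\le t\}\in\f_t$. As $t\in[0,T]$ was arbitrary, $\tau(u)$ is an $\F$-stopping time, which is the claim. The only point that needs a word of justification is the measurability of sections; the appeal to right-continuity of $\F$ to pass from $\bigcap_{s>t}\f_s$ to $\f_t$ is the other ingredient, and both are routine, so I do not expect any real obstacle here.
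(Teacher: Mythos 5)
Your proof is correct and follows essentially the same route as the paper: take the $u$-section of $\{\tau\le t\}\in\widehat{\f}_t=\bigcap_{s>t}(\f_s\otimes\B(\R))$, note that sections of sets in a product $\sigma$-algebra are measurable, and conclude via $\bigcap_{s>t}\f_s=\f_t$. You merely spell out the section-measurability step (the good-sets argument) more explicitly than the paper does, which is a welcome clarification rather than a departure.
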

\begin{proof}  Let $u_{0}\in\R$ and $t\in[0,T]$. Then
 $$\{\omega\ |\ \tau(\omega,u_{0})\leq t\}\times\{u_{0}\}=\{(\omega,u_{0})\  |\ \tau(\omega,u_{0})\leq t\} \in\widehat{\f}_{t}={\displaystyle\bigcap_{s>t}}(\f_{s}\otimes\B(\R)).$$
Hence $\{\omega\ |\ \tau(\omega,u_{0})\leq t\}\in{\displaystyle\bigcap_{s>t}}\f_{s}=\f_{t}$. Since $u_{0}$ is arbitary the proof is complete.
\end{proof}

\subsection{Value function in an initially enlarged filtration}
We recall a ''parametrized'' version of the conditional expectation.
\begin{lem}
 Let $(U,\mathcal{U})$ be a measurable space and $X:\Omega\times U\rightarrow\R$ be an $\f\otimes\mathcal{U}-$measurable random variable
 satisfying one of the conditions

 \quad

 (1) $X$ is positive,

 \quad

 (2) $\forall u\in U,\ \E[|X(.,u)|]<\infty.$

 \quad

 Then there exists a $\g\otimes\mathcal{U}-$measurable random variable $Y:\Omega\times U\rightarrow\R,$ such that for all $u\in U$
 $$ Y(.,u)=\E[X(.,u)|\g], \quad \p-a.s.$$
\label{yor}\end{lem}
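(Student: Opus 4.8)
The plan is to reduce the statement to the case of a \emph{positive bounded} $X$ and then run a functional monotone class argument, being careful that the exceptional null set in the defining a.s.\ identity is allowed to depend on $u$.

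First I would carry out the reductions. Under hypothesis (2), write $X=X^{+}-X^{-}$; both are nonnegative $\f\otimes\mathcal{U}$-measurable with $\E[X^{\pm}(\cdot,u)]<\infty$ for every $u$. If the lemma is known for nonnegative $X$, we get $\g\otimes\mathcal{U}$-measurable $Y^{\pm}$ with $Y^{\pm}(\cdot,u)=\E[X^{\pm}(\cdot,u)\,|\,\g]$ a.s., and then $Y:=Y^{+}-Y^{-}$ (set to $0$ on the $\g\otimes\mathcal{U}$-measurable set $\{Y^{+}=\infty\}\cup\{Y^{-}=\infty\}$) works, because for each fixed $u$ the integrability forces $Y^{\pm}(\cdot,u)$ to be a.s.\ finite. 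For nonnegative $X$, put $X_{n}:=X\wedge n$; given (bounded) versions $Y_{n}$ for $X_{n}$, monotone convergence for the (generalized) conditional expectation gives $\E[X_{n}(\cdot,u)\,|\,\g]\uparrow\E[X(\cdot,u)\,|\,\g]$ a.s.\ for each fixed $u$, so $Y:=\limsup_{n}Y_{n}$ is $\g\otimes\mathcal{U}$-measurable and satisfies $Y(\cdot,u)=\E[X(\cdot,u)\,|\,\g]$ a.s.\ for every $u$ (a countable union of $u$-dependent null sets is still null).

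It remains to treat bounded $X$. Let $\mathcal{H}$ be the set of bounded $\f\otimes\mathcal{U}$-measurable $X:\Omega\times U\to\R$ for which the conclusion holds with a version $Y$ obeying $\|Y\|_{\infty}\le\|X\|_{\infty}$. For a rectangle $A\times B$ with $A\in\f$, $B\in\mathcal{U}$, fix a version $Z$ of $\E[1_{A}\,|\,\g]$ with $0\le Z\le 1$ and set $Y(\omega,u):=1_{B}(u)Z(\omega)$; this is $\g\otimes\mathcal{U}$-measurable and $Y(\cdot,u)=1_{B}(u)Z=\E[1_{A\times B}(\cdot,u)\,|\,\g]$ a.s.\ for every $u$, so $1_{A\times B}\in\mathcal{H}$. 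The rectangles form a $\pi$-system generating $\f\otimes\mathcal{U}$; $\mathcal{H}$ contains the constants, is a vector space by linearity of conditional expectation (add the versions), and is stable under bounded nondecreasing limits by the $\limsup$ argument above, now using dominated convergence for conditional expectations and truncating each chosen version at $\pm\|X\|_{\infty}$ to keep the bound. The functional monotone class theorem then gives $\mathcal{H}\supseteq$ all bounded $\f\otimes\mathcal{U}$-measurable functions, completing the proof.

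The point to be careful about is exactly that $U$ may be uncountable: one cannot demand a single null set outside which $Y(\omega,u)=\E[X(\cdot,u)\,|\,\g](\omega)$ holds simultaneously for all $u$. The statement only asks for this for each fixed $u$, and the $\limsup$ (or truncation) constructions respect that while preserving joint $\g\otimes\mathcal{U}$-measurability; this is the only genuine subtlety, the rest being the routine verification of the monotone class hypotheses.
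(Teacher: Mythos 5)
Your proof is correct. The paper itself offers no argument for this lemma --- it simply points to Stricker and Yor, \emph{Calcul stochastique d\'ependant d'un param\`etre}, p.~115 --- so there is no internal proof to compare against; your functional monotone class argument is the standard route and is essentially the one underlying the cited result. The two reductions (integrable $\Rightarrow$ positive via $X^{+}-X^{-}$, positive $\Rightarrow$ bounded via $X\wedge n$ and a $\limsup$ of the chosen versions), the rectangle case $Y(\omega,u)=1_{B}(u)\,\E[1_{A}\mid\g](\omega)$, and the closure properties needed for the monotone class theorem are all in order, and you correctly isolate the one genuine subtlety: the exceptional null set is allowed to depend on $u$, so only countably many a.s.\ identities (indexed by $n$, for fixed $u$) are ever combined before passing to the limit, while joint $\g\otimes\mathcal{U}$-measurability is preserved by taking $\limsup$ of jointly measurable versions. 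One cosmetic point: the side condition $\|Y\|_{\infty}\le\|X\|_{\infty}$ you build into $\mathcal{H}$ is not automatically preserved when you ``add the versions'' in the vector-space step, since $\|Y_{1}+Y_{2}\|_{\infty}$ is only controlled by $\|X_{1}\|_{\infty}+\|X_{2}\|_{\infty}$, which may exceed $\|X_{1}+X_{2}\|_{\infty}$. This is harmless: either truncate there as well (legitimate, because $|\E[X(\cdot,u)\mid\g]|\le\|X\|_{\infty}$ a.s.\ for each fixed $u$, so the truncated process is still a version), or drop the norm constraint from the definition of $\mathcal{H}$ altogether, since it plays no essential role once you take $\limsup$ and truncate at the very end.
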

\begin{proof}
 See \citep{yo}, p. 115.
\end{proof}

\begin{rem}
  We denote a random variable $X:\widehat{\Omega}\rightarrow\R,$ by $X(.)$ to emphasize its dependence on a parameter. Obviously we mean $X(u)=X(\omega,u), \omega\in\Omega.$
 \end{rem}

For our next steps we need to introduce the following notation. Recall the payoff process $R$, and set
 \begin{equation}R:{\widehat\Omega}\times\R^{+}\rightarrow\R, (u,t)\mapsto L_{t}\alpha_{t}(u)1_{[0,T[}(t)+\xi\alpha_{T}(u)1_{\{T\}}(t).
\label{newpay} \end{equation}
We denote this new payoff function on the product space with $R$ again. Note that, opposed to the first one, it now acts on two variables.

\begin{rem}
Note that for an $\widehat{\F}-$stopping time $\tau:{\widehat\Omega}\rightarrow\R^{+}$, $R(.,\tau(.)):\widehat{\Omega}\rightarrow\R$ is a positive $\widehat{\f}-$measurable random variable. Since it is a payoff function, Lemma \ref{yor} guarantees the existence of an $\f_{t}\otimes\B(\R)-$measurable version of
 $\E[R(u,\tau(u))|\f_{t}]$ for $u\in\R, t\in[0,T].$
\label{Yor}\end{rem}






\begin{pro}
 Let $t\in[0,T]$. Then the following equation holds
 $$\E{\left[R(u,\tau(u))|\f_{t}\right]}_{u=G}=\widehat{\E}{\left[R(.,\tau(.))|\widehat{\f}_{t}\right]}_{G},\quad \p-a.s.$$

\label{my proposition}\end{pro}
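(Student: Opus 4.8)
The plan is to unwind both sides into honest $\p$-a.s.\ identities and match them term by term, using the product structure $\widehat{\p} = \p\otimes P^{G}$ together with the Fubini-type factorization of conditional expectations on a product space. First I would fix $t\in[0,T]$ and an $\widehat{\F}$-stopping time $\tau$, and record that by Corollary \ref{cor1} each $\tau(u)=\tau(\cdot,u)$ is an $\F$-stopping time, so $R(u,\tau(u))$ is a genuine positive $\f_T\otimes\B(\R)$-measurable random variable in $(\omega,u)$, and by Remark \ref{Yor} the map $(\omega,u)\mapsto\E[R(u,\tau(u))\mid\f_t](\omega)$ admits an $\f_t\otimes\B(\R)$-measurable version $\Phi_t(\omega,u)$. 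The left-hand side of the claimed identity is then, by definition of the substitution notation, $\Phi_t(\omega,G(\omega))$.

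Next I would identify the right-hand side. On the product space $(\widehat{\Omega},\widehat{\f},\widehat{\p})$ with $\widehat{\p}=\p\otimes P^{G}$, the conditional expectation $\widehat{\E}[R(\cdot,\tau(\cdot))\mid\widehat{\f}_t]$ is computed fiberwise: since $\widehat{\f}_t=\bigcap_{s>t}(\f_s\otimes\B(\R))$ and the second coordinate of $\widehat{\Omega}$ is ''frozen'' under $\widehat{\p}=\p\otimes P^{G}$ (the $\R$-coordinate variable is $\widehat{\f}_0$-measurable, hence $\widehat{\f}_t$-measurable), a standard result on conditioning in product spaces gives that a version of $\widehat{\E}[R(\cdot,\tau(\cdot))\mid\widehat{\f}_t]$ at the point $(\omega,u)$ is exactly $\E[R(u,\tau(u))\mid\f_t](\omega) = \Phi_t(\omega,u)$, for $P^{G}$-a.e.\ $u$ and $\p$-a.e.\ $\omega$. (Here one uses that $R\ge 0$ so no integrability issue arises, and the right-continuity in the definition of $\widehat{\f}_t$ matches the right-continuity of $\F$, as already exploited in Corollary \ref{cor1}.) Evaluating this version at the second coordinate $u=G$ — which is the meaning of the subscript $G$ on the right-hand side — yields $\Phi_t(\omega,G(\omega))$, the same expression as the left-hand side.

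The one point that needs care, and which I expect to be the main obstacle, is the passage from ''for $P^{G}$-a.e.\ $u$, the identity holds $\p$-a.s.'' to ''$\p$-a.s., the identity holds at $u=G(\omega)$''. The exceptional $\p$-null set a priori depends on $u$, so one cannot directly substitute a random $u$. This is resolved precisely by Lemma \ref{yor} (the parametrized conditional expectation): it furnishes a \emph{jointly} $\f_t\otimes\B(\R)$-measurable version $\Phi_t$ such that $\Phi_t(\cdot,u)=\E[R(u,\tau(u))\mid\f_t]$ $\p$-a.s.\ simultaneously in a measurable way, and likewise the conditional expectation on the product space can be taken in a jointly measurable version. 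Two jointly measurable versions of the same conditional expectation agree $\widehat{\p}=\p\otimes P^{G}$-a.s., i.e.\ on a measurable set $N\subset\widehat{\Omega}$ with $(\p\otimes P^{G})(N)=0$; since $\p[\,\omega:(\omega,G(\omega))\in N\,]=\int \p[\,\omega:(\omega,u)\in N\,]\,P^{G}(du)=0$ by Assumption \ref{jacod_hypothesis} (indeed the law of $(\omega,G(\omega))$ under $\p$ is absolutely continuous with respect to $\widehat{\p}$, with density $\alpha_T(G)$, cf.\ Lemma \ref{zargari}), the substitution $u=G$ is legitimate and the two sides coincide $\p$-a.s. Assembling these three steps — identify the left side as $\Phi_t(\cdot,G)$, identify the right side fiberwise as $\Phi_t(\cdot,G)$, and justify the substitution via joint measurability plus Jacod's hypothesis — completes the proof.
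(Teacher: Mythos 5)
Your proof is correct in substance, but it follows a genuinely different route from the paper's. The paper argues by duality: it tests both sides against $K(G)$ for an arbitrary bounded $\f_{t}\otimes\B(\R)$-measurable $K$, uses the conditional law $P^{G}_{t}(\omega,du)=\alpha_{t}(u)\,dP^{G}(u)$ to rewrite each pairing as $\E\bigl[\int_{\R}R(u,\tau(u))K(u)\alpha_{t}(u)\,dP^{G}(u)\bigr]$, and concludes by a monotone class argument, so the density $\alpha_{t}$ enters the computation itself. You instead identify both sides pointwise with a single jointly measurable kernel $\Phi_{t}(\omega,u)$: the left side by the Stricker--Yor lemma (Lemma \ref{yor}), the right side by the fiberwise (Fubini-type) characterization of conditioning on $\f_{t}\otimes\B(\R)$ under the product measure $\widehat{\p}=\p\otimes P^{G}$, and then you justify the substitution $u=G$ by absolute continuity of the law of $(\omega,G(\omega))$ with respect to $\widehat{\p}$. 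This is a valid and arguably more transparent argument --- it isolates exactly where Jacod's hypothesis is needed (only for the substitution step, and only absolute continuity, not equivalence, is used). Two caveats: first, your displayed identity $\p[(\omega,G(\omega))\in N]=\int\p[(\omega,u)\in N]\,P^{G}(du)$ is false in general (it would require $G$ independent of $\f$); the correct computation is $\p[(\omega,G(\omega))\in N]=\widehat{\E}[1_{N}\,\alpha_{T}]=0$ whenever $\widehat{\p}(N)=0$, which is exactly what your parenthetical remark supplies, so the conclusion stands but that line should be rewritten. Second, both you and the paper pass silently between $\f_{t}\otimes\B(\R)$ and $\widehat{\f}_{t}=\bigcap_{s>t}(\f_{s}\otimes\B(\R))$; a word on why conditioning on these two $\sigma$-fields gives the same result (e.g.\ backward martingale convergence together with right-continuity of $\F$) would be needed for full rigor, but this gap is shared with the original.
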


\begin{proof}
  We will show that for every bounded $\f_{t}\otimes\B(\R)-$measurable random variable $K:\Omega\times\R\rightarrow\R$ we have

 \begin{equation}
  \E{\left[\E{\left[R(u,\tau(u))|\f_{t}\right]}_{u=G}K(G)\right]}=\E{\left[\widehat{\E}{\left[R(.,\tau(.))|\widehat{\f}_{t}\right]}_{G}K(G)\right]}.
\label{equn}\end{equation}

 Since both $\E{\left[R(u,\tau(u))|\f_{t}\right]}_{u=G}$ and $\widehat{\E}{\left[R(.,\tau(.))|\widehat{\f}_{t}\right]}_{G}$ are $\g_{t}-$measurable random
 variables, the assertion then follows from (\ref{equn}) and monotone class arguments.\\
 To show (\ref{equn}), note that $K(.)$ and $\alpha_{t}(.)$ are $\f_{t}\otimes\B(\R)-$measurable,
 hence $K(u)$ and $\alpha_{t}(u)$ are $\f_{t}-$measurable for $u\in\R.$ We obtain
 \begin{eqnarray*}
  \E{\left[\E{\left[R(u,\tau(u))|\f_{t}\right]}_{u=G}K(G)\right]}&=&\E{\left[\E{\left[\E{\left[R(u,\tau(u))|\f_{t}\right]_{u=G}K(G)}|\f_{t}\right]}\right]}\\
  &=&\E{\left[\int_{\R}{\E{\left[R(u,\tau(u))|\f_{t}\right]}K(u)\alpha_{t}(u)dP^{G}(u)}\right]}\\
  &=&\E{\left[\int_{\R} R(u,\tau(u))K(u)\alpha_{t}(u)dP^{G}(u)\right]}.
 \end{eqnarray*}
On the other hand,
\begin{eqnarray*}
  \E{\left[\widehat{\E}{\left[R(.,\tau(.))|\widehat{\f}_{t}\right]}_{G}K(G)\right]}&=&\E{\left[\E{\left[\widehat{\E}{\left[R(.,\tau(.))|\widehat{\f}_{t}\right]}_{G}K(G)|\f_{t}\right]}\right]}\\
  &=&\E{\left[\int_{\R}{\widehat{\E}{\left[R(.,\tau(.))|\widehat{\f}_{t}\right]}_{u}K(u)\alpha_{t}(u)dP^{G}(u)}\right]}\\
  &=&\widehat{\E}{\left[\widehat{\E}{\left[R(.,\tau(.))K(.)\alpha_{t}(.)|\widehat{\f}_{t}\right]}\right]}\\
  &=&\E{\left[\int_{\R}{R(u,\tau(u))}K(u)\alpha_{t}(u)dP^{G}(u)\right]}.
  \end{eqnarray*}
The last two equations are satisfied by the definition of $\widehat{\E}$ in {\eqref{rrem}}.
 \end{proof}

\begin{rem}
 Let $t\in[0,T], u\in\R$ and $G=u$ be constant $\p-$a.s. Then from Remark \ref{Yor} we have
 $$\E{\left[R(u,\tau(u))|\f_{t}\right]}=\widehat{\E}{\left[R(.,\tau(.))|\widehat{\f}_{t}\right]}_{u},\quad \widehat{\p}-a.s.$$
\label{remark}\end{rem}

The following result gives a useful clue to calculate conditional expectations with respect to the larger filtration.
\begin{lem}
 Suppose that $X:{\widehat\Omega}\times\R^{+}\rightarrow\R$ is a process, $t\in[0,T]$ and $G:\Omega\rightarrow\R$ a random variable such that $X_{t}(G)$ is $\g_{t}-$measurable and $\p-$integrable. Then for $s\leq t$
 $$\E{\left[X_{t}(G)|\g_{s}\right]}=\frac{1}{\alpha_{s}(G)}\E{\left[X_{t}(u)\alpha_{t}(u)|\f_{s}\right]}_{u=G}.$$
\label{zargari}\end{lem}
\begin{proof}
 See \citep{Zargar}, p. 5.
\end{proof}

\begin{thm}
 Let $t\in[0,T]$. Under Assumption \ref{jacod_hypothesis} on $G$ and the integrability condition (\ref{integasssump}) on $R$ we have for $t\in[0,T]$
 $$V^{G}_{t}:= \esssup_{\tau^{'}\in\T_{t,T}{(\G)}}\E{\left[R(\tau^{'})|\g_{t}\right]}=
 {\frac{1}{\alpha_{t}(G)}{\left(\esssup_{\tau(.)\in\T_{t,T}{(\widehat{\F})}}\widehat{\E}\left[R(.,\tau(.))|\widehat{\f}_{t}\right]\right)}_{G}}.$$
\label{thm1}\end{thm}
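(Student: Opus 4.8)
The plan is to carry the optimal stopping problem for $\G$ on $\Omega$ over to an ordinary optimal stopping problem for $\widehat{\F}$ on $\widehat{\Omega}$, using the parametrization of $\G$--stopping times from Proposition \ref{my lemma}, and to show that the two value functions are linked by the change of variable $u=G$ and the density factor $\alpha_{t}$. For the \emph{correspondence of controls}: by Proposition \ref{my lemma} every $\tau'\in\T_{t,T}(\G)$ equals $\tau(\cdot,G)$ $\p$--a.s.\ for some $\widehat{\F}$--stopping time $\tau$, and since $\tau'$ takes values in $[t,T]$, replacing $\tau$ by $(\tau\vee t)\wedge T$ leaves $\tau(\cdot,G)$ unchanged and yields $\tau\in\T_{t,T}(\widehat{\F})$; conversely, if $\tau\in\T_{t,T}(\widehat{\F})$ then $\tau(\cdot,G)\in\T_{t,T}(\G)$ by Proposition \ref{my lemma} and Corollary \ref{cor1}. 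Hence the families $\{\E[R(\tau')\,|\,\g_{t}]:\tau'\in\T_{t,T}(\G)\}$ and $\{\E[R(\tau(\cdot,G))\,|\,\g_{t}]:\tau\in\T_{t,T}(\widehat{\F})\}$ coincide up to $\p$--null sets, so they have the same essential supremum.

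The heart of the proof is the identity: for $\tau\in\T_{t,T}(\widehat{\F})$ and $\tau':=\tau(\cdot,G)$,
\begin{equation}
\alpha_{t}(G)\,\E[R(\tau')\,|\,\g_{t}]=\widehat{\E}[R(.,\tau(.))\,|\,\widehat{\f}_{t}]_{G},\qquad\p\text{--a.s.},
\label{keyid}
\end{equation}
where $R(\tau')$ refers to the payoff process of Definition \ref{payoff} evaluated at $\tau'$ and $R(.,\tau(.))$ to the process (\ref{newpay}); note that by (\ref{newpay}) one has $R(u,\tau(\omega,u))(\omega)=R_{\tau(\omega,u)}(\omega)\,\alpha_{\tau(\omega,u)}(u)(\omega)$. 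To prove (\ref{keyid}) I apply Lemma \ref{zargari} to the $\widehat{\F}$--adapted process $X_{s}:=R_{\tau\wedge s}$, for which $X_{T}(G)=R_{\tau'}$ is $\g_{T}$--measurable and $\p$--integrable by (\ref{integasssump}); this gives $\E[R(\tau')\,|\,\g_{t}]=\frac{1}{\alpha_{t}(G)}\E[R_{\tau(u)}\,\alpha_{T}(u)\,|\,\f_{t}]_{u=G}$. For each fixed $u$, $\tau(u)$ is an $\F$--stopping time with values in $[t,T]$ by Corollary \ref{cor1} and $(\alpha_{s}(u))_{s\in[0,T]}$ is an $\F$--martingale, so optional sampling and the tower property yield $\E[R_{\tau(u)}\alpha_{T}(u)\,|\,\f_{t}]=\E[R_{\tau(u)}\,\E[\alpha_{T}(u)\,|\,\f_{\tau(u)}]\,|\,\f_{t}]=\E[R_{\tau(u)}\,\alpha_{\tau(u)}(u)\,|\,\f_{t}]=\E[R(u,\tau(u))\,|\,\f_{t}]$ (the parametrized conditional expectations being well defined as in Remark \ref{Yor}). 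Since by Proposition \ref{my proposition} this last quantity, evaluated at $u=G$, equals $\widehat{\E}[R(.,\tau(.))\,|\,\widehat{\f}_{t}]_{G}$, the identity (\ref{keyid}) follows.

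Combining (\ref{keyid}) with the correspondence of controls and pulling the strictly positive $\g_{t}$--measurable factor $1/\alpha_{t}(G)$ out of the essential supremum, $V^{G}_{t}=\frac{1}{\alpha_{t}(G)}\,\esssup_{\tau\in\T_{t,T}(\widehat{\F})}\widehat{\E}[R(.,\tau(.))\,|\,\widehat{\f}_{t}]_{G}$, and it remains to exchange the evaluation at $u=G$ with the essential supremum over $\widehat{\F}$--stopping times. Set $W:=\esssup_{\tau\in\T_{t,T}(\widehat{\F})}\widehat{\E}[R(.,\tau(.))\,|\,\widehat{\f}_{t}]$. One inequality is clear, since $\widehat{\E}[R(.,\tau(.))\,|\,\widehat{\f}_{t}]\le W$ on $\widehat{\Omega}$ for each $\tau$. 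For the reverse, the family defining $W$ is upward directed (given $\tau_{1},\tau_{2}$ and $A:=\{\widehat{\E}[R(.,\tau_{1}(.))\,|\,\widehat{\f}_{t}]\ge\widehat{\E}[R(.,\tau_{2}(.))\,|\,\widehat{\f}_{t}]\}\in\widehat{\f}_{t}$, the stopping time $\tau_{1}1_{A}+\tau_{2}1_{A^{c}}$ attains the pointwise maximum), so there are $\tau_{n}\in\T_{t,T}(\widehat{\F})$ with $\widehat{\E}[R(.,\tau_{n}(.))\,|\,\widehat{\f}_{t}]\uparrow W$ $\widehat{\p}$--a.s. Under Assumption \ref{jacod_hypothesis} the image of $\p$ under $\omega\mapsto(\omega,G(\omega))$ is, on $\widehat{\f}_{t}$, absolutely continuous with respect to $\widehat{\p}=\p\otimes P^{G}$ with density $\alpha_{t}$ (this is just Jacod's hypothesis rewritten on rectangles $F\times B$, $F\in\f_{t}$, $B\in\B(\R)$, by Proposition \ref{mypro1}'s conditioning argument); hence the monotone convergence, and each of the inequalities $\widehat{\E}[R(.,\tau(.))\,|\,\widehat{\f}_{t}]\le W$, are preserved after composition with this map, so $\widehat{\E}[R(.,\tau_{n}(.))\,|\,\widehat{\f}_{t}]_{G}\uparrow W_{G}$ $\p$--a.s. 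By (\ref{keyid}) each such term equals $\alpha_{t}(G)\,\E[R(\tau_{n}(\cdot,G))\,|\,\g_{t}]\le\alpha_{t}(G)\,V^{G}_{t}$, whence $W_{G}\le\alpha_{t}(G)\,V^{G}_{t}$; together with the reverse bound obtained by taking the essential supremum over $\tau'$ in $\E[R(\tau')\,|\,\g_{t}]=\frac{1}{\alpha_{t}(G)}\widehat{\E}[R(.,\tau(.))\,|\,\widehat{\f}_{t}]_{G}\le\frac{1}{\alpha_{t}(G)}W_{G}$, this gives the asserted formula.

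The step I expect to be the main obstacle is the key identity (\ref{keyid}) together with the interchange of the change of variable $u=G$ and the essential supremum: the former hinges on exploiting the $\F$--martingale property of $\alpha(u)$ to absorb the random stopping time into the time-$T$ density, so that the fixed-time Lemma \ref{zargari} is enough, while the latter requires precisely the absolute continuity on $\widehat{\f}_{t}$ of the law of $(\cdot,G)$ with respect to $\widehat{\p}$ — which is exactly where Jacod's density hypothesis is used — in order to push $\widehat{\p}$--almost sure statements (both the directed-limit convergence and the defining inequalities of $W$) through the substitution $u=G$.
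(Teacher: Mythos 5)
Your proposal is correct and follows essentially the same route as the paper: Proposition \ref{my lemma} to parametrize the $\G$--stopping times, Lemma \ref{zargari} plus the $\F$--martingale property of $\alpha(u)$ and optional sampling to produce the factor $\alpha_{\tau(u)}(u)$, Proposition \ref{my proposition} to pass to $\widehat{\E}$, and finally the interchange of the essential supremum with evaluation at $u=G$. Your justification of that last interchange (upward directedness giving a monotone sequence, then transferring $\widehat{\p}$--a.s.\ statements through $\omega\mapsto(\omega,G(\omega))$ via the absolute continuity of its law on $\widehat{\f}_{t}$ with density $\alpha_{t}$) is in fact more explicit than the paper's appeal to measurability of the family.
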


\begin{proof}
 Let $\tau^{'}\in\T_{t,T}(\G)$. From Proposition \ref{my lemma} and Lemma \ref{zargari}, we have

 $$\E{\left[R(\tau^{'})|\g_{t}\right]}=\E{\left[R(\tau(G))|\g_{t}\right]}=\frac{1}{\alpha_{t}(G)}\E{\left[R(\tau(u))\alpha_{T}(u)|\f_{t}\right]}_{u=G},$$
 where $\tau(.)\in\T_{t,T}{(\widehat{\F})}$.

\quad

 From Corollary \ref{cor1} for $u\in\R,\ \tau(u)$ is an $\F-$stopping time. So by using iterated conditional expectations
 and the martingale property of $(\alpha_{t}(u))_{t\in[0,T]}$ w.r.t $\F$, we get

 \begin{eqnarray*}
  {\E{\left[R(\tau(u))\alpha_{T}(u)|\f_{t}\right]}}&=&{\E{\left[\E{\left[(L_{\tau(u)}1_{[0,T[}(\tau(u))+\xi 1_{\{T\}}(\tau(u)))\alpha_{T}(u)|\f_{\tau(u)}\right]}|\f_{t}\right]}}\\
    &=& {\E{\left[L_{\tau(u)}\alpha_{\tau(u)}(u)1_{[0,T[}(\tau(u)) +\xi 1_{\{T\}}(\tau(u))\alpha_{T}(u)|\f_{t}\right]}}\\
    &=&{\E{\left[R(u,\tau(u))|\f_{t}\right]}}.
 \end{eqnarray*}
 Thus we have
 \begin{eqnarray*}
  {\esssup_{\tau^{'}\in\T_{t,T}{(\G)}}\E{\left[R(\tau^{'})|\g_{t}\right]}}&=&{\esssup_{\tau(.)\in\T_{t,T}{(\widehat{\F})}}\E{\left[R(\tau(G))|\g_{t}\right]}}\\
  &=&{\esssup_{\tau(.)\in\T_{t,T}{(\widehat{\F})}}\frac{1}{\alpha_{t}(G)}\E{\left[R(\tau(u))\alpha_{T}(u)|\f_{t}\right]}_{u=G}}\\
 &=&{\frac{1}{\alpha_{t}(G)}\esssup_{\tau(.)\in\T_{t,T}{(\widehat{\F})}}\left(\E{\left[R(u,\tau(u))|\f_{t}\right]}\right)_{u=G}}\\
 &=&{\frac{1}{\alpha_{t}(G)}\esssup_{\tau(.)\in\T_{t,T}{(\widehat{\F})}}{\left(\widehat{\E}\left[R(.,\tau(.))|\widehat{\f}_{t}\right]\right)}_{G}}.\\
 \end{eqnarray*}





The last equality comes from  Proposition \ref{my proposition}. Moreover,
 $\widehat{\E}{[R(.,\tau(.))|{\widehat{\f}}_{t}]}$ is measurable in $(\omega,u)$, and the essential supremum of
 a measurable family ${\{\widehat{\E}{[R(.,\tau(.))|{\widehat{\f}}_{t}]};\ {\tau(.)\in\T_{t,T}{(\widehat{\F})}}\}}$ is again measurable in $(\omega,u).$ Therefore, we have

 \begin{eqnarray*}
  {\esssup_{\tau(.)\in\T_{t,T}{(\widehat{\F})}}\widehat{\E}{\left[R(.,\tau(.))|\widehat{\f}_{t}\right]}_{u}}
  ={\left(\esssup_{\tau(.)\in\T_{t,T}{(\widehat{\F})}}\widehat{\E}{\left[R(.,\tau(.))|\widehat{\f}_{t}\right]}\right)_{u}},\quad \p-a.s
\end{eqnarray*}
 and this still holds $\p-$a.s. if we replace $u$ by $G(\cdot)$.

All in all, we obtain as claimed
 \begin{eqnarray*}
  {\esssup_{\tau^{'}\in\T_{t,T}{(\G)}}\E{\left[R(\tau^{'})|\g_{t}\right]}}&=&{\frac{1}{\alpha_{t}(G)}\esssup_{\tau(.)\in\T_{t,T}{(\widehat{\F})}}{\left(\widehat{\E}\left[R(.,\tau(.))|\widehat{\f}_{t}\right]\right)}_{G}}\\
  &=&{\frac{1}{\alpha_{t}(G)}{\left(\esssup_{\tau(.)\in\T_{t,T}{(\widehat{\F})}}\widehat{\E}\left[R(.,\tau(.))|\widehat{\f}_{t}\right]\right)}_{G}}.
 \end{eqnarray*}

\end{proof}

From Neveu \citep{Nvu}, it is known that the essential supremum of a family $\cal{A}$ of non negative random variables is a well defined
almost surely unique random variable. Moreover, if $\cal{A}$ is \emph{directed above}, i.e. $a\vee a^{'}\in\cal{A}$ for $a$ and $a^{'}\in\cal{A}$, then there exists a sequence $(a_{n})_{n\in\N}$ in $\cal{A}$ such that $a_{n}\uparrow(\esssup\cal{A})$ as $n\rightarrow\infty$. See Proposition (VI-1.1) in \citep{Nvu} for a complete proof.

\begin{pro}
 There exists a sequence of stopping times $(\tau_{n})_{n\in\mathbb{N}}$ with $\tau_{n}$ in $\T_{t,T}$ for $n\in\mathbb{N}$ such that the sequence
 $(\E[R(\tau_{n})|\f_{t}])_{n\in\mathbb{N}}$ is increasing and such that $$V_{t}=\lim_{n\rightarrow\infty}\uparrow\E[R(\tau_{n})|\f_{t}],\quad \p-a.s.$$
\label{prokomaki}\end{pro}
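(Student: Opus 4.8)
The plan is to verify that the family $\A := \{\,\E[R(\tau)\mid\f_t]\ :\ \tau\in\T_{t,T}(\F)\,\}$ is directed above, and then invoke the property of the essential supremum recalled from Neveu just before the statement. Since $V(t)=\esssup\A$ by definition~(\ref{def1}), the existence of an increasing sequence $(\E[R(\tau_n)\mid\f_t])_n$ converging $\p$-a.s. to $V(t)$ follows at once once directedness is established. So the whole proof reduces to the directedness check, which is the only nontrivial point.

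First I would take $\tau_1,\tau_2\in\T_{t,T}(\F)$ and form the set $A=\{\E[R(\tau_1)\mid\f_t]\ge\E[R(\tau_2)\mid\f_t]\}\in\f_t$, then define the "switching" stopping time $\tau:=\tau_1 1_A+\tau_2 1_{A^c}$. One checks that $\tau\in\T_{t,T}(\F)$: for $s\in[t,T]$,
\begin{equation*}
\{\tau\le s\}=\bigl(A\cap\{\tau_1\le s\}\bigr)\cup\bigl(A^c\cap\{\tau_2\le s\}\bigr)\in\f_s,
\end{equation*}
since $A\in\f_t\subseteq\f_s$ and $\tau_1,\tau_2$ are $\F$-stopping times. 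Next, because $A\in\f_t$, one has $1_A\E[R(\tau_i)\mid\f_t]=\E[1_A R(\tau_i)\mid\f_t]$, and since $R(\tau)=R(\tau_1)$ on $A$ and $R(\tau)=R(\tau_2)$ on $A^c$,
\begin{equation*}
\E[R(\tau)\mid\f_t]=1_A\,\E[R(\tau_1)\mid\f_t]+1_{A^c}\,\E[R(\tau_2)\mid\f_t]=\E[R(\tau_1)\mid\f_t]\vee\E[R(\tau_2)\mid\f_t],
\end{equation*}
which is exactly the statement that $\A$ is directed above. Here the integrability condition~(\ref{integasssump}) guarantees all the conditional expectations are well defined and the manipulations with $1_A$ are legitimate.

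Having established directedness, I would apply Proposition (VI-1.1) of Neveu \cite{Nvu}: there is a sequence $(a_n)$ in $\A$, say $a_n=\E[R(\tau_n)\mid\f_t]$ with $\tau_n\in\T_{t,T}(\F)$, such that $a_n\uparrow\esssup\A=V(t)$ $\p$-a.s. Strictly speaking Neveu's statement produces a sequence in $\A$; to also get that the sequence $(a_n)$ itself is nondecreasing one can replace $\tau_n$ recursively by the switched stopping time built from $\tau_1,\dots,\tau_n$ as in the previous paragraph, which keeps the representing random variable equal to $a_1\vee\cdots\vee a_n$ and hence nondecreasing, while the values still increase to $V(t)$. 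I expect the main (minor) obstacle to be this bookkeeping step of upgrading Neveu's sequence to a monotone one, together with the routine verification that the switching construction stays inside $\T_{t,T}(\F)$ and that $R(\tau)$ behaves as claimed under switching; nothing here is deep, but it must be done carefully since $R$ is the composite payoff $L1_{[0,T)}+\xi1_{\{T\}}$ evaluated at a stopping time rather than a fixed process.
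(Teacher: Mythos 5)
Your proof is correct and takes essentially the same route as the paper: the paper's own proof likewise reduces the proposition to checking that the family $\{\E[R(\tau)\mid\f_t]:\tau\in\T_{t,T}(\F)\}$ is directed above and then invokes Neveu's result, deferring the switching-stopping-time details to Kobylanski and Quenez \cite{Kob}. You have simply written out explicitly the lattice argument the paper cites, so there is nothing to correct.
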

\begin{proof}
It is sufficient to show that the set $\{\E[R(\tau)|\f_{t}];\ \tau\in\T_{t,T}\}$ is directed above. Then the result follows from
known results on the essential supremum by Neveu \citep{Nvu}. See Kobylanski and Quenez \citep{Kob} for details of the proof and a complete discussion for the general case where a deterministic time $t$ is replaced by a stopping time in $\T_{0,T}$.
\end{proof}

\begin{thm}
 Let $t\in[0,T]$. Under Assumption \ref{jacod_hypothesis} on $G$ and the integrability condition (\ref{integasssump}) on $R$, we have for $t\in[0,T]$
 $$\esssup_{\tau^{'}\in\T_{t,T}{(\G)}}\E{\left[R(\tau^{'})|\f_{t}\right]}=
 {\int_{\R}{(\esssup_{\tau(.)\in\T_{t,T}{(\widehat{\F})}}\widehat{\E}{\left[R(.,\tau(.))|{\widehat{\f}}_{t}\right]})_{u}}dP^{G}(u)}.$$
\label{thm2}\end{thm}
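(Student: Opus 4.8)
\noindent\emph{Sketch of proof.} The plan is to mimic the proof of Theorem~\ref{thm1} with conditioning on $\g_t$ replaced by conditioning on $\f_t$: the factor $1/\alpha_t(G)$ appearing there will then be cancelled by the conditional-law formula for $G$ given $\f_t$, at the price of an extra integration against $dP^G$, after which one is left with an interchange of an essential supremum and an integral. As a first step I would use Proposition~\ref{my lemma} (replacing $\tau$ by $t\vee(\tau\wedge T)$, which does not change $\tau(.,G)$) to see that, as $\tau'$ runs through $\T_{t,T}(\G)$ and $\tau(.)$ through $\T_{t,T}(\widehat{\F})$, the families $\{\E[R(\tau')|\f_t]\}$ and $\{\E[R(\tau(G))|\f_t]\}$ of random variables coincide, so that
\begin{equation*}
\esssup_{\tau'\in\T_{t,T}(\G)}\E[R(\tau')|\f_t]=\esssup_{\tau(.)\in\T_{t,T}(\widehat{\F})}\E[R(\tau(G))|\f_t].
\end{equation*}

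Next, fix $\tau(.)\in\T_{t,T}(\widehat{\F})$ and set $\tau'=\tau(.,G)$. Conditioning first on $\g_t\supseteq\f_t$, inserting the identity $\E[R(\tau')|\g_t]=\alpha_t(G)^{-1}\E[R(u,\tau(u))|\f_t]_{u=G}$ established in the proof of Theorem~\ref{thm1} (via Lemma~\ref{zargari}, Corollary~\ref{cor1} and the $\F$-martingale property of $\{\alpha_s(u)\}_s$, with $R(u,.)$ the $\alpha$-weighted payoff of (\ref{newpay})), and then using that $u\mapsto\alpha_t(u)^{-1}\E[R(u,\tau(u))|\f_t]$ is $\f_t$-measurable for each $u$ (Remark~\ref{Yor} and strict positivity of $\alpha_t$) together with the fact that $\alpha_t(.,u)\,dP^G(u)$ is a regular version of the conditional law of $G$ given $\f_t$, I would obtain
\begin{align*}
\E[R(\tau')|\f_t]&=\E\left[\E[R(\tau')|\g_t]|\f_t\right]=\E\left[\frac{1}{\alpha_t(G)}\E[R(u,\tau(u))|\f_t]_{u=G}|\f_t\right]\\
&=\int_\R\frac{1}{\alpha_t(u)}\E[R(u,\tau(u))|\f_t]\,\alpha_t(u)\,dP^G(u)=\int_\R\widehat{\E}[R(.,\tau(.))|\widehat{\f}_t]_u\,dP^G(u),
\end{align*}
the last equality by Proposition~\ref{my proposition}. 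Combined with the first step this gives
\begin{equation*}
\esssup_{\tau'\in\T_{t,T}(\G)}\E[R(\tau')|\f_t]=\esssup_{\tau(.)\in\T_{t,T}(\widehat{\F})}\int_\R\widehat{\E}[R(.,\tau(.))|\widehat{\f}_t]_u\,dP^G(u).
\end{equation*}

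It remains to move the essential supremum inside the integral. The bound ``$\le$'' is immediate, since for each fixed $\tau(.)$ one has $\widehat{\E}[R(.,\tau(.))|\widehat{\f}_t]\le\esssup_{\sigma(.)\in\T_{t,T}(\widehat{\F})}\widehat{\E}[R(.,\sigma(.))|\widehat{\f}_t]$ $\widehat{\p}$-a.s. For the reverse bound I would first check that $\{\widehat{\E}[R(.,\tau(.))|\widehat{\f}_t]:\tau(.)\in\T_{t,T}(\widehat{\F})\}$ is directed upwards: given $\tau_1(.),\tau_2(.)$, the set $A:=\{\widehat{\E}[R(.,\tau_1(.))|\widehat{\f}_t]\ge\widehat{\E}[R(.,\tau_2(.))|\widehat{\f}_t]\}$ lies in $\widehat{\f}_t$, so $\tau_3:=\tau_1 1_A+\tau_2 1_{\widehat{\Omega}\setminus A}$ belongs to $\T_{t,T}(\widehat{\F})$ and $\widehat{\E}[R(.,\tau_3(.))|\widehat{\f}_t]=\widehat{\E}[R(.,\tau_1(.))|\widehat{\f}_t]\vee\widehat{\E}[R(.,\tau_2(.))|\widehat{\f}_t]$. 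By Neveu's result \cite{Nvu} (recalled before Proposition~\ref{prokomaki}) there is then an increasing sequence $(\tau_n(.))_n$ in $\T_{t,T}(\widehat{\F})$ with $\widehat{\E}[R(.,\tau_n(.))|\widehat{\f}_t]\uparrow\esssup_{\sigma(.)}\widehat{\E}[R(.,\sigma(.))|\widehat{\f}_t]$ $\widehat{\p}$-a.s., equivalently $\p$-a.s.\ in $\omega$ for $P^G$-a.e.\ $u$; monotone convergence in the variable $u$ then gives, $\p$-a.s.,
\begin{equation*}
\int_\R\widehat{\E}[R(.,\tau_n(.))|\widehat{\f}_t]_u\,dP^G(u)\;\uparrow\;\int_\R\bigl(\esssup_{\sigma(.)}\widehat{\E}[R(.,\sigma(.))|\widehat{\f}_t]\bigr)_u\,dP^G(u),
\end{equation*}
and since each term on the left equals $\E[R(\tau_n(G))|\f_t]$ and is hence $\le\esssup_{\tau'\in\T_{t,T}(\G)}\E[R(\tau')|\f_t]$, the reverse bound follows. (As in the proof of Theorem~\ref{thm1}, the essential supremum of the measurable family $\{\widehat{\E}[R(.,\tau(.))|\widehat{\f}_t]\}$ is again measurable in $(\omega,u)$, so the right-hand side of the theorem is well defined.) I expect this last interchange to be the main obstacle --- not because it is deep, but because it requires combining the directedness of the family, Neveu's selection of an increasing sequence, and the careful bookkeeping between $\widehat{\p}$-a.s.\ statements and their ``$\p$-a.s.\ in $\omega$, $P^G$-a.e.\ in $u$'' reformulations; the rest is the computation already carried out for Theorem~\ref{thm1}, the one genuinely new point being that the $\alpha_t(G)$ produced by passing through $\g_t$ is annihilated by the conditional-law formula $\E[h(.,G)|\f_t]=\int_\R h(.,u)\alpha_t(u)\,dP^G(u)$.
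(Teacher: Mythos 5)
Your proposal is correct, and its overall architecture coincides with the paper's: factorize $\G$--stopping times through Proposition~\ref{my lemma}, establish the identity $\E[R(\tau(G))|\f_t]=\int_\R\widehat{\E}[R(.,\tau(.))|\widehat{\f}_t]_u\,dP^G(u)$, and then interchange the essential supremum with the integral via directedness, Neveu's increasing-sequence selection (Proposition~\ref{prokomaki}), and a convergence theorem. The one place where you genuinely diverge is the derivation of the central identity: the paper conditions on $\f_T$ first, applies Fubini against $\alpha_T(u)\,dP^G(u)$, and then uses iterated conditioning at $\f_{\tau(u)}$ together with the $\F$--martingale property of $\{\alpha_s(u)\}_s$ to pull $\alpha_T(u)$ back to $\alpha_{\tau(u)}(u)$ and recognize the weighted payoff $R(u,\tau(u))$ of (\ref{newpay}); you instead condition through $\g_t$, reuse the formula $\E[R(\tau')|\g_t]=\alpha_t(G)^{-1}\E[R(u,\tau(u))|\f_t]_{u=G}$ already established for Theorem~\ref{thm1}, and let the conditional law $\alpha_t(u)\,dP^G(u)$ of $G$ given $\f_t$ cancel the prefactor. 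Both computations are valid and rest on the same ingredients (Lemma~\ref{zargari}, the martingale property of $\alpha(u)$, Proposition~\ref{my proposition}); your route makes the relationship between Theorems~\ref{thm1} and~\ref{thm2} transparent (projecting the $\g_t$-value onto $\f_t$), while the paper's is self-contained and avoids invoking the intermediate $\g_t$-conditional identity. Two points where you are actually more careful than the paper: you note that the stopping time produced by Proposition~\ref{my lemma} must be truncated to $[t,T]$, and you spell out the directedness of the family $\{\widehat{\E}[R(.,\tau(.))|\widehat{\f}_t]\}$ on the product space rather than citing it; your use of monotone convergence (for an increasing sequence) is also the cleaner choice where the paper invokes dominated convergence.
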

\begin{proof}
 Let $\tau^{'}\in\T_{t,T}(\G)$. By Proposition \ref{my lemma} there exists an $\widehat{\F}-$stopping time $\tau(.)$ such that $\tau^{'}=\tau(G),\ \p-a.s.$ We therefore have
 $$\E{\left[R(\tau^{'})|\f_{t}\right]}=\E{\left[R(\tau(G))|\f_{t}\right]}.$$ By using the conditional law of $G$ given $\f_{t}$ we get
 \begin{eqnarray*}
  \E{\left[R(\tau(G))|\f_{t}\right]}&=&\E{\left[\E{\left[R(\tau(G))|\f_{T}\right]}|\f_{t}\right]}\\
  &=&{\E{\left[\int_{\R}{R(\tau(u))\alpha_{T}(u)dP^{G}(u)}|\f_{t}\right]}}\\
  &=&{\int_{\R}{\E{\left[R(\tau(u))\alpha_{T}(u)|\f_{t}\right]}}dP^{G}(u)}\\
  &=&{\int_{\R}{\E{\left[\E{\left[R(\tau(u))\alpha_{T}(u)|\f_{\tau(u)}\right]}|\f_{t}\right]}}dP^{G}(u)}\\
  &=&{\int_{\R}{\E{\left[L_{\tau(u)}\alpha_{\tau(u)}(u)1_{[0,T[}(\tau(u))
  +\xi\alpha_{T}(u)1_{\{T\}}(\tau(u))|\f_{t}\right]}}dP^{G}(u)}\\
  &=&{\int_{\R}{\E{\left[R(u,\tau(u))|\f_{t}\right]}}dP^{G}(u)}.
 \end{eqnarray*}
Here we use the martingale property of $(\alpha_{t}(u))_{t\in[0,T]}$ w.r.t $\F$.

 From Remark \ref{remark} we further deduce

 \begin{eqnarray*}
  {\esssup_{\tau^{'}\in\T_{t,T}{(\G)}}\E{\left[R(\tau^{'})|\f_{t}\right]}}&=&{ \esssup_{\tau(.)\in\T_{t,T}{(\widehat{\F})}}\E{\left[R(\tau(G))|\f_{t}\right]}}\\
    &=& {\esssup_{\tau(.)\in\T_{t,T}{(\widehat{\F})}}\int_{\R}{\E{\left[R(u,\tau(u))|\f_{t}\right]}}dP^{G}(u)}\\
    &=& {\esssup_{\tau(.)\in\T_{t,T}{(\widehat{\F})}}\int_{\R}{\widehat{\E}{\left[R(.,\tau(.))|\widehat{\f}_{t}\right]_{u}}}dP^{G}(u)}\\
    &=&{\int_{\R}{\esssup_{\tau(.)\in\T_{t,T}({\widehat{\F})}}\widehat\E{\left[R(.,\tau(.))|\widehat{\f}_{t}\right]}_{u}}dP^{G}(u)},\ \quad\p-a.s.
 \end{eqnarray*}
 To show the last equation we need to prove
 $$\int_{\R}{\esssup_{\tau(.)\in\T_{t,T}{(\widehat{\F})}}\widehat{\E}{\left[R(.,\tau(.))|\widehat{\f}_{t}\right]}_{u}}dP^{G}(u) \leq \esssup_{\tau(.)\in\T_{t,T}{(\widehat{\F})}}\int_{\R}{\widehat{\E}{\left[R(.,\tau(.))|\widehat{\f}_{t}\right]_{u}}}dP^{G}(u),\ \quad\p-a.s.,$$
 the reverse inequality being standard. The measurability of the family ${\{\widehat{\E}{[R(.,\tau(.))|{\widehat{\f}}_{t}]};\ {\tau(.)\in\T_{t,T}{(\widehat{\F})}}\}}$ in $(\omega,u)$
  implies $${\esssup_{\tau(.)\in\T_{t,T}{(\widehat{\F})}}\widehat{\E}{\left[R(.,\tau(.))|{\widehat{\f}}_{t}\right]}_{u}}={(\esssup_{\tau(.)\in\T_{t,T}{(\widehat{\F})}}\widehat{\E}{\left[R(.,\tau(.))|{\widehat{\f}}_{t}\right]})_{u}}.$$
  From Proposition \ref{prokomaki}, there exists $\tau_{n}(.)\in\T_{t,T}(\widehat{\F})$ such that $\widehat{\p}-a.e.$ we have
 $$ \esssup_{\tau(.)\in\T_{t,T}{(\widehat{\F})}}\widehat{\E}{\left[R(.,\tau(.))|
 \widehat{\f}_{t}\right]} = \lim_{n\to \infty}\widehat{\E}{\left[R(.,\tau_{n}(.))|
 \widehat{\f}_{t}\right]}.$$
Therefore by dominated convergence

 \begin{eqnarray*}
 {\int_{\R}{\esssup_{\tau(.)\in\T_{t,T}{(\widehat{\F})}}\widehat{\E}{\left[R(.,\tau(.))|\widehat{\f}_{t}\right]}_{u}}dP^{G}(u)}
  & = & {\lim_{n\to\infty} \int_{\R}{\widehat{\E}{\left[R(.,\tau_{n}(.))|\widehat{\f}_{t}\right]_{u}}}dP^{G}(u) }\\
    &= & {\esssup_{\tau(.)\in\T_{t,T}{(\widehat{\F})}}\int_{\R}{\widehat{\E}{\left[R(.,\tau(.))|\widehat{\f}_{t}\right]_{u}}}dP^{G}(u)},\quad {\p}-a.s.
 \end{eqnarray*}

This finally allows us to deduce
 \begin{eqnarray*}
 {\esssup_{\tau^{'}\in\T_{t,T}{(\G)}}\E{\left[R(\tau^{'})|\f_{t}\right]}}&=&
 {\int_{\R}{\esssup_{\tau(.)\in\T_{t,T}{(\widehat{\F})}}\widehat{\E}{\left[R(.,\tau(.))|{\widehat{\f}}_{t}\right]}_{u}}dP^{G}(u)}\\
  &=&{\int_{\R}{(\esssup_{\tau(.)\in\T_{t,T}{(\widehat{\F})}}\widehat{\E}{\left[R(.,\tau(.))|{\widehat{\f}}_{t}\right]})_{u}}dP^{G}(u)}.
 \end{eqnarray*}

\end{proof}
For both $\h_{t}=\f_{t}\ {\rm or}\ \g_{t}$, we could calculate  the optimal expected payoff (\ref{aim}) based on a value function of a new optimal stopping
problem in the product space. Since optimal stopping problems and reflected BSDE are known to be connected via the Snell envelope, it seems natural to look for the corresponding RBSDE in the product space. This will lead us to consider \emph{parametrized} RBSDE, where the parameter is given by the possible values of a random variable $G$ initially enlarging an underlying filtration. It will be of independent interest to investigate such parametrized RBSDE. This is the goal of the following section. Since the martingale representation property plays an important role for RBSDE, we need to suppose that the reference filtration $\F$ is the natural filtration of a Brownian motion.

\section{RBSDE in an initially enlarged filtration}

\subsection{Basic notions}
Reflected BSDE (RBSDE) were studied by El Karoui et al. \citep{Nichole} on a Brownian basis. Solution processes of such equations
are constrained to keep above a given process called obstacle or barrier. Our work generalizes \citep{Nichole} to the setting of parametrized RBSDE where the reference filtration is the natural filtration of a Brownian motion.

Let $B=(B_{t})_{0\leq t\leq T}$ be a one-dimensional Brownian motion defined on a probability space $(\Omega,\f,\p)$ and  $\F=(\f_{t})_{0\leq t\leq T}$ be the natural filtration of $B$, which satisfies the usual conditions of completion and right continuity.
Denote
\begin{eqnarray*}
\mathcal{L}^2&=&\{X: X\,\,\f_{T}-\mbox{measurable random variable,}\,\, \E(|X|^{2}) < \infty\},\\
\mathcal{H}^2&=&\{X: X = (X_t)_{0\le t\le T} \,\,\mbox{continuous predictable process,}\,\, \E\displaystyle\int_0^T|X_t|^2 dt < \infty\},\\
\mathcal{S}^2&=&\{X: X = (X_t)_{0\le t\le T}\,\, \mbox{continuous predictable process,}\,\,\E(\displaystyle{\sup_{0\le t\le T}} |X_t|^2)<\infty\},\\
\mathcal{I}^{\ }&=&\{K: K = (K_t)_{0\le t\le T}\,\, \mbox{non-decreasing continuous process},\,\, K_{0}=0,\,\, K_{T}\in\mathcal{L}^2\}.
\end{eqnarray*}

As in El Karoui et al. \citep{Nichole} consider a triplet of standard parameters $(\xi, f, L)$ satisfying the following conditions

\quad

\quad(i) \label{shart}$\xi\in \mathcal{L}^2;$

\quad (ii) $f:\Omega\times[0,T]\times\R\times\R\longrightarrow\R$ is such that $f(\cdot,\cdot,y,z)$ is predictable, $\E{\left[\int_{0}^{T}f^{2}(\cdot, t, 0, 0)dt\right]}<\infty,$ and that it is globally Lipschitz continuous in $(y,z)$ for fixed $(\omega,t)\in \Omega\times[0,T]$;


\quad(iii) $L \in \mathcal{S}^2.$

\quad

$\xi$ is called \emph{terminal variable}, $f$ \emph{driver} and $L$ \emph{barrier process}. We shall always assume that $L_{T}\leq\xi.$
A triplet $(Y, Z, K)\in \mathcal{S}^2\times\mathcal{H}^2\times\mathcal{I}$ is a solution of the reflected backward stochastic differential equation (RBSDE) associated with $(\xi, f, L)$ if
it satisfies the following equations resp. inequalities for any $t\in[0,T]$

\begin{equation}
\left\{
\begin{aligned}
Y_{t}&=\xi+\int_{t}^{T}{f(s,Y_s,Z_s)ds}+K_{T}-K_{t}-{\int_{t}^{T}{Z_{s}dB_{s}}},\\
Y_{t}&\geq L_{t},\ \ \ \int_{0}^{T}{(Y_{t}-L_{t})dK_{t}}=0.\\
\end{aligned}
\right.
\label{BSDE}\end{equation}

$K$ controls $Y$ to stay above
the barrier $L$. The condition $\int_{0}^{T}{(Y_{t}-L_{t})\,\,dK_{t}}=0$ which is known as the Skorokhod condition guarantees that the process $K$ acts in a minimal fashion.

If the standard triplet satisfies (i)-(iii), there exists a unique solution of (\ref{BSDE}) (see El Karoui et al. \citep{Nichole}). In case the barrier $L$ is just optional and right upper semicontinuous, the existence of a unique solution of the RBSDE is shown in Grigorova et al. \citep{grigorova}. The component $Y$ is cadlag in this case.

\begin{rem}
If $f$ does not depend on $y$ and $z$,  condition (ii) can be simplified to

\quad(ii*) $f:\Omega\times[0,T]\longrightarrow\R$ is a predictable process s.t. ${\E}{\left[\int_{0}^{T}f^{2}_{t}dt\right]}<\infty.$
\end{rem}

\subsection{RBSDE and optimal stopping problems}\label{secjun}

Snell envelopes provide the well known link between value functions of optimal stopping problems and solutions of corresponding RBSDE (see for example El Karoui et al in \citep{Nichole}).  We shall extend this link to the framework of parametrized RBSDE defined on the product space. We start by recalling some basic facts from the classical theory.

\quad

\begin{pro}
 Let $(Y,Z,K)$ be the solution of the RBSDE (\ref{BSDE}).
 Then $$Y_{t}=\esssup_{{\tau\in\mathcal{T}_{t,T}\left(\mathbb{F}\right)}}\E{\left[\int_{t}^{\tau}{f(s,Y_s,Z_s)ds+L_{\tau}1_{[0,T[}(\tau)+
 \xi1_{\{T\}}}(\tau)|\f_{t}\right]},$$
 where $\mathcal{T}_{t,T}\left(\mathbb{F}\right)$ is the set of all $\F-$stopping times with values in $[t,T].$
\label{avali}\end{pro}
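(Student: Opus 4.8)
The plan is to reproduce, in this Brownian setting, the classical Snell-envelope argument of El Karoui et al.\ \cite{Nichole}. Fix $t\in[0,T]$ and write $g(s):=f(s,Y_s,Z_s)$ for the fixed driver process; for $\tau\in\mathcal{T}_{t,T}(\mathbb{F})$ set
\[
J(t,\tau):=\E\!\left[\int_t^\tau g(s)\,ds+L_\tau 1_{[0,T[}(\tau)+\xi 1_{\{T\}}(\tau)\ \Big|\ \f_t\right],
\]
so that the assertion reads $Y_t=\esssup_{\tau\in\mathcal{T}_{t,T}(\mathbb{F})}J(t,\tau)$. Before the two inequalities I would record two routine facts. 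First, from $|f(s,y,z)|\le|f(s,0,0)|+K(|y|+|z|)$ (assumption (vi)) together with (v) and the a priori bounds (i)--(ii), one gets $\E\!\left[\int_0^T|g(s)|\,ds\right]<\infty$, so every conditional expectation below is well defined and finite. Second, by (i) the process $\big(\int_0^s Z_r\,dB_r\big)_{0\le s\le T}$ is a square-integrable, hence uniformly integrable, martingale, so optional sampling makes its increment over $[t,\tau]$ have zero $\f_t$-conditional expectation for every $\tau\in\mathcal{T}_{t,T}(\mathbb{F})$.

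For ``$\ge$'': given $\tau\in\mathcal{T}_{t,T}(\mathbb{F})$, I would write the RBSDE (\ref{BSDE}) over $[t,\tau]$ and take $\E[\,\cdot\mid\f_t]$, which gives $Y_t=\E[\,Y_\tau+\int_t^\tau g(s)\,ds+K_\tau-K_t\mid\f_t]$. Since $K$ is nondecreasing, $K_\tau-K_t\ge0$; since $Y_s\ge L_s$ for every $s$ and $Y_T=\xi$, one has $Y_\tau\ge L_\tau 1_{[0,T[}(\tau)+\xi 1_{\{T\}}(\tau)$. Hence $Y_t\ge J(t,\tau)$, and taking the essential supremum over $\tau$ yields $Y_t\ge\esssup_{\tau}J(t,\tau)$.

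For ``$\le$'' I would produce a family of nearly optimal stopping times. For $\varepsilon>0$ put $\tau^\varepsilon_t:=\inf\{s\in[t,T]:Y_s\le L_s+\varepsilon\}\wedge T$; this is an $\mathbb{F}$-stopping time as the hitting time of a closed set by the continuous adapted process $Y-L$. On $[t,\tau^\varepsilon_t)$ one has $Y_s-L_s>\varepsilon>0$, so, reading the Skorokhod identity $\int_0^T(Y_s-L_s)\,dK_s=0$ as the statement that the nonnegative measure $dK$ is carried by the contact set $\{Y=L\}$, we obtain $\int_{[t,\tau^\varepsilon_t)}dK_s=0$, and by continuity of $K$ also $K_{\tau^\varepsilon_t}=K_t$. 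Writing the RBSDE over $[t,\tau^\varepsilon_t]$ and conditioning on $\f_t$ now gives $Y_t=\E[\,Y_{\tau^\varepsilon_t}+\int_t^{\tau^\varepsilon_t}g(s)\,ds\mid\f_t]$. On $\{\tau^\varepsilon_t<T\}$, continuity of $Y$ and $L$ gives $Y_{\tau^\varepsilon_t}\le L_{\tau^\varepsilon_t}+\varepsilon$, and on $\{\tau^\varepsilon_t=T\}$ we have $Y_{\tau^\varepsilon_t}=\xi$; in either case $Y_{\tau^\varepsilon_t}\le L_{\tau^\varepsilon_t}1_{[0,T[}(\tau^\varepsilon_t)+\xi 1_{\{T\}}(\tau^\varepsilon_t)+\varepsilon$. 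Therefore $Y_t\le J(t,\tau^\varepsilon_t)+\varepsilon\le\esssup_{\tau}J(t,\tau)+\varepsilon$, and letting $\varepsilon\downarrow0$ finishes the proof.

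The step I expect to be the main obstacle is the flatness $K_{\tau^\varepsilon_t}=K_t$: it rests on interpreting the Skorokhod condition as a support statement for the random measure $dK$ and then using the continuity of $K$ (and of $Y-L$) to include the right endpoint $\tau^\varepsilon_t$ in the flat interval; for a merely c\`adl\`ag obstacle one would instead argue on $[t,\tau^\varepsilon_t-\delta]$ and let $\delta\downarrow0$, or treat a possible jump of $K$ at $\tau^\varepsilon_t$ separately. Everything else --- the vanishing of the stochastic-integral term and the various $L^1$ estimates --- is routine given the integrability assumptions (i), (v), (vii).
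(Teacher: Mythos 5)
Your argument is correct and is essentially the classical Snell-envelope proof from El Karoui et al.\ \cite{Nichole}, which is exactly what the paper invokes here (the paper gives no proof of its own, only the citation). Both inequalities, the vanishing of $K$ on $[t,\tau^{\varepsilon}_t]$ via the Skorokhod condition, and the $\varepsilon$-optimal stopping times match the standard treatment, so nothing further is needed.
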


\begin{proof}
 See \citep{Nichole}.
\end{proof}

\begin{pro}
Suppose that $f = (f_t)_{0\le t\le T}$ is an ${\F}-$progressively measurable process that does not depend on $y$ and $z$. Under assumptions (i), (ii*), and (iii), the RBSDE (\ref{BSDE}) with driver $f$ has a unique solution
$\{(Y_t,Z_t,K_t); \ {0 \leq t \leq T}\}$. 
\label{dovomi}\end{pro}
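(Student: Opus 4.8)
The plan is to reduce Proposition \ref{dovomi} to the general existence and uniqueness result already available (assumptions (iv) to (vii) yield a unique solution of (\ref{BSDE})), so the only real work is to check that a driver $f=(f_t)_{0\le t\le T}$ not depending on $(y,z)$ satisfies the required structural hypotheses. First I would observe that such an $f$ is trivially Lipschitz in $(y,z)$: the left-hand side $|f(t,y,z)-f(t,y',z')|$ is identically zero, so (vi) holds with any $K>0$. Assumption (v) is exactly the hypothesis we are given, namely that $f$ is ${\F}$-progressively measurable with $\E[\int_0^T|f(t,0,0)|^2\,dt]=\E[\int_0^T|f_t|^2\,dt]<\infty$; and (vii) on the obstacle $L$, together with $L_T\le\xi$ a.s., is assumed as well. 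The terminal condition (iv), $\xi\in L^2(\Omega,\f_T,\p)$, is part of the standing standard-parameter framework. Hence the triplet $(\xi,f,L)$ is a set of standard parameters in the sense of El Karoui et al. \cite{Nichole}, and the cited existence/uniqueness theorem applies verbatim.

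Next I would spell out the conclusion: by that theorem there is a solution $(Y,Z,K)$ of (\ref{BSDE}) with this driver, satisfying (i); and, as recalled in the excerpt just after (\ref{BSDE}), properties (ii) and (iii) follow from (i). For uniqueness I would note that if $(Y,Z,K)$ and $(Y',Z',K')$ are two solutions satisfying (i)--(iii), the classical argument applies unchanged: apply It\^o's formula to $|Y_t-Y'_t|^2$, use that the driver contributes nothing to the difference (since $f$ does not depend on $(y,z)$, the terms $f(s,Y_s,Z_s)-f(s,Y'_s,Z'_s)$ vanish), use the Skorohod conditions $\int_0^T(Y_t-L_t)\,dK_t=0$ and $\int_0^T(Y'_t-L_t)\,dK'_t=0$ together with $Y_t\ge L_t$, $Y'_t\ge L_t$ to conclude $\int_0^T(Y_t-Y'_t)\,d(K_t-K'_t)\le 0$, and then take expectations and apply Gronwall to get $Y\equiv Y'$, whence $Z\equiv Z'$ and $K\equiv K'$. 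Alternatively, one simply invokes the uniqueness half of the theorem in \cite{Nichole}, which already covers general Lipschitz drivers and a fortiori the $(y,z)$-independent case.

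There is essentially no main obstacle here; the proposition is a specialization, not a generalization, of the quoted result, and the proof is a one-paragraph verification that the hypotheses of the general theorem are met. The only point that deserves a sentence of care is the role of the representation in Proposition \ref{avali}: when $f$ does not depend on $(Y,Z)$, that formula becomes
\[
Y_t=\esssup_{\tau\in\mathcal{T}_{t,T}(\mathbb{F})}\E\!\left[\int_t^\tau f_s\,ds+L_\tau 1_{[0,T[}(\tau)+\xi 1_{\{T\}}(\tau)\,\Big|\,\f_t\right],
\]
an explicit optimal stopping value, which both reconfirms existence (the right-hand side is a well-defined process, being the Snell envelope of an integrable reward) and pins down $Y$ uniquely; $Z$ and $K$ are then recovered from the Doob--Meyer decomposition of that supermartingale and are therefore unique as well. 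I would present the proof as: verify (iv)--(vii) for $(\xi,f,L)$, invoke \cite{Nichole} (or Hamad\`ene \cite{Hamedane}, Lepeltier--Xu \cite{Lepeltier} in the c\`adl\`ag case) for existence and uniqueness under (i)--(iii), and remark that this is the instance of Proposition \ref{avali} that will be used when we set up the parametrized RBSDE on the product space in the next subsection.
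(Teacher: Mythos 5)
Your proposal is correct and takes essentially the same route as the paper, whose entire proof is a citation of El Karoui et al.\ \cite{Nichole}: you simply verify that a $(y,z)$-independent driver satisfies the Lipschitz condition (vi) trivially and that (iv), (v), (vii) hold, then invoke the general existence/uniqueness theorem. The extra remarks on uniqueness via It\^o's formula and on the Snell-envelope representation are sound but not needed beyond the citation.
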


\begin{proof}
 See \citep{Nichole}.
\end{proof}

It is clear from the preceding propositions that in case $f$ does not depend on $y, z$, the link between RBSDE and optimal stopping problems via Snell envelope becomes very explicit. This is stated in the following proposition that is mentioned in \citep{Nichole}.

\begin{pro}
Suppose that $f$ is an ${\F}-$progressively measurable process that does not depend on $y$ and $z$. Under the assumptions (i), (ii*), and (iii),
$Y+\int_{0}^\cdot{f_s ds}$ is the value function of an optimal stopping problem with payoff $$\int_{0}^\cdot f_s ds+L1_{[0,T[} +\xi1_{\{T\}},$$ where  $Y$ is the first component of the solution triplet of the RBSDE (\ref{BSDE})  with coefficient $f$. 
Furthermore, for $t\in[0,T]$ the stopping time
$\tau^{*}=\inf{\{s\in[t,T] : Y_s=L_s\}}\wedge{T}$ is optimal, in the sense $$Y_{t}=\E{\left[\int_{t}^{\tau^{*}}{f_s ds+L_{\tau^{*}}1_{[0,T[}( \tau^{*})+\xi1_{\{T\}}(\tau^{*})}|\f_{t}\right]}.$$
\label{optimal}\end{pro}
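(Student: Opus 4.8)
The plan is to combine the two preceding propositions. By Proposition~\ref{avali} (applied with the driver $f$ which here does not depend on $y,z$), the first component $Y$ of the solution of the RBSDE~(\ref{BSDE}) satisfies
\begin{eqnarray*}
 Y_{t}=\esssup_{\tau\in\T_{t,T}(\F)}\E\left[\int_{t}^{\tau}f_s\,ds+L_{\tau}1_{[0,T[}(\tau)+\xi1_{\{T\}}(\tau)\Big|\f_{t}\right].
\end{eqnarray*}
Adding the $\f_{t}$-measurable quantity $\int_{0}^{t}f_s\,ds$ to both sides, and noting that $\int_{0}^{t}f_s\,ds+\int_{t}^{\tau}f_s\,ds=\int_{0}^{\tau}f_s\,ds$, I rewrite this as
\begin{eqnarray*}
 Y_{t}+\int_{0}^{t}f_s\,ds=\esssup_{\tau\in\T_{t,T}(\F)}\E\left[\int_{0}^{\tau}f_s\,ds+L_{\tau}1_{[0,T[}(\tau)+\xi1_{\{T\}}(\tau)\Big|\f_{t}\right],
\end{eqnarray*}
which is precisely the statement that $Y+\int_{0}^{\cdot}f_s\,ds$ is the value function of the optimal stopping problem with payoff $\int_{0}^{\cdot}f_s\,ds+L1_{[0,T[}+\xi1_{\{T\}}$. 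The integrability conditions (v), (vii) together with (iii) guarantee that all the terms are integrable, so the essential supremum is well defined.

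For the optimality of $\tau^{*}=\inf\{s\in[t,T]:Y_s=L_s\}\wedge T$, I would argue via the Skorohod condition in~(\ref{BSDE}). On the stochastic interval $[t,\tau^{*})$ one has $Y_s>L_s$ (by right-continuity of $Y$ and $L$ and the definition of $\tau^{*}$), hence the measure $dK_s$ charges no subset of $[t,\tau^{*})$, i.e. $K_{\tau^{*}}-K_{t}=0$. Therefore, integrating the first line of~(\ref{BSDE}) between $t$ and $\tau^{*}$ gives
\begin{eqnarray*}
 Y_{t}=Y_{\tau^{*}}+\int_{t}^{\tau^{*}}f_s\,ds-\int_{t}^{\tau^{*}}Z_s\,dB_s.
\end{eqnarray*}
Next I identify $Y_{\tau^{*}}$ with the payoff at $\tau^{*}$: on $\{\tau^{*}<T\}$ continuity of $Y$ and $L$ forces $Y_{\tau^{*}}=L_{\tau^{*}}$, while on $\{\tau^{*}=T\}$ we have $Y_{T}=\xi$ from the terminal condition; thus $Y_{\tau^{*}}=L_{\tau^{*}}1_{[0,T[}(\tau^{*})+\xi1_{\{T\}}(\tau^{*})$. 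Taking $\f_{t}$-conditional expectations and using that $\int_{t}^{\tau^{*}\wedge\cdot}Z_s\,dB_s$ is a true martingale (which follows from condition (i), after a standard localization/uniform-integrability argument using $\E\sup_{s}|Y_s|^{2}<\infty$), I obtain
\begin{eqnarray*}
 Y_{t}=\E\left[\int_{t}^{\tau^{*}}f_s\,ds+L_{\tau^{*}}1_{[0,T[}(\tau^{*})+\xi1_{\{T\}}(\tau^{*})\Big|\f_{t}\right],
\end{eqnarray*}
which is the claimed optimality, since $\tau^{*}\in\T_{t,T}(\F)$ by the Début theorem and this value attains the essential supremum.

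The main obstacle I anticipate is the rigorous justification that $K_{\tau^{*}}-K_{t}=0$ and that $Y_{\tau^{*}}$ equals the obstacle on $\{\tau^{*}<T\}$: both rest on path regularity of $Y$ and $L$ and on a careful reading of the Skorohod condition $\int_{0}^{T}(Y_s-L_s)\,dK_s=0$ (the support of $dK$ is contained in $\{Y=L\}$), which one must combine with the strict inequality $Y_s>L_s$ on $[t,\tau^{*})$ and the closedness of $\{Y=L\}$. A secondary technical point is the martingale property of the stochastic integral up to the stopping time $\tau^{*}$; this I would handle by localizing and passing to the limit with dominated convergence, the dominating function being supplied by $\E\sup_{0\le s\le T}|Y_s|^{2}<\infty$ and the Burkholder--Davis--Gundy inequality. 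Both steps are essentially those of El~Karoui et al.~\cite{Nichole}, specialized to the driver that does not depend on $(y,z)$, so no genuinely new difficulty arises.
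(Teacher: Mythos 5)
Your proposal is correct and follows essentially the same route as the paper: invoke the Snell envelope representation of Proposition~\ref{avali}, add the $\f_t$-measurable term $\int_0^t f_s\,ds$, and then establish optimality of $\tau^{*}$ by combining the Skorokhod condition (giving $K_{\tau^{*}}-K_{t}=0$) with the identification $Y_{\tau^{*}}=L_{\tau^{*}}1_{[0,T[}(\tau^{*})+\xi1_{\{T\}}(\tau^{*})$ and taking conditional expectations of the equation integrated between $t$ and $\tau^{*}$. Your additional remarks on localization for the stochastic integral and on the support of $dK$ only make explicit what the paper leaves implicit.
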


\begin{rem}
 If $f\equiv0$ then $Y$, the first component of the solution of RBSDE (\ref{BSDE}), is the value function of the American contingent claim with payoff $L_{t}1_{[0,T[}(t)+\xi1_{\{T\}}(t)$ and $\tau^{*}$ is
 the optimal stopping time for the buyer after time $t$.
\label{rem1}\end{rem}

In the sequel, we suppose that $(\Omega,\f,\F,\p)$ is the filtered probability space carrying a one-dimensional Brownian motion $B$, and $\F=(\f_{t})_{0\leq t\leq T}$ is the Brownian standard filtration.

\subsection{Parametrized RBSDE}
 Recall our product space $(\widehat{\Omega},{\widehat{\f}},{\widehat{\F}},\widehat{\p})$ from (\ref{product}). In order to obtain solutions of RBSDE in the initially enlarged filtration in the following section, the statement of problems with initial enlargements in the framework of product spaces now leads us to consider RBSDE in such product spaces. As the main ingredient for obtaining solutions of RBSDE, we need a martingale representation theorem in this setting. For this purpose, some preparations are needed.
 \begin{rem}
If for a random variable $X:\widehat{\Omega}\rightarrow\R,$ we write $X(.)$, the $\cdot$ stands for the parameter $u\in\R.$
 \end{rem}

 \begin{pro}
  Suppose that $M:\widehat{\Omega}\times[0,T]\rightarrow\R$ is an $\widehat{\f}-$measurable function such that for each $u\in\R,$ $\{M_{t}(u)\}_{t\in[0,T]}$ is a martingale w.r.t $\F$
  and $\int_{\R}\E[|M_{t}(u)|]d\eta(u)<+\infty.$ Then
  $\{M_{t}(.)\}_{t\in[0,T]}$ is a martingale w.r.t $\widehat{\F}.$
 \label{martingale} \end{pro}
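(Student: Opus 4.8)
The plan is to verify the two defining properties of a martingale for the process $\{M_t(.)\}_{t\in[0,T]}$ with respect to the product filtration $\widehat{\F}$: namely $\widehat{\p}$-integrability of each $M_t(.)$, and the tower/projection identity $\widehat{\E}[M_t(.)\,|\,\widehat{\f}_s] = M_s(.)$ for $s\le t$. First I would settle integrability: by Tonelli applied to the nonnegative function $(\omega,u)\mapsto |M_t(u)(\omega)|$ on the product $\Omega\times\R$ equipped with $\p\otimes P^G$, one has $\widehat{\E}[|M_t(.)|] = \int_\R \E[|M_t(u)|]\,dP^G(u)$, which is finite by hypothesis (and in fact this bound is uniform in $t$, though we only need finiteness for each fixed $t$). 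So $M_t(.)\in L^1(\widehat{\p})$ for every $t$, and each $M_t(.)$ is $\widehat{\f}_t$-measurable by assumption, hence $\widehat{\F}$-adapted.

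For the martingale identity, fix $s\le t$ and test against generators of $\widehat{\f}_s$. Since $\widehat{\f}_s=\bigcap_{r>s}(\f_r\otimes\B(\R))$ it suffices — by a monotone class / $\pi$-$\lambda$ argument — to check the defining relation against indicator functions of rectangles $F\times D$ with $F\in\f_s$ (one can even use $F\in\f_r$ for $r>s$ and then pass to the limit, but working with $\f_s$-rectangles and the right-continuity of $\F$ is cleanest; alternatively test against the larger algebra and note the candidate conditional expectation is already $\f_s\otimes\B(\R)$-measurable). One computes, using Tonelli again to split the product integral and the $\F$-martingale property of $u\mapsto\{M_r(u)\}_r$ for each fixed $u$,
\begin{eqnarray*}
\widehat{\E}\bigl[M_t(.)\,1_{F\times D}\bigr] &=& \int_D \E\bigl[M_t(u)\,1_F\bigr]\,dP^G(u)\\
&=& \int_D \E\bigl[M_s(u)\,1_F\bigr]\,dP^G(u)\\
&=& \widehat{\E}\bigl[M_s(.)\,1_{F\times D}\bigr],
\end{eqnarray*}
where the middle step is exactly the $\f_s$-martingale property of $M(u)$ tested against $F\in\f_s$. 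Since $M_s(.)$ is $\widehat{\f}_s$-measurable and the displayed equality holds on a $\pi$-system generating $\widehat{\f}_s$ (closed under finite intersection, containing $\widehat\Omega$), it extends to all of $\widehat{\f}_s$, giving $\widehat{\E}[M_t(.)\,|\,\widehat{\f}_s]=M_s(.)$, $\widehat{\p}$-a.s.

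The only genuinely delicate point is the interface between the product $\sigma$-algebra $\f\otimes\B(\R)$ and the right-continuous filtration $\widehat{\f}_s=\bigcap_{r>s}(\f_r\otimes\B(\R))$: one must make sure that testing against $\f_s$-rectangles is enough to pin down a $\widehat{\f}_s$-measurable random variable, which works because $\f_s=\bigcap_{r>s}\f_r$ by right-continuity of $\F$, so the rectangles $\{F\times D: F\in\f_s, D\in\B(\R)\}$ form a $\pi$-system generating $\f_s\otimes\B(\R)$, and the candidate $M_s(.)$ is measurable with respect to this smaller $\sigma$-algebra, hence a fortiori $\widehat{\f}_s$-measurable. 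Beyond that, the proof is a routine Fubini/Tonelli plus monotone-class exercise; the hypothesis $\int_0^{+\infty}\E[|M_t(u)|]\,dP^G(u)<+\infty$ (for the relevant $t$, or uniformly) is precisely what licenses interchanging $\E$ and $\int\,dP^G$ and guarantees all the conditional expectations in sight are well defined.
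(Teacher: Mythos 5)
Your proof is correct and follows essentially the same route as the paper's: Tonelli for integrability, the $\F$-martingale property of $M(u)$ tested against rectangles $F\times D$ with $F\in\f_s$, and a Dynkin $\pi$-$\lambda$ (monotone class) extension to the full conditioning $\sigma$-algebra. The one delicate point you flag --- that the rectangles only generate $\f_s\otimes\B(\R)$ while the conditioning $\sigma$-algebra is $\widehat{\f}_s=\bigcap_{r>s}(\f_r\otimes\B(\R))$ --- is treated no more carefully in the paper, whose $\lambda$-system argument likewise only reaches $\sigma(H)=\f_s\otimes\B(\R)$ even though the conclusion is asserted for all of $\widehat{\f}_s$.
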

 \begin{proof}
For $t\in[0,T]$ we have from Fubini's theorem
 \begin{eqnarray*}
  \widehat{\E}[|M_{t}(.)|]=\int_{\R}\E[|M_{t}(u)|]d\eta(u)<+\infty.
 \end{eqnarray*}

Suppose that $s\leq t$, $C\in{\f}_{s}$, and $D\in\B(\R)$. From Fubini's theorem and martingale property of $\{M_{t}(u)\}_{t\in[0,T]}$ w.r.t $\F$ we have $\widehat{\p}-$a.s.
\begin{equation}
\begin{aligned}
   {\widehat\E[M_{t}(.)1_{C\times D}(.)]}&=\ {\E[\int_{\R}M_{t}(u)1_{C}1_{D}(u)d\eta(u)]}&\\
   &=\ {\int_{\R}\E[M_{t}(u)1_{C}]1_{D}(u)d\eta(u)}&\\
   &=\ {\int_{\R}\E[M_{s}(u)1_{C}]1_{D}(u)d\eta(u)}&\\
   &=\ {\E[\int_{\R}M_{s}(u)1_{C}1_{D}(u)d\eta(u)]}&\\
   &=\ {\widehat\E[M_{s}(.)1_{C\times D}(.)]}.&
 \label{mart}\end{aligned}
 \end{equation}
Now define $$E:=\{A\in\widehat{\f}_{s}\ :\ \widehat\E[M_{t}(.)1_{A}(.)]=\widehat\E[M_{s}(.)1_{A}(.)],\ \widehat{\p}-a.s.\}$$ and
$$H:=\{C\times D;\ C\in{\f}_{s},D\in\B(\R)\}.$$ From (\ref{mart}), we have $H\subseteq E$. Moreover $H$ is a $\pi-$system
and $E$ is a $\lambda-$system so by the Dynkin's $\pi-\lambda$ theorem, we have
$$\forall A\in\widehat{\f}_{s},\ \ \widehat\E[M_{t}(.)1_{A}(.)]=\widehat\E[M_{s}(.)1_{A}(.)].$$

\end{proof}

\begin{cor}
We define $\widehat{B}:\widehat{\Omega}\times[0,T]\rightarrow\R$ by $\widehat{B}_{t}(\omega,u):=B_{t}(\omega)$, where $B=(B_{t})_{t\in[0,T]}$ is a Brownian motion w.r.t.\, $\F$. Then from the above proposition, $\widehat{B}(.)$ is a Brownian motion w.r.t $\widehat{\F}$.
\label{mre}\end{cor}

\begin{pro}
Let $\widehat{X}:\widehat{\Omega}\times[0,T]\rightarrow\R$ and ${X}:{\Omega}\times[0,T]\rightarrow\R$ be two stochastic processes  such that for each $t\in[0,T],\ \widehat{X}_{t}(\omega,u)=X_{t}(\omega),\ u\in\R.$ Then we have $\sigma(\widehat{X}_{s}(.),\ 0\leq s\leq t)=\sigma(X_{s},\ 0\leq s\leq t)\otimes\{\emptyset, \R\},\,t\in[0,T].$
\label{brofil}\end{pro}
\begin{proof}
It is clear that $\sigma({X}_s, 0\le s\le t)\otimes\{\emptyset, \mathbb{R}\}$ is contained in the natural filtration of $\widehat{X}(.)$ on $[0,t]$. On the other hand, since $\widehat{X}$ is constant in the second variable, the natural filtration of $\widehat{X}(.)$ on $[0,t]$ is also contained in $\sigma({X}_s, 0\le s\le t)\otimes\{\emptyset, \mathbb{R}\}$. 
\end{proof}
\begin{cor}
Proposition \ref{brofil} implies that for $\widehat{B}(.)$ defined in Corolary \ref{mre}, we have
$$\sigma(\widehat{B}_{s}(.),\ 0\leq s\leq t)=\sigma(B_{s},\ 0\leq s\leq t)\otimes\{\emptyset, \R\},\ t\in[0,T].$$
Furtheremore, since $\F=(\f_t)_{0\leq t\leq T}$ is the natural filtration generated by $B$, then
$$\sigma(\widehat{B}_{s}(.),\ 0\leq s\leq t)=\f_{t}\otimes\{\emptyset, \R\},\  t\in[0,T].$$
\label{brofi2}\end{cor}
The proposition above implies that the natural filtration generated by $\widehat{B}(.)$ is a subset of the product filtration $\widehat{\F}=(\widehat{\f}_t)_{0\leq t\leq T}$ given by $\widehat{\f}_{t}=\bigcap_{s>t}\left(\f_{s}\otimes \mathcal{B}(\R\right)),\,t\in[0,T]$. So it is not clear a priori that the martingale representation property can be extended to the product space. However, a simple direct argument making use of the product structure will prove that the martingale representation theorem from the first factor extends to the whole space. For more details we need the following preliminaries.

\quad

\begin{cor}
  Let $X:\widehat{\Omega}\times[0,T]\rightarrow\R$ be an $\widehat{\f}\otimes\B([0,T])-$measurable function such that $\int_{0}^{T}X_{s}(u)dB_{s}$ is defined for $u\in\R$, then
 $$\widehat{\E}(\int_{0}^{T}X_{s}(.)dB_{s})^{2}=\widehat{\E}(\int_{0}^{T}X^{2}_{s}(.)d{s}).$$
\label{cor}\end{cor}
\begin{proof}
 It can be easily deduced from the definition of $\widehat{\E}$, Fubini's theorem, and Ito's isometry that
 \begin{eqnarray*}
  \widehat{\E}(\int_{0}^{T}X_{s}(.)dB_{s})^{2}&=&{\E}[\int_{\R}{(\int_{0}^{T}X_{s}(u)dB_{s})^{2}}d\eta(u)]\\
  &=&\int_{\R}{\E(\int_{0}^{T}X_{s}(u)dB_{s})^{2}}d\eta(u)\\
  &=&\int_{\R}{\E(\int_{0}^{T}X^{2}_{s}(u)ds)}d\eta(u)\\
  &=&\widehat{\E}(\int_{0}^{T}X^{2}_{s}(.)ds).
 \end{eqnarray*}
\end{proof}
We introduce some auxiliary spaces on $(\widehat{\Omega},{\widehat{\f}}_{T},\widehat{\p})$. Let $L^{2}$ be the space of $\widehat{\f}_{T}-$measurable random variables
 $X$ that are square integrable i.e. ${\widehat{\E}}(|X|^{2})<+\infty$. We denote by $BL^{2}$ the subspace of $L^{2}$ consisting of bounded elements and by $SBL^{2}$ the subspace of $BL^{2}$ composed of linear combinations of
 random variables of the form $H(\omega)K(u)$ where $H$ is $\f_{T}-$measurable and bounded and $K$ is $B(\R)-$measurable and bounded.
\begin{thm}
 Let $M : \widehat{\Omega}\rightarrow\R$ be an ${\widehat{\f}}_{T}-$measurable random variable such that ${\widehat{\E}}(|M(.)|^{2})<+\infty$. Then there exists a unique $\widehat{\f}-$measurable function $\widehat{Z} : \widehat{\Omega}\times[0,T]\rightarrow\R,$
 which is predictable w.r.t ${\widehat{\F}}$ such that ${\widehat{\E}}(\int_{0}^{T}|\widehat{Z}_{s}(.)|^{2}ds)<+\infty$ and
$$M(.)=M_{0}(.)+\int_{0}^{T}\widehat{Z}_{s}(.)dB_{s}, \quad \widehat{\p}-a.e,$$ where $M_{0}(.)\in{\widehat{\f}}_{0}.$
 \end{thm}

 \begin{proof}
First we suppose that $M(.)\in BL^{2}$.
   Since $SBL^{2}$ is dense in $BL^{2}$, for each $M(.)\in BL^{2}$ there exists a sequence $\{M^{n}(.)\}_{n\in\N}\in SBL^{2}$ such that $M^{n}(.)\rightarrow M(.)$ in $L^{2}$. Thus, by linearity it suffices to prove the theorem for $M(\omega,u)=H(\omega)K(u)\in SBL^{2}.$ Since $H\in L^{2}({\Omega},{\f}_{T},{\p})$ and $\F$ is a Brownian filtration, from the martingale representation theorem, there exists a unique ${\f}-$measurable process $Z=(Z_t)_{t\in[0,T]}$ which is predictable w.r.t $\F$
   and ${\E}(\int_{0}^{T}|Z_s|^{2}ds)<+\infty$ such that $$H=H_{0}+\int_{0}^{T}Z_sdB_{s}, \quad {\p}-a.s,$$
   where $H_{0}\in{\f}_{0}$.
 By multiplying by $K,$ we get for $ \eta-a.e\ u\in\R$
\begin{eqnarray}
\quad M(u)=M_{0}(u)+\int_{0}^{T}{{\widehat{Z}}_{s}(u)}dB_{s}, \quad {\p}-a.s.
 \label{rabete}\end{eqnarray}
    Where $M_{0}(u):=H_{0}K(u)$ and
   ${\widehat{Z}}_{s}(u):={Z}_{s}K(u).$\\
   It can be easily seen that $M_{0}(.)\in{\widehat{\f}}_{0}$ and ${\widehat{Z}}(.)$ is $\widehat{\F}-$predictable. Furthermore, from boundedness of $K$, we have
   $${\widehat{\E}}(\int_{0}^{T}|{\widehat{Z}}_{s}(.)|^{2}ds)={\E}({\int_{0}^{T}|{{Z}}_{s}|^{2}ds})(\int_{\R}K^{2}(u)d\eta(u))<+\infty.$$
   Since the null sets are independent of $u,$ we have
   $$M(.)=M_{0}(.)+\int_{0}^{T}{{\widehat{Z}}_{s}(.)}dB_{s}, \quad \widehat{\p}-a.e.$$


   Now for $M(.) \in BL^{2},\ M^{n}(.)\rightarrow M(.)$ in $L^{2}$ where $\{M^{n}(.)\}_{n\in\N}\in SBL^{2}$. Thus, $\{M^{n}(.)\}_{n\in\N}$ is Cauchy in $L^{2}$.
   On the other hand, we have

   \begin{eqnarray*}
  {\widehat{\E}}\left(|M^{n}(.)-M^{m}(.)|^{2}\right)&=&{{\widehat{\E}}\left(|M^{n}_{0}(.)-M^{m}_{0}(.)|^{2}\right)+{\widehat{\E}}\left(|\int_{0}^{T}({\widehat{Z}}^{n}_{s}(.)-{\widehat{Z}}^{m}_{s}(.))dB_{s}|^{2}\right)}\\
    &+& 2{{\widehat{\E}}\left(|M^{n}_{0}(.)-M^{m}_{0}(.)|\left(|\int_{0}^{T}({\widehat{Z}}^{n}_{s}(.)-{\widehat{Z}}^{m}_{s}(.))dB_{s}|\right)\right)}.
 \end{eqnarray*}

   From the boundedness of $M^{n}_{0}(.)$ for $n\geq1,$ and Proposition \ref{martingale} we get
    \begin{eqnarray*}
  {\widehat{\E}}\left(|M^{n}(.)-M^{m}(.)|^{2}\right)={{\widehat{\E}}\left(|M^{n}_{0}(.)-M^{m}_{0}(.)|^{2}\right)+{\widehat{\E}}\left(|\int_{0}^{T}({\widehat{Z}}^{n}_{s}(.)-{\widehat{Z}}^{m}_{s}(.))dB_{s}|^{2}\right)},
 \end{eqnarray*}
since
 \begin{eqnarray*}
  {\widehat{\E}}\left(|M^{n}_{0}(.)-M^{m}_{0}(.)|\left(|\int_{0}^{T}({\widehat{Z}}^{n}_{s}(.)-{\widehat{Z}}^{m}_{s}(.))dB_{s}|\right)\right)&\leq& c{\widehat{\E}}\left(\int_{0}^{T}|{\widehat{Z}}^{n}_{s}(.)-{\widehat{Z}}^{m}_{s}(.)|dB_{s}\right)
 \end{eqnarray*}
and $\int_{0}^{T}|{\widehat{Z}}^{n}_{s}(.)-{\widehat{Z}}^{m}_{s}(.)|dB_{s}$ is a martingale w.r.t $\widehat{\F}$ with zero expectation.
Therefore $\{M^{n}_{0}(.)\}_{n\in\N}$  is Cauchy in $L^{2}$, and from Corollary \ref{cor} $\{{\widehat{Z}}^{n}(.)\}_{n\in\N}$ is Cauchy in $L^{2}(\widehat{\Omega}\times[0,T])$. Thus the sequences converge to $M_{0}(.)$ resp. $\widehat{Z}(.)$.  A subsequence of $\{M^{n}_{0}(.)\}$ converges to $M_{0}(.)$ for $\widehat{\p}-a.e.\, (\omega,u)\in\widehat{\Omega}$. Therefore $M_{0}(.)$ is $\widehat{\f}_{0}-$measurable. Similarly, by extracting a subsequence we obtain that $\widehat{Z}(.)$ is $\widehat{\F}-$predictable after an eventual modification on a set of measure zero in product space.
By using Corollary \ref{cor}  and uniqueness of limits in $L^{2}$, the proof of existence of a representation is complete for $M(.) \in BL^{2}$.
\quad

Finally for $M(.)\in L^{2},$ we define $M^{n}(.):=M(.)\cdot 1_{\{|M(.)|\le n\}},\ n\in\N$. Then $\{M^{n}(.)\}_{n\in\N}$ is a sequence of bounded random variables. Since $\widehat{\E}(|M(.)|^{2})<\infty,$ we have $M^{n}(.)\rightarrow M(.)$ in $L^2$. On the other hand, since $M^{n}(.)\in BL^{2},\ n\in\N,$  for each individual $n$ we get a representation of the form
$$M^{n}(.)=M^{n}_{0}(.)+\int_{0}^{T}\widehat{Z}^{n}_{s}(.)dB_{s},\quad\widehat{\p}-a.e.$$
where $M^{n}_{0}(.)\in\widehat{\f}_{0}$ and $\widehat{Z}^{n}(.)$ is an $\widehat{\F}-$predictable process. Using Corollary \ref{cor} again, in a similar way as in the preceding part of the proof we obtain the assertion for $M(.)\in L^2.$
\quad

To prove uniqueness, suppose that there are two predictable processes ${\widehat{Z}}^{1}(.)$ and ${\widehat{Z}}^{2}(.)$ such that
$$M(.)=M_{0}(.)+\int_{0}^{T}\widehat{Z}^{1}_{s}(.)dB_{s}=M_{0}(.)+\int_{0}^{T}\widehat{Z}^{2}_{s}(.)dB_{s}, \quad \widehat{\p}-a.e.$$

Then from Corollary \ref{cor} we get
$$0=\widehat{\E}\left(\int_{0}^{T}({\widehat{Z}}^{1}_{s}(.)-{\widehat{Z}}^{2}_{s}(.))dB_{s}\right)^{2}=\widehat{\E}\left(\int_{0}^{T}({\widehat{Z}}^{1}_{s}(.)-{\widehat{Z}}^{2}_{s}(.))^{2}d{s}\right).$$
This implies that for $a.a\,(\omega,u,s)\in\widehat{\Omega}\times[0,T]$ we have\ ${\widehat{Z}}^{1}_{s}(\omega,u)={\widehat{Z}}^{2}_{s}(\omega,u).$
 \end{proof}
\begin{thm}
 Let $M:\widehat{\Omega}\times[0,T]\rightarrow\R$ be an $\widehat{\f}\otimes\B([0,T])-$measurable function
 such that $\{M_{t}(.)\}_{t\in[0,T]}$ is a martingale w.r.t  ${\widehat{\F}}$ and ${\widehat{\E}}(|M_{t}(.)|^{2})<+\infty,$ for $t\in[0,T]$. Then there exists a unique $\widehat{\f}-$measurable
 function $\widehat{Z}:\widehat{\Omega}\times[0,T]\rightarrow\R,$
 which is predictable w.r.t ${\widehat{\F}}$ such that ${\widehat{\E}}(\int_{0}^{T}|\widehat{Z}_{s}(.)|^{2}ds)<+\infty$ and for $t\in[0,T],$
$$M_{t}(.)=M_{0}(.)+\int_{0}^{t}\widehat{Z}_{s}(.)dB_{s}, \quad \widehat{\p}-a.e.$$
where $M_{0}(.)$ is ${\widehat{\f}}_{0}-$measurable.
\label{MRT}\end{thm}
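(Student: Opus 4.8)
The plan is to reduce the statement to the classical martingale representation theorem on $(\Omega,\f,\F,\p)$ by a slicing argument in the parameter $u$, and then to patch the family of representations together in a measurable way. First I would fix $u\in\R$ and observe that, by hypothesis, $\{M_t(u)\}_{t\in[0,T]}$ is an $\F$-martingale (this is the ``slice'' of $M(.)$ at $u$, measurable in $\omega$ since $M$ is $\widehat\f\otimes\B([0,T])=\f\otimes\B(\R)\otimes\B([0,T])$-measurable) with $\E[M_t(u)^2]<\infty$ for $\p^G$-almost every $u$ by Fubini applied to $\widehat\E[M_t(.)^2]<\infty$. The classical representation theorem on the Brownian filtration $\F$ then yields, for each such $u$, an $\F$-predictable process $Z^u=\{Z^u_s\}_{s\in[0,T]}$ with $\E[\int_0^T|Z^u_s|^2ds]<\infty$ and $M_t(u)=M_0(u)+\int_0^t Z^u_s\,dB_s$, $\p$-a.s.

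The main obstacle is to produce a \emph{jointly measurable} selection $u\mapsto Z^u$, i.e. an $\widehat\f$-measurable function $\widehat Z:\widehat\Omega\times[0,T]\to\R$, predictable w.r.t. $\widehat\F$, with $\widehat Z_s(\omega,u)=Z^u_s(\omega)$ for $\p^G$-a.e.\ $u$. The clean way to do this is \emph{not} to invoke a measurable-selection theorem on the stochastic integral but to construct $\widehat Z$ directly on the product space. Consider the Hilbert space $\mathcal H$ of $\widehat\F$-predictable processes $V:\widehat\Omega\times[0,T]\to\R$ with $\widehat\E[\int_0^T|V_s(.)|^2ds]<\infty$. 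By Proposition~\ref{martingale}, $\widehat B_t(.)=B_t$ is an $\widehat\F$-Brownian motion, so the It\^o isometry gives an isometric embedding $V\mapsto \big(t\mapsto\int_0^t V_s(.)\,d\widehat B_s\big)$ of $\mathcal H$ into the space of square-integrable $\widehat\F$-martingales starting (at time $0$) from a constant in $\widehat\f_0$. I would show this embedding is \emph{onto} the orthogonal complement of the $\widehat\f_0$-measurable square-integrable random variables: if $N_t(.)$ is a bounded $\widehat\F$-martingale orthogonal to all such stochastic integrals, then for each fixed $u$ the slice $N_t(u)$ is an $\F$-martingale orthogonal to all $\F$-stochastic integrals (test against processes of product form $1_C(\omega)1_D(u)\psi_s$, use Fubini exactly as in the proof of Proposition~\ref{martingale}), hence constant by classical representation; so $N_t(.)=N_0(.)$, $\widehat\f_0$-measurable.

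Granting this surjectivity, the martingale $M_t(.)-M_0(.)$ lies in the range, so there is $\widehat Z\in\mathcal H$ with $M_t(.)=M_0(.)+\int_0^t\widehat Z_s(.)\,d\widehat B_s$ for all $t\in[0,T]$, $\widehat\p$-a.s.; right-continuity of both sides upgrades this to a single null set. Uniqueness of $\widehat Z$ in $\mathcal H$ follows from the It\^o isometry: if $\int_0^\cdot(\widehat Z_s-\widehat Z'_s)\,d\widehat B_s\equiv 0$ then $\widehat\E[\int_0^T|\widehat Z_s(.)-\widehat Z'_s(.)|^2ds]=0$, so $\widehat Z=\widehat Z'$ $ds\otimes d\widehat\p$-a.e. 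Finally $M_0(.)$ is $\widehat\f_0$-measurable since $M_0(.)=M_0(.)-\int_0^0\widehat Z_s(.)\,d\widehat B_s$ and $M$ is $\widehat\F$-adapted. The one technical point to handle with care is that the classical representation theorem must be applied on a filtration generated by (or at least supporting the representation property for) $B$; here $\F$ is an abstract filtration carrying the Brownian motion $B$, so I would assume, as is implicit in the setup of Section~4, that $\F$ enjoys the martingale representation property with respect to $B$ — this is exactly the hypothesis under which \cite{Nichole} operates, and it is inherited by $\widehat\F$ through the slicing-and-Fubini argument above.
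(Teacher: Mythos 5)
Your proposal is correct in outline, and it is worth noting that it does genuinely more than the paper: the paper states Theorem \ref{MRT} with no proof at all, presenting it as an immediate consequence of Proposition \ref{martingale} (``$\widehat{B}$ is an $\widehat{\F}$--Brownian motion, thus we could apply the martingale representation theorem''). That one-line justification is insufficient on its own, because the classical representation theorem requires the filtration to be the (augmented) natural filtration of the Brownian motion, whereas $\widehat{\F}$ is strictly larger than the filtration generated by $\widehat{B}$ (it contains $\B(\R)$ from time $0$ on). Your slicing-plus-Hilbert-space argument is the right way to close this gap: the isometric image of the $\widehat{\F}$-predictable $L^2$ space under stochastic integration is closed, it is orthogonal to $L^2(\widehat{\f}_0)$, and showing that nothing else survives reduces, by testing against product integrands $1_C(\omega)1_D(u)\psi_s$ and Fubini, to the $u$-by-$u$ representation property of $\F$. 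Two points deserve more care in a full write-up. First, when you pass from ``$\widehat{\E}[N_T(.)\int_0^T V_s(.)\,d\widehat{B}_s]=0$ for all product-form $V$'' to ``for $P^G$-a.e.\ $u$ the slice $N(u)$ is orthogonal to \emph{all} $\F$-stochastic integrals,'' the exceptional null set in $u$ depends on the test process $\psi$; you must run the argument over a countable family of $\psi$'s that is dense in the predictable $L^2$ space (available since the Brownian filtration is countably generated) before concluding that almost every slice is constant. Second, your closing caveat is not cosmetic: the statement is simply false if $\F$ merely carries $B$ without having the predictable representation property with respect to it, so the hypothesis that $\F$ is the augmented Brownian filtration (as in \cite{Nichole}) must be made explicit. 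With those two repairs your argument is a complete and correct proof, and it supplies exactly the justification the paper omits.
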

\begin{proof}
From the martingale property, since $M_{T}(.)\in L^{2}$ we have for $t\in[0,T]$
$$M_{t}(.)=\widehat{\E}(M_{T}(.)|\widehat{\f}_{t}), \quad \widehat{\p}-a.e.$$
Thus from the previous theorem, there exists a unique $\widehat{\F}_0-$measurable $M_0(.)\in L^2$ and a unique $\widehat{\F}-$predictable process ${\widehat{Z}}(.)$ such that $$M_{T}(.)=M_{0}(.)+\int_{0}^{T}{\widehat{Z}}_{s}(.)dB_{s}, \quad \widehat{\p}-a.e.$$ Therefore for $t\in[0,T]$
 $$M_{t}(.)=\widehat{\E}(M_{T}(.)|\widehat{\f}_{t})=M_{0}(.)+\int_{0}^{t}{\widehat{Z}}_{s}(.)dB_{s}, \quad \widehat{\p}-a.e.$$
\end{proof}
Since our study will be based on the connection between RBSDE and optimal stopping problems, we shall restrict our attention to the case in which a generator $f$ of an RBSDE is just an $\widehat{\F}-$progressively measurable process.
Now by employing the representation property for martingales depending on a parameter of the preceding theorem, we can define and solve parametrized reflected BSDE in the product space. For a probability measure $Q$ on $(\widehat{\Omega},{\widehat{\f}},{\widehat{\F}})$, we consider the following spaces

\begin{eqnarray*}
\widehat{\mathcal{L}}^2_{Q}&=&\{X(.): X(.) \,\,\widehat{\f}_{T}-\mbox{measurable random variable}, \,\,\widehat{\E}^{Q}(|X(.)|^{2}) < \infty\},\\
\widehat{\mathcal{H}}^2_{Q}&=&\{X(.): X(.) = (X_{t}(.))_{0\le t\le T}\,\, \mbox{continuous predictable process,}\,\,\widehat{\E}^{Q}\displaystyle\int_0^T|X_{t}(.)|^2 dt < \infty\},\\
\widehat{\mathcal{S}}^2_{Q}&=&\{X(.): X(.) = (X_{t}(.))_{0\le t\le T}\,\,\mbox{continuous predictable process,}\,\,\widehat{\E}^{Q}(\displaystyle{\sup_{0\le t\le T}} |X_{t}(.)|^2)<\infty\},\\
\widehat{\mathcal{I}}_{Q}&=&\{K(.): K(.) = (K_{t}(.))_{0\le t\le T}\,\,\mbox{non-decreasing continuous process,}\,\, K_{0}(.)=0,\,\,K_{T}(.)\in\widehat{\mathcal{L}}^2_{Q}\}.
\end{eqnarray*}

$\widehat{\E}^Q$ stands for the expectation w.r.t the measure $Q$. If $Q=\widehat{\p}$, we simply write $\widehat{\E}$ and also $\widehat{\mathcal{S}}^2, \widehat{\mathcal{H}}^2, \widehat{\mathcal{I}}$.

\quad

Now consider the filtered probability space $(\widehat{\Omega},{\widehat{\f}},{\widehat{\F}},\widehat{\p})$. From Corolary \ref{mre}, $B$ is still a Brownian motion w.r.t this probability space.
Now let a triplet $(\xi(.), f(.), L(.))$ be given satisfying

\label{shart2}
\quad(i') $\xi(.)\in\widehat{\mathcal{L}}^2;$

\quad(ii') $f(.)$ is a predictable process s.t. $\widehat{\E}{\left[\int_{0}^{T}f^{2}_{t}(.)dt\right]}<\infty;$

\quad(iii') $L(.)\in\widehat{\mathcal{S}}^2$.

We call a triplet $(Y(.),Z(.),K(.))\in\widehat{\mathcal{S}}^2\times\widehat{\mathcal{H}}^2\times\widehat{\mathcal{I}}$ a solution of the parametrized RBSDE with driver $f(\cdot),$ terminal variable $\xi(\cdot)$, and barrier $L(\cdot)$, if

\begin{equation}
\left\{
\begin{aligned}
Y_{t}(.)&=\xi(.)+\int_{t}^{T}{f_{s}(.)ds}+K_{T}(.)-K_{t}(.)-{\int_{t}^{T}{Z_{s}(.)dB_{s}}}\quad\quad 0\leq t \leq T,\\
Y_{t}(.)&\geq L_{t}(.),\quad\quad0\leq t \leq T,\ \quad\quad\ \int_{0}^{T}{(Y_{t}(.)-L_{t}(.))dK_{t}(.)}=0.\\
\end{aligned}
\right.
\label{parametrized 2}\end{equation}






  \begin{rem}
For $\eta$ a.e $u\in\R,$
\begin{equation}
\left\{
\begin{aligned}
Y_{t}(u)&=\xi(u)+\int_{t}^{T}{f_{s}(u)ds}+K_{T}(u)-K_{t}(u)-{\int_{t}^{T}{Z_{s}(u)dB_{s}}}\quad\quad 0\leq t \leq T,\\
Y_{t}(u)&\geq L_{t}(u),\quad\quad0\leq t \leq T,\ \quad\quad\ \int_{0}^{T}{(Y_{t}(u)-L_{t}(u))dK_{t}(u)}=0.
\end{aligned}
\right.
\end{equation}

is an RBSDE w.r.t $\left(\Omega,\f,\F,\p\right).$ This is the reason we call the RBSDE (\ref{parametrized 2}) a parametrized RBSDE.
 \end{rem}

Similarly to the usual case, we shall always assume that $L_T(.)\leq \xi(.)$. Equipped with these concepts, we can extend the classical existence and uniqueness theorem for solutions of RBSDE to the parametrized RBSDE (\ref{parametrized 2}), and then rewrite Proposition \ref{optimal} for the product space in the following remark.

\begin{thm}
Under assumptions  (i'), (ii'), and (iii'), the RBSDE (\ref{parametrized 2}) has a unique solution $(Y(.),Z(.),K(.))$.
\label{mthm}\end{thm}

\begin{proof}
From (ii'), $f$ is an $\widehat{\F}-$progressively measurable process such that $\widehat{\E}{\left[\int_{0}^{T}f^{2}_{t}(.)dt\right]}<\infty.$ Thus the RBSDE (\ref{parametrized 2}) is a backward reflection problem (BRP) according to the terminology of \citep{Nichole}. Now we can rewrite the proof of Proposition 5.1 in \citep{Nichole} for our BRP on the product space:
Most of the proof is similar. Hence we only mention the main steps.
To prove the existence of the solution, we introduce the process $Y(.)=(Y_{t}(.))_{t\in[0,T]}$ defined by
$$Y_{t}(.)=\esssup_{{\tau(.)\in\mathcal{T}_{t,T}\left(\widehat{\F}\right)}}\widehat{\E}{\left[\int_{t}^{\tau(.)}{f_{s}(.) ds+L_{\tau(.)}(.)1_{[0,T[}(\tau(.))+\xi(.)1_{\{T\}}(\tau(.))}|\widehat{\f}_{t}\right]},\ t\in[0,T].$$

Then with the argument given in \citep{Nichole}, $Y_{t}(.)+\int_{0}^{t}f_{s}(.) ds$ is the value function of an optimal stopping problem with payoff
$$H_{t}(.)=\int_{0}^{t}{f_{s}(.) ds}+L_{t}(.)1_{[0,T[}(t)+\xi(.)1_{\{T\}}(t).$$ By the theory of Snell envelopes, it is the smallest supermartingale which dominates $H(.)$. $Y(.)$ is continuous because of the continuity of $H(.)$ on the interval $[0,T)$ and the assumption $L_{T}(.)\leq\xi(.).$ This means that the jump of $H(.)$ at time $T$ is positive. So $Y(.)\in\widehat{\mathcal{S}}^2$ from the following inequality
$$\widehat{\E}\left(\sup_{0\leq t\leq T}Y^{2}_{t}(.)\right)\leq c\widehat{\E}\left(\xi^{2}(.)+\int_{0}^{T}f^{2}_{s}(.) ds+\sup_{0\leq t\leq T}L^{2}_{t}(.)\right).$$
which is obtained by Burkholder's inequality and the conditions (i'), (ii'), and (iii').
Denote by $\tau^{*}(.)$ the stopping time $$\tau^{*}(.)=\inf\{t\leq s\leq T : Y_{s}(.)\leq L_{s}(.)\}\wedge T.$$
Then $\tau^{*}(.)$ is optimal, in the sense that
\begin{equation}
Y_{t}(.)=\widehat{\E}\left[\int_{t}^{\tau^{*}(.)}f_{s}(.)ds+L_{\tau^{*}(.)}(.)1_{[0,T[}(\tau^{*}(.))+\xi(.)1_{\{T\}}(\tau^{*}(.))|\widehat{\f}_{t}\right]
\end{equation}

Now Doob-Meyer's decomposition of the continuous supermartingale $Y_{t}(.)+\int_{0}^{t}f_{s}(.) ds$ yields an adapted continuous process $K(.)=(K_{t}(.))_{t\in[0,T]}$ and a continuous uniformly integrable martingale $M(.)=(M_{t}(.))_{t\in[0,T]}$ such that $$Y_{t}(.)=M_{t}(.)-\int_{0}^{t}f_{s}(.) ds-K_{t}(.),$$ where $K_{0}(.)=0$ and $K_{t}=K_{\tau^{*}(.)}$. The Skorohod condition and square integrability of $K_{T}(.)$ follow from arguments similar to the ones of \citep{Nichole} in the product space. Hence the $\widehat{\F}-$martingale $$M_{t}(.)=\widehat{\E}(M_{T}(.)|\widehat{\f}_{t})=\widehat{\E}\left(\xi(.)+\int_{0}^{T}f_{s}(.)ds-K_{T}(.)|\widehat{\f}_{t}\right)$$ is also square integrable, i.e. $\widehat{\E}(|M_{t}(.)|^{2})<\infty,\ t\in[0,T].$
Thus, we can use Theorem \ref{MRT} to find the process $\widehat{Z}(.)$ such that $M_{t}(.)=\int_{0}^{t}\widehat{Z}_{s}(.)ds,$ where ${\widehat{\E}}(\int_{0}^{T}|\widehat{Z}_{s}(.)|^{2}ds)<+\infty$.

Uniqueness of the solution can be achieved from Corollary 3.7 in \citep{Nichole} which is satisfied on the product space under assumptions (i'), (ii'), and (iii').

\end{proof}

\begin{rem}
 Under the assumptions (i'), (ii'), and (iii'),
 $Y_{t}(.)+\int_{0}^{t}{f_{s}(.)ds}$ is the value function of an optimal stopping problem with the payoff
 $$\int_{0}^{t}f_{s}(.)ds+L_{t}(.)1_{[0,T[}(t)+\xi(.)1_{\{T\}}(t),$$
 where $Y_t(.)$ is the solution of the RBSDE (\ref{parametrized 2}). Furthermore the stopping time
$\tau^{*}(.)=\inf{\{s\in[t,T] : Y_s(.)=L_s(.)\}}\wedge{T}$ is optimal, in the sense $$Y_{t}(.)=\widehat{\E}{\left[\int_{t}^{\tau^{*}(.)}{f_{s}(.)ds+L_{\tau^{*}(.)}(.)1_{[0,T[}(\tau^{*}(.))+\xi(.)1_{\{T\}}(\tau^{*}(.))}|\widehat{\f}_{t}\right]}.$$

Especially in the case $f\equiv0$, $Y_t(.)$, the solution of the RBSDE (\ref{parametrized 2}),
is the value function of an American contingent claim with the payoff $L_{t}(.)1_{[0,T[}(t)+\xi(.)1_{\{T\}}(t)$ and $\tau^{*}(.)$ is
the optimal stopping time for the buyer.
 \label{rem2} \end{rem}

\begin{rem}
An extension of Theorem \ref{mthm} to the case in which $f$ also depends on $y$ and $z$ with global Lipschitz continuity in these two variables can be obtained by means of the proof of Theorem 5.2 in \citep{Nichole}.
\end{rem}
 \subsection{RBSDE in an initially enlarged filtration}

We will now show that under suitable conditions on the parametrized payoff function $R$ in (\ref{newpay}), the corresponding value function is the solution of
a parametrized RBSDE on the same product space. For this purpose, consider the product space $(\widehat{\Omega},{\widehat{\f}},{\widehat{\F}},\widehat{\p})$ from (\ref{product}) where $\widehat{\p}=\p\otimes P^{G}$ and $P^{G}$ is the law of the random variable $G$
which carries the extra information.
We consider the RBSDE (\ref{parametrized 2}) with $f\equiv0$, $L(.)=L\alpha(.)$, and $\xi(.)=\xi\alpha_{T}(.),$ where
$L$ and $\xi$ are the barrier resp. final variable of the usual RBSDE (\ref{BSDE}). Then we obtain the following parametrized RBSDE

\begin{equation}
\left\{
\begin{aligned}
-dY_{t}(.)&=dK_{t}(.)-Z_{t}(.)dB^{\F}_{t},\quad\quad 0\leq t \leq T,\\
Y_{T}(.)&=\xi\alpha_{T}(.),\\
Y_{t}(.)&\geq L_{t}\alpha_{t}(.),\quad 0\leq t \leq T,\quad\quad\textstyle\int_{0}^{T}{(Y_{t}(.)-L_{t}\alpha_{t}(.))dK_{t}(.)}=0.\\
\end{aligned}
\right.
\label{dparametrized}\end{equation}

Since we work with two different filtrations in this section, we denote by $B^{\F}_{t}$ a Brownian motion w.r.t $\F$. From Theorem \ref{mthm} and Remark \ref{rem2} in the previous section, under conditions (i') and (iii') for $\xi\alpha_{T}(.)$ and $L_{t}\alpha_{t}(.)$, the RBSDES (\ref{dparametrized}) has a unique solution $(Y(.),Z(.),K(.))\in\widehat{\mathcal{S}}^2\times\widehat{\mathcal{H}}^2\times\widehat{\mathcal{I}}$ and
$Y_{t}(.)$ is the value function of an optimal stopping problem with the payoff $R_t(\cdot)=L_{t}\alpha_{t}(.)1_{[0,T[}( t)+\xi\alpha_{T}(.)1_{\{T\}}(t), t\in[0,T].$

\quad

Theorem \ref{thm1} motivates us to define $\widehat{Y}(.):=\frac{Y(.)}{\alpha(.)} $. Recall that, due to our hypotheses on $G$, $\alpha(.)$ is a positive continuous martingale, so that $\sup_{s\in[0,T]}{\frac{1}{\alpha^2_{s}(.)}}<\infty.$ This implies that our definition makes sense. We will prove that $\widehat{Y}(G)$ is the solution of an RBSDE that corresponds to the optimization problem in the enlarged filtration.
Note that for each $u\in \R,$ $\alpha(u)$ is a martingale w.r.t $\F$ and for each $t\in[0,T],$ it has an $\widehat{\f}-$measurable version.
Therefore from Proposition \ref{martingale}, $\{\alpha_{t}(.)\}_{t\in[0,T]}$ is a martingale w.r.t $\widehat{\F}$. If we suppose that it is $\widehat{\p}-$square integrable, then the martingale representation Theorem
\ref{MRT} yields $d\alpha_{t}(.)=\beta_{t}(.)dB^{\F}_{t}$, where $\beta(.)$ is an $\widehat{\F}-$predictable process which is square integrable with respect to $\widehat{\p}$.\\
 By Ito's formula, we get that $\widehat{Y}(.)$ satisfies in the following RBSDE:

\begin{equation}
\left\{
\begin{aligned}
-d{\widehat{Y}}_{t}(.)&=-\left[(\frac{\beta_{t}(.)}{\alpha_{t}(.)})^{2}\widehat{Y}_{t}(.)-{\frac{Z_{t}(.)}{\alpha^{2}_{t}(.)}}\beta_{t}(.)\right]dt+\frac{1}{\alpha_{t}(.)}dK_{t}(.)-\left[\frac{Z_{t}(.)}{\alpha_{t}(.)}-\frac{\beta_{t}(.)}{\alpha_{t}(.)}\widehat{Y}_{t}(.)\right]dB^{\F}_{t},\ \ 0\leq t\leq T,\\
\widehat{Y}_{T}(.)&=\xi,\\
\widehat{Y}_{t}(.)&\geq L_{t},\quad 0\leq t\leq T,\quad\quad\textstyle\int_{0}^{T}{(\widehat{Y}_{t}(.)-L_{t})\alpha_{t}(.)dK_{t}(.)}=0.
\end{aligned}
\right.
\label{idparametrized}\end{equation}

The Skorokhod condition has the stated form because
\begin{eqnarray*}
 \int_{0}^{T}{(\widehat{Y}_{t}(.)-L_{t})\alpha_{t}(.)d{K}_{t}(.)}&=&\int_{0}^{T}{(\frac{Y_{t}(.)}{\alpha_{t}(.)}-L_{t})\alpha_{t}(.)dK_{t}(.)}\\
 &=&\int_{0}^{T}{(Y_{t}(.)-L_{t}\alpha_{t}(.))dK_{t}(.)}=0, \quad\widehat{\p}-a.e.
\end{eqnarray*}

We now define $\widehat{K}(.)=\int_{0}^\cdot{\frac{1}{\alpha_{s}(.)}dK_{s}(.)}$ and $\widehat{Z}(.)=\frac{Z(.)}{\alpha(.)}-{\frac{\beta(.)}{\alpha(.)}}\widehat{Y}(.).$
Since $\alpha(.)$ is continuous in $t$ and positive, $\widehat{K}(.)$ is an increasing continuous process such that $\widehat{K}_{0}\equiv0$ and $d\widehat{K}_t(.)=\frac{dK_{t}(.)}{\alpha_{t}(.)}.$
Furthermore, $\widehat{Y}(.), \widehat{K}(.)$ and $\widehat{Z}(.)$ are $\widehat{\F}-$predictable processes. This follows from the $\widehat{\F}-$predictability of $Y(.), Z(.), K(.), \alpha(.)$, and $\beta(.)$. In addition, we have

\begin{eqnarray*}
 \int_{0}^{T}{\widehat{Z}_s}^{2}(.)ds &\leq & 2 \int_{0}^{T}\left(\frac{Z_s(.)}{\alpha_{s}(.)}\right)^{2}ds+2\int_{0}^{T}\left(\frac{\beta_s(.)}{\alpha_{s}(.)}\widehat{Y}_{s}(.)\right)^{2}ds\\
 &\leq &2\sup_{s\in[0,T]}{\frac{1}{\alpha^2_{s}(.)}}\int_{0}^{T} {Z^{2}_{s}(.)}ds+2\left(\sup_{s\in[0,T]}Y^{2}_{s}(.)\right)\left(\sup_{s\in[0,T]}{\frac{1}{\alpha^{4}_{s}(.)}}\right)\int_{0}^{T}{\beta^{2}_{s}(.)ds}<\infty, \quad \widehat{\p}-a.e.,
\end{eqnarray*}

because  $Y(.)$ is continuous 
 in $t$, $\alpha(.)$  continuous and strictly positive, and $Z(.)$ and $\beta(.)$ are square integrable in $\widehat\Omega\times[0,T]$. Thus the Ito integral process for $\widehat{Z}$ with respect to $B^{\F}$ is still defined and is a local martingale (see \citep{Oks}, p. 35). With similar arguments, it can be shown that

\begin{eqnarray*}
 \sup_{s\in[0,T]}|\widehat{Y}_s(.)|^{2} \leq \left(\sup_{s\in[0,T]}{\frac{1}{\alpha^2_{s}(.)}}\right)\left(\sup_{s\in[0,T]}|Y_{s}(.)|^{2}\right)<\infty, \quad \widehat{\p}-a.e.
\end{eqnarray*}

Furthermore, since $K(.)\in\widehat{\mathcal{I}}, $ we have

\begin{eqnarray*}
 {\widehat{K}}^{2}_{T}(.)\leq \left(\sup_{s\in[0,T]}{\frac{1}{\alpha^2_{s}(.)}}\right) K_{T}^{2}(.)<\infty, \quad \widehat{\p}-a.e.
\end{eqnarray*}

Now we introduce the following spaces, corresponding to a filtration $\mathbb H=(\mathcal{H}_{t})_{t\in[0,T]}$ on an arbitrary probability space:
\begin{eqnarray*}
\bar{\mathcal{H}}^2_{\mathbb H}&=& \{X: X = (X_{t})_{0\le t\le T}\,\, \mathbb H-\mbox{predictable process}, \displaystyle\int_0^T|X_{t}|^2 dt < \infty\},\\
\bar{\mathcal{S}}^2_{\mathbb H}&=&\{X: X= (X_{t}(.))_{0\le t\le T}\,\,\mbox{continuous}\,\,\mathbb H-\mbox{predictable process}, \displaystyle{\sup_{0\le t\le T}} |X_{t}|^2<\infty\},\\
\bar{\mathcal{I}}_{\mathbb H}&=&\{K: K= (K_{t})_{0\le t\le T}\,\,\mbox{increasing continuous process,}\,\, K_{0}=0,\,\, K_{T}\,\,\mathcal H_{T}-\mbox{measurable},\,\, K^{2}_{T}<\infty.\}
\end{eqnarray*}

Therefore $(\widehat{Y}_{t}(.),\widehat{Z}_{t}(.),\widehat{K}_{t}(.))_{0 \leq t \leq T}\in\bar{\mathcal{S}}^2_{\widehat{\F}}\times\bar{\mathcal{H}}^2_{\widehat{\F}}\times\bar{\mathcal{I}}_{\widehat{\F}}$ solves the RBSDE
\begin{equation}
\left\{
\begin{aligned}
-d{\widehat{Y}}_{t}(.)&=\frac{\beta_{t}(.)}{\alpha_{t}(.)}\widehat{Z}_{t}(.)dt+d{\widehat{K}}_{t}(.)-\widehat{Z}_{t}(.)dB^{\F}_{t},\quad\quad 0\leq t \leq T,\\
\widehat{Y}_{T}(.)&=\xi,\\
\widehat{Y}_{t}(.)&\geq L_{t},\quad 0\leq t \leq T,\quad\quad\textstyle\int_{0}^{T}{(\widehat{Y}_{t}(.)-L_{t})d\widehat{K}_{t}(.)}=0,
\end{aligned}
\right.
\label{newrbsde}\end{equation}

in $(\widehat{\Omega},{\widehat{\f}},{\widehat{\F}},\widehat{\p})$.

\quad

The Skorokhod condition follows from (\ref{idparametrized}), since
\begin{eqnarray*}
 \int_{0}^{T}{(\widehat{Y}_{t}(.)-L_{t})d\widehat{K}_{t}(.)}&=&\int_{0}^{T}{(\widehat{Y}_{t}(.)-L_{t})\frac{1}{\alpha^{2}_{t}(.)}\alpha_{t}(.)d{K}_{t}(.)}\\
 &\leq& (\sup_{t\in[0,T]}{\frac{1}{\alpha^{2}_{t}(.)}})\textstyle\int_{0}^{T}{(\widehat{Y}_{t}(.)-L_{t})\alpha_{t}(.)d{K}_{t}(.)}=0.
\end{eqnarray*}

The last inequality holds by $\widehat{Y}(.)\geq L$, and since $\alpha(.)$ is positive and continuous.
The following proposition recalls the canonical decomposition of a local martingale in the smaller filtration with respect to the larger one.
\begin{pro}
 Any $\F-$local martingale $M$ is a $\G-$semimartingale with canonical decomposition
 $$M_{t}=M^{G}_{t}+\int_{0}^{t}{\frac{d<M,\alpha_{.}(G)>_{s}}{\alpha_{s^{-}}(G)}},$$
 where $M^{G}$ is a $\G-$local martingale.
\end{pro}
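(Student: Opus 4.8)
The plan is to prove the classical Jacod decomposition formula for an $\F$-local martingale viewed in the initially enlarged filtration $\G$. First I would reduce to the case of a bounded $\F$-martingale by the usual localization: stop $M$ at a sequence of $\F$-stopping times that reduce it, so that on each stochastic interval we deal with a genuine (bounded) martingale, and then pass to the limit. Working under Jacod's density hypothesis (Assumption \ref{jacod_hypothesis}), recall that $\alpha_t(u)$ is, for each fixed $u$, a strictly positive $\F$-martingale, and that for an $\f_t\otimes\B(\R)$-measurable integrand the conditional expectation with respect to $\g_t$ is computed by the formula in Lemma \ref{zargari}, namely $\E[X_t(G)\mid\g_s]=\frac{1}{\alpha_s(G)}\E[X_t(u)\alpha_t(u)\mid\f_s]_{u=G}$.

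The heart of the argument is to identify the finite-variation part. Set $A_t:=\int_0^t \frac{d\langle M,\alpha_\cdot(G)\rangle_s}{\alpha_{s^-}(G)}$, where $\langle M,\alpha_\cdot(G)\rangle$ denotes the $\F$-predictable covariation of $M$ with the process $s\mapsto\alpha_s(G)$ (this process is $\F\otimes\sigma(G)$-adapted, hence $\G$-adapted, and one must check it is a well-defined $\F$-conditional quadratic covariation for a.e. value of the parameter). I would then show $M^G:=M-A$ is a $\G$-local martingale by verifying, for $s\le t$, the martingale identity $\E[M^G_t\mid\g_s]=M^G_s$. Via Lemma \ref{zargari} this is equivalent to
\[
\E\bigl[(M_t-A_t)\alpha_t(u)\mid\f_s\bigr]_{u=G}=\alpha_s(G)\,(M_s-A_s),
\]
so it suffices to establish, for $P^G$-a.e. $u$, the $\F$-identity
\[
\E\bigl[M_t\alpha_t(u)-\int_0^t \alpha_t(u)\,dA_s(u)\ \big|\ \f_s\bigr]
= M_s\alpha_s(u)-\int_0^s \alpha_s(u)\,dA_s(u),
\]
where $A_s(u)=\int_0^s \frac{d\langle M,\alpha_\cdot(u)\rangle_r}{\alpha_{r^-}(u)}$. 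The key computation is integration by parts: $d(M\alpha(u))=M_{-}\,d\alpha(u)+\alpha_{-}(u)\,dM+d\langle M,\alpha_\cdot(u)\rangle$, so the drift of $M\alpha(u)$ under $\F$ is exactly $d\langle M,\alpha_\cdot(u)\rangle = \alpha_{s^-}(u)\,dA_s(u)$; subtracting $\int_0^\cdot \alpha_{s^-}(u)\,dA_s(u)$ produces a genuine $\F$-local martingale (here one must be slightly careful and keep $\alpha_{s^-}(u)$ inside the integral rather than $\alpha_t(u)$, using that $A(u)$ is predictable of finite variation, so that $\alpha_t(u)\,dA_s(u)$ integrated and conditioned reproduces $\alpha_s(u)\,dA_s(u)$ on $[0,s]$ plus a martingale increment on $[s,t]$).

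I expect the main obstacle to be the measurability and integrability bookkeeping in the parameter $u$: one needs a jointly measurable version of the $\F$-predictable covariation $\langle M,\alpha_\cdot(u)\rangle$ in $(t,\omega,u)$, has to justify that substituting $u=G$ commutes with the stochastic and Lebesgue--Stieltjes integrals (so that $A_t(G)$ really equals $\int_0^t d\langle M,\alpha_\cdot(G)\rangle_s/\alpha_{s^-}(G)$ as a $\G$-process), and must control the integrability so that Lemma \ref{zargari} and the conditional Fubini steps apply after localization. Once the bounded case is settled, undoing the localization is routine: the stopping times $T_n$ reducing $M$ in $\F$ are also $\G$-stopping times, $M^{T_n}-A^{T_n}$ is a $\G$-martingale, and $A^{T_n}\to A$ locally, giving that $M-A$ is a $\G$-local martingale; this is exactly Jacod's theorem and, where needed, one may simply cite \cite{jacod2} for the general statement.
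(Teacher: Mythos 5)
Your proposal is correct in substance, but it does far more than the paper, which disposes of this proposition by citation alone (Theorem~2.5.c of Jacod's Grossissement Initial notes, together with the references to Amendinger et al.\ and Callegaro--Jeanblanc--Zargari); there is no written proof in the text to compare against line by line. What you reconstruct is essentially the standard proof of Jacod's decomposition, and the skeleton is sound: reduce by localization, then show that $M-A(G)$ is a $\G$-local martingale by using Lemma~\ref{zargari} to translate the $\G$-martingale property into the statement that $N(u):=(M-A(u))\,\alpha(u)$ is an $\F$-local martingale for $P^{G}$-a.e.\ $u$, which follows from the product rule since $\alpha_{s^-}(u)\,dA_s(u)=d\langle M,\alpha_\cdot(u)\rangle_s$ cancels the covariation term. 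Two small points deserve care in your key computation. First, the product rule produces the square bracket $[M,\alpha(u)]$, not the predictable bracket $\langle M,\alpha(u)\rangle$; the difference is a local martingale, so the identification of the drift survives, but you should say so (and the cross term $[A(u),\alpha(u)]$ with $A(u)$ predictable of finite variation is a local martingale by Yoeurp's lemma, which is needed when $\alpha(u)$ jumps). Second, your displayed ``$\F$-identity'' with $\int_0^t\alpha_t(u)\,dA_s(u)$ is garbled as written --- the clean statement is simply that $(M_t-A_t(u))\alpha_t(u)$ is an $\F$-local martingale --- though your parenthetical remark about keeping $\alpha_{s^-}(u)$ inside the integral shows you understand the correct version. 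Your emphasis on joint measurability of $\langle M,\alpha_\cdot(u)\rangle$ in $(t,\omega,u)$ and on the legitimacy of substituting $u=G$ is well placed: that bookkeeping is precisely where the technical content of Jacod's proof lies, and it is the part that your sketch, like the paper, ultimately delegates to the literature. What your route buys over the paper's bare citation is a self-contained argument that makes visible why the compensating drift has exactly the form $d\langle M,\alpha_\cdot(G)\rangle_s/\alpha_{s^-}(G)$, which is also the mechanism behind the Girsanov-type rewriting of the Brownian motion in equation~(\ref{brownian}).
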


\begin{proof}
 See Theorem 2.5.c in \citep{jacod2}. Also \citep{Amendinger} and \citep{Zargar}.
\end{proof}

The preceding proposition and the continuity of $\alpha(.)$ imply

\begin{equation}
 B^{\F}_{t}=B^{\G}_{t}+\int_{0}^{t}{\frac{d<B^{\F},\alpha_{.}(G)>_{s}}{\alpha_{s^{-}}(G)}}=B^{\G}_{t}+\int_{0}^{t}{\frac{\beta_{s}(G)}{\alpha_{s^{}}(G)}}ds.
\label{brownian}\end{equation}

Now consider $(\widehat{Y}_{t}(G),\widehat{Z}_{t}(G),\widehat{K}_{t}(G))_{0 \leq t \leq T}$. By Remark \ref{remprog}, it is a triplet of $\G-$predictable processes. Evaluating (\ref{newrbsde}) at $G$ and replacing $B^{\F}_{t}$ from the above proposition, $(\widehat{Y}(G),\widehat{Z}(G),\widehat{K}(G))\in\bar{\mathcal{S}}^2_{\G}\times\bar{\mathcal{H}}^2_{\G}\times\bar{\mathcal{I}}_{\G}$  will satisfy in the following RBSDE in $(\Omega,\f,\G,\p)$:

\begin{equation}
\left\{
\begin{aligned}
-d{\widehat{Y}}_{t}(G)&=d{\widehat{K}}_{t}(G)-\widehat{Z}_{t}(G)dB^{\G}_{t},\quad\quad 0\leq t \leq T,\\
\widehat{Y}_{T}(G)&=\xi,\\
\widehat{Y}_{t}(G)&\geq L_{t},\quad 0\leq t \leq T,\quad\quad\textstyle\int_{0}^{T}{(\widehat{Y}_{t}(G)-L_{t})d\widehat{K}_{t}(G)}=0.
\end{aligned}
\right.
\label{grbsde}\end{equation}

RBSDE (\ref{grbsde}) is an RBSDE in the initially enlarged filtration $\mathbb{G}$ with generator $f\equiv 0$. As we will see in the following section, it relates to our optimal stopping problem in the initially enlarged filtration. We will comment on the square-integrability of the solution components below.

RBSDE (\ref{newrbsde}) possesses a non-trivial driver independent of $y$. 
 Similarly to SDE we can apply Girsanov's theorem to get rid of it. To do this,
we set for $t\in[0,T]$ $q_{t}(.):=exp\left(\int_{0}^{t}{\frac{\beta_{s}(.)}{\alpha_{s}(.)}}dB^{\F}_{s}-\frac{1}{2}\int_{0}^{t}{(\frac{\beta_{s}(.)}{\alpha_{s}(.)})^{2}ds}\right).$
Then Girsanov's theorem implies that if $\frac{\beta(.)}{\alpha(.)}$ satisfies Novikov's condition which means
\begin{equation}\label{novikov}
\widehat{\E}\left(exp(\int_{0}^{T}{\frac{1}{2}(\frac{\beta_{s}(.)}{\alpha_{s}(.)})^{2}ds})\right)<\infty,
\end{equation}
then $q_{T}(.)$ is a likelihood ratio which defines a new probability measure on $(\widehat{\Omega},\widehat{\f})$ by $\widehat{Q}(A)=\widehat{\E}\left(q_{T}(.)1_{\{A\}}(.)\right),\ A\in\widehat{\f}$ , under which $\widehat{B}_{t}(.):=B^{\F}_{t}-\int_{0}^{t}{\frac{\beta_{s}(.)}{\alpha_{s}(.)}}ds$
is a Brownian motion. We now suppose that (\ref{novikov}) is satisfied. Under the probability measure $\widehat{Q}$ on the space $(\widehat{\Omega},\widehat{\f},\widehat{\F})$ we rewrite (\ref{newrbsde}) to get the following RBSDE with standard parameters $(\xi,0,L)$ w.r.t the Brownian motion $\widehat{B}(.),$
\begin{equation}
\left\{
\begin{aligned}
-d{\widehat{Y}}_{t}(.)&=d{\widehat{K}}_{t}(.)-\widehat{Z}_{t}(.)d{\widehat{B}}_{t}(.),\quad\quad 0\leq t \leq T,\\
\widehat{Y}_{T}(.)&=\xi,\\
\widehat{Y}_{t}(.)&\geq L_{t},\quad 0\leq t \leq T,\quad\quad\textstyle\int_{0}^{T}{(\widehat{Y}_{t}(.)-L_{t})d\widehat{K}_{t}(.)}=0.
\end{aligned}
\right.
\label{Qrbsde}\end{equation}

Note that $\widehat{B}(G)=B^{\G}$ from (\ref{brownian}).
We know from \citep{Nichole} that if
$$\mbox{(i*)} \,\,\xi\in\widehat{\mathcal{L}}^2_{\widehat{Q}},$$ and
$$\mbox{(iii*)}\,\, L\in\widehat{\mathcal{S}}^2_{\widehat{Q}},$$
then (\ref{Qrbsde}) has a unique solution $(\widehat{Y}(.),\widehat{Z}(.),\widehat{K}(.))\in\widehat{\mathcal{S}}^2_{\widehat{Q}}\times\widehat{\mathcal{H}}^2_{\widehat{Q}}\times\widehat{\mathcal{I}}_{\widehat{Q}}$.
Moreover, since $\alpha(.)$ is strictly positive, we have
 \begin{eqnarray*}
  d\alpha_{t}(.)=\beta_{t}(.)dB^{\F}_{t}=\frac{\beta_{t}(.)}{\alpha_{t}(.)}\alpha_{t}(.)dB^{\F}_{t}.
 \end{eqnarray*}

Thus, Ito's formula gives

$$\alpha(.)=exp\left(\int_{0}^\cdot{\frac{\beta_{s}(.)}{\alpha_{s}(.)}}dB^{\F}_{s}-
\frac{1}{2}\int_{0}^\cdot{(\frac{\beta_{s}(.)}{\alpha_{s}(.)})^{2}ds}\right)=q(.),$$

and $\alpha(.)$ acts as a likelihood ratio between $\widehat{\p}$ and $\widehat{Q}$.

\begin{rem}
From the definition of $\widehat{Q}$, it can be easily seen that the assumptions (i*) and (iii*) are equivalent with (i') and (iii') for $\xi\alpha(.)$ and $L\alpha(.)$, if $\alpha(.)$ is bounded $\widehat{\p}-$a.e.
\label{frem}\end{rem}

Therefore we may state that an initial enlargement of a filtration in optimal stopping problems corresponds to
a change of a measure in a parametrized RBSDE on the product of the underlying probability space and the state space in which the additional information $G$ takes its values. See \citep{Nick} for a complete discussion. Novikov's condition is satisfied for example if
$\frac{\beta_{t}(.)}{\alpha_{t}(.)}$ is $\widehat{\p}-$a.e. bounded. This condition has been studied in \citep{Anne}. But it is restrictive, and it will be seen below that it does not hold in simple examples.

\quad

Let us finally discuss conditions under which RBSDE (\ref{dparametrized}) has a unique solution.
Since we need to refer to these conditions later, let us collect them in the following assumption.

\begin{assum}\label{boundedness-alpha}
 (1) $\xi\in {\mathcal{L}}^{2}$;\\
 (2) $L\in{\mathcal{S}}^{2}$; \\
 (3) $\alpha:\widehat{\Omega}\times[0,T]\rightarrow\R^{+}$ is bounded $\widehat{\p}-a.e.$
\end{assum}

\begin{thm}
Under Assumption (\ref{boundedness-alpha}), there exists a unique solution for RBSDE (\ref{dparametrized}). It coincides with the value function of an American contingent claim with the payoff $L_{t}\alpha_{t}(.)1_{[0,T[}(t)+\xi\alpha_{T}(.)1_{\{T\}}(t)$  and $\tau^{*}(.)=\inf{\{s\in[t,T] : Y_s(.)=L_s\alpha_{s}(.)\}}\wedge{T}$ is
the optimal stopping time for the buyer. Furthermore if Novikov's condition (\ref{novikov}) is satisfied, then RBSDE (\ref{Qrbsde}) has a unique solution.
\end{thm}
\begin{proof}
Under Assumption (\ref{boundedness-alpha}), the integrability conditions (i') and (iii') from Section \ref{shart2} are fulfilled by $\xi(.)=\xi\alpha_T(.)$ and $L(.)=L\alpha(.)$. Thus by Theorem \ref{mthm} and Remark \ref{rem2}, there exists a unique solution for RBSDE (\ref{dparametrized}), and it coincides with the value of
the corresponding optimal stopping problem on the product space. Existence and uniqueness of the solutions of RBSDE (\ref{Qrbsde}) follow from Remark \ref{frem}.
\end{proof}

The following example illustrates that for $t>0$ boundedness of $\frac{\beta_t}{\alpha_t}$ may be easily missed, though $\alpha_t$ is bounded.
\begin{exa}
 Let $G=B_{T}+X$, where $B_{T}$ is the endpoint of a one-dimensional $\F-$Brownian motion with $B_{0}=0$ and $X$ a random variable with centered normal distribution with variance $\epsilon>0$ which is independent of $\F.$
In this case the buyer has noisy information about $B_{T}.$ Due to independence, we know that $G$ has a normal law with mean zero and variance $T+\epsilon$. Therefore we have for all $t\in[0,T]$
 \begin{eqnarray*}
  \p\left(B_{T}+X\in du|\f_{t}\right)&=&\p\left(B_{T}+X-B_{t}+B_{t}\in du|\f_{t}\right)\\
  &=&\p\left(B_{T}+X-B_{t}\in du-y\right)|_{y=B_{t}}\\
  &=&\frac{1}{\sqrt{2\pi(T-t+\epsilon)}}exp\left(-\frac{(u-B_t)^2}{2(T-t+\epsilon)}\right)du\\
  &=&\alpha_{t}(u)\p(B_{T}+X\in du),\\
 \end{eqnarray*}
where $\alpha_{t}(u)=\sqrt{\frac{(T+\epsilon)}{(T-t+\epsilon)}}exp\left(-\frac{(u-B_t)^2}{2(T-t+\epsilon)}+\frac{u^2}{2(T+\epsilon)}\right), u\in\R$. So here the conditional law of $G$ given $\f_{t}$ is absolutely continuous with respect to the law of $G$ for all $t\in[0,T].$
Note that  for all $u\in\R,\ \alpha_{0}({u})=1$, that $\alpha$ is continuous in $(t,u)\in(0,T]\times\R$, and that by $\epsilon>0$
we have
$$\lim_{u\rightarrow\pm\infty}\alpha_{t}(u)=0,\ \ \p-a.s.$$
Therefore, for all $t\in[0,T], \alpha_{t}(\cdot)$ is bounded $\widehat{\p}-a.e.$
It is known from \citep{Imkeller} that $\beta_{t}(.)$ is the Malliavin trace of $\alpha_{t}(.)$. So we have $\frac{\beta_{t}(.)}{\alpha_{t}(.)}=D_{t}\ln(\alpha_{t}(.)),\ \ t\in[0,T].$ Therefore we obtain,
 $\frac{\beta_{t}(u)}{\alpha_{t}(u)}=\frac{1}{(T-t+\epsilon)}(u-B_{t})$ which is not bounded.

\end{exa}


\subsection{American contingent claims with asymmetric information and parametrized RBSDE}

 In this subsection we will rigorously establish the link between optimal solutions for American contingent claims for which the buyer has privileged information and
 solutions of RBSDE w.r.t. enlarged filtrations.

\begin{lem}
 Under the assumptions (i') and (iii') from Section \ref{shart2} we have for $t\in[0,T]$
 $$V^{G}_{t}= \esssup_{\tau^{'}\in\T_{t,T}{(\G)}}\E{\left[R(\tau^{'})|\g_{t}\right]}=\frac{Y_{t}(G)}{\alpha_{t}(G)}=\widehat{Y}_{t}(G),$$
where $Y(.)$ is the solution of the RBSDE (\ref{dparametrized}) and $\widehat{Y}(G)$ satisfies RBSDE (\ref{grbsde}).
 Furthermore, $\tau^{*}:\widehat{\Omega}\rightarrow\R^{+}$ defined by $$\tau^{*}(G)=\inf{\{s\in[t,T] : Y_s(G)=L_s\alpha_s(G)\}}\wedge{T}=\inf{\{s\in[t,T] : \widehat{Y}_s(G)=L_s\}}\wedge{T}$$ is the optimal stopping time for the buyer after time $t$.

\label{lem1}\end{lem}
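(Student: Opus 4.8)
The plan is to assemble the statement from pieces that are already in place. The first identity $V^{G}_{t}=\frac{1}{\alpha_{t}(G)}\bigl(\esssup_{\tau(.)\in\T_{t,T}(\widehat{\F})}\widehat{\E}[R(.,\tau(.))|\widehat{\f}_{t}]\bigr)_{G}$ is exactly Theorem \ref{thm1}, so nothing new is needed there. The point is therefore to recognize the inner essential supremum as $Y_{t}(.)$, the first component of the solution of the parametrized RBSDE (\ref{dparametrized}). To do this I would invoke Remark \ref{rem2} with $f\equiv 0$, parametrized barrier $L_{t}(.)=L_{t}\alpha_{t}(.)$ and terminal value $\xi(.)=\xi\alpha_{T}(.)$: it says precisely that $Y_{t}(.)$ is the value function of the optimal stopping problem on the product space with payoff $R_{t}(.)=L_{t}\alpha_{t}(.)1_{[0,T[}(t)+\xi\alpha_{T}(.)1_{\{T\}}(t)$, i.e. $Y_{t}(.)=\esssup_{\tau(.)\in\T_{t,T}(\widehat{\F})}\widehat{\E}[R(.,\tau(.))|\widehat{\f}_{t}]$. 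Combining this with Theorem \ref{thm1} gives $V^{G}_{t}=\frac{Y_{t}(G)}{\alpha_{t}(G)}$.

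Next I would identify this quotient with $\widehat{Y}_{t}(G)$. By definition $\widehat{Y}(.)=Y(.)/\alpha(.)$, so $\frac{Y_{t}(G)}{\alpha_{t}(G)}=\widehat{Y}_{t}(G)$ is immediate, but I should also record that $\widehat{Y}(G)$ is the first component of a genuine solution of RBSDE (\ref{grbsde}) in the enlarged filtration $\G$. That is exactly the content of the computation carried out above the proposition: applying Ito's formula to $Y(.)/\alpha(.)$ produces (\ref{idparametrized}), then (\ref{newrbsde}), and finally — after substituting the canonical decomposition (\ref{brownian}) of $B^{\F}$ as a $\G$-semimartingale and evaluating at $u=G$ — the triplet $(\widehat{Y}_{t}(G),\widehat{Z}_{t}(G),\widehat{K}_{t}(G))$ solves (\ref{grbsde}). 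The $\G$-progressive measurability of this triplet follows from Remark \ref{remprog}, and the Skorokhod and integrability conditions were verified in the derivation of (\ref{newrbsde}). So I would simply cite those computations rather than redo them.

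For the optimality of $\tau^{*}$, I would apply the optimal-stopping part of Remark \ref{rem2} to the parametrized RBSDE with $f\equiv 0$: it gives that $\tau^{*}(.)=\inf\{s\in[t,T]\,:\,Y_{s}(.)=L_{s}\alpha_{s}(.)\}\wedge T$ is optimal, in the sense $Y_{t}(.)=\widehat{\E}[R(.,\tau^{*}(.))|\widehat{\f}_{t}]$. Since $\alpha(.)$ is strictly positive and continuous, the level set $\{Y_{s}(.)=L_{s}\alpha_{s}(.)\}$ coincides with $\{\widehat{Y}_{s}(.)=L_{s}\}$, which justifies the second expression for $\tau^{*}$. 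Evaluating at $u=G$, using Proposition \ref{my lemma} to see that $\tau^{*}(G)\in\T_{t,T}(\G)$ and Lemma \ref{zargari} (as in the proof of Theorem \ref{thm1}) to pass from the $\widehat{\f}_{t}$-conditional expectation of $R(.,\tau^{*}(.))$ back to $\E[R(\tau^{*}(G))|\g_{t}]$, one obtains $V^{G}_{t}=\E[R(\tau^{*}(G))|\g_{t}]$, i.e. $\tau^{*}(G)$ attains the essential supremum defining $V^{G}_{t}$.

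\textbf{Main obstacle.} The bookkeeping is routine once Theorem \ref{thm1} and Remark \ref{rem2} are granted; the one point that needs genuine care is the measurability and ''parametrized-to-pointwise'' passage — namely justifying that $\esssup_{\tau(.)}\widehat{\E}[R(.,\tau(.))|\widehat{\f}_{t}]$ evaluated at $G$ equals the $\G$-essential supremum over $\T_{t,T}(\G)$, and that the optimizing $\tau^{*}(.)$ restricted to $u=G$ really is optimal for the enlarged problem rather than merely for almost every fixed parameter. This is handled exactly as in the proofs of Theorems \ref{thm1} and \ref{thm2}: the family $\{\widehat{\E}[R(.,\tau(.))|\widehat{\f}_{t}]:\tau(.)\in\T_{t,T}(\widehat{\F})\}$ is measurable in $(\omega,u)$ and directed upward (Proposition \ref{prokomaki}), so by Neveu's theorem the essential supremum is an increasing limit along a sequence $\tau_{n}(.)$, and $\tau^{*}(.)$ realizes the limit $\widehat{\p}$-a.e., hence the identity survives substitution of $u$ by $G(\omega)$.
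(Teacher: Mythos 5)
Your proposal is correct and follows essentially the same route as the paper: cite Theorem \ref{thm1} for the first identity, invoke Remark \ref{rem2} (with $f\equiv 0$, barrier $L\alpha(.)$, terminal value $\xi\alpha_{T}(.)$) to recognize the inner essential supremum as $Y_{t}(.)$ and to get the optimal stopping time, then recall the definition $\widehat{Y}(.)=Y(.)/\alpha(.)$ and the derivation of (\ref{grbsde}). The only difference is that you spell out the substitution $u\mapsto G$ and the optimality of $\tau^{*}(G)$ for the $\G$-problem more explicitly than the paper does, which is added care rather than a different argument.
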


 \begin{proof}
  Theorem \ref{thm1} gives
  \begin{eqnarray*}
   V^{G}_{t}={\esssup_{\tau^{'}\in\T_{t,T}{(\G)}}\E{\left[R(\tau^{'})|\g_{t}\right]}}={\frac{1}{\alpha_{t}(G)}{\left(\esssup_{\tau(.)\in\T_{t,T}{(\widehat{\F})}}\widehat{\E}\left[R(.,\tau(.))|\widehat{\f}_{t}\right]\right)}_{G}}.
  \end{eqnarray*}
  Remark \ref{rem2} implies
  $$Y_{t}(.)=\esssup_{\tau(.)\in\T_{t,T}{(\widehat{\F})}}\widehat{\E}\left[R(.,\tau(.))|\widehat{\f}_{t}\right].$$

  Furthermore, $$\tau^{*}(.)=\inf{\{s\in[t,T] : Y_s(.)=L_s\alpha_s(.)\}}\wedge{T}$$ is the optimal stopping time. The proof is completed by recalling
  $\widehat{Y}(G)=\frac{Y(G)}{\alpha(G)}$ from the definition in the previous section.

 \end{proof}

\begin{cor}
 The previous lemma implies in particular that $$V^{G}=\esssup_{\tau^{'} \in \mathcal{T}_{0,T}\left(\mathbb{G}\right)}\E\left[ R(\tau^{'})|\g_{0}\right]=Y_{0}(G)=\widehat{Y}_{0}(G),$$ since $\alpha_{0}\equiv 1$. Therefore, the value of the American contingent claim with extra information is given by
 the initial solution of the parametrized RBSDE (\ref{dparametrized}) evaluated at $G$.
\label{value}\end{cor}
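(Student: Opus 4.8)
The plan is to read off the statement as the $t=0$ specialization of Lemma \ref{lem1}, so the only work is to verify that the normalization $\alpha_0\equiv 1$ holds and that the conditioning $\sigma$-algebra in the definition of $V^G$ matches the one appearing in Lemma \ref{lem1}. First I would recall from Remark \ref{r2.1} that $\g_0=\sigma(G)$, so that
$$V^G=\esssup_{\tau\in\mathcal{T}_{0,T}(\G)}\E[R(\tau)\,|\,\g_0]=V^G_0$$
in the notation of Lemma \ref{lem1}. Hence it suffices to evaluate the conclusion of that lemma, namely $V^G_t=Y_t(G)/\alpha_t(G)=\widehat Y_t(G)$, at $t=0$.

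Next I would check that $\alpha_0\equiv 1$. Since $\f_0$ is trivial by assumption, the regular conditional law $P_0^G(\omega,\cdot)$ of $G$ given $\f_0$ coincides $\p$-a.s.\ with the unconditional law $P^G$; consequently $\alpha_0(u)=\dfrac{dP_0^G(u,\omega)}{dP^G(u)}=1$ for $P^G$-a.e.\ $u$ and $\p$-a.e.\ $\omega$. (This is also consistent with the fact, recalled after Assumption \ref{jacod_hypothesis}, that $\{\alpha_t(u)\}_{t\in[0,T]}$ is an $\F$-martingale started at a $\p$-a.s.\ constant value, and that $\widehat\p=\p\otimes P^G$.) Plugging $t=0$ and $\alpha_0(G)=1$ into the identity of Lemma \ref{lem1} yields
$$V^G=V^G_0=\frac{Y_0(G)}{\alpha_0(G)}=Y_0(G)=\widehat Y_0(G),$$
which is exactly the claim. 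The statement about the optimal stopping time would likewise follow by specializing the corresponding assertion in Lemma \ref{lem1} to $t=0$, using $\alpha_0\equiv 1$ to identify the two expressions for $\tau^*(G)$.

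There is essentially no obstacle here: the corollary is a direct consequence of Lemma \ref{lem1}, and the only point requiring a word of justification is the reduction $\alpha_0\equiv 1$, which rests solely on the triviality of $\f_0$. I would keep the proof to two or three lines, citing Lemma \ref{lem1}, Remark \ref{r2.1}, and the triviality of $\f_0$, and noting that the definition $\widehat Y(\cdot)=Y(\cdot)/\alpha(\cdot)$ makes the two right-hand expressions coincide at time $0$.
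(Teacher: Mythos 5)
Your proposal is correct and follows exactly the paper's (implicit) argument: the corollary is stated as a direct specialization of Lemma \ref{lem1} to $t=0$, using $\g_0=\sigma(G)$ and $\alpha_0\equiv 1$ from the triviality of $\f_0$. Your added justification of $\alpha_0\equiv 1$ is a correct and welcome elaboration of the one-line reason the paper gives.
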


\begin{lem}
 Under assumptions (i') and (iii') from Section \ref{shart2} we have for $t\in[0,T]$
 \begin{equation}
  \esssup_{\tau^{'}\in\T_{t,T}{(\G)}}\E{\left[R(\tau^{'})|\f_{t}\right]}=\int_{\R}{Y_{t}(u)dP^{G}(u)}.
\label{equ}\end{equation}

 where $Y(.)$ is the solution of the RBSDE (\ref{dparametrized}) and $\tau^{*}(G)$ the optimal stopping time for the buyer after time $t$.
 \label{lem2}\end{lem}
\begin{proof}
  The proof follows easily from Theorem \ref{thm2} and Remark \ref{rem2}.
 \end{proof}

The following example exhibits a more explicit description of the value of an American call option with additional information.
\begin{exa}
 Consider an American call option with payoff $R(t)=(S_{t}-K)^{+},$ where $K$ is the strike price. The stock price process $S$  satisfies for $t\in[0,T]$
 $$dS_{t}=\mu S_{t}dt+\sigma S_{t}dB_{t},$$ where $\mu$ is the drift, $\sigma>0$ the volatility.
Suppose that $G$ is a random variable such that $\alpha$ is bounded $\p\otimes P^G-a.e$. From Theorem \ref{thm1}, we have for $t\in[0,T]$
 $$V^{G}_{t}={\frac{1}{\alpha_{t}(G)}{\left(\esssup_{\tau(.)\in\T_{t,T}{(\widehat{\F})}}\widehat{\E}\left[(S_{\tau(.)}-K)^{+}\alpha_{\tau(.)}(.)|\widehat{\f}_{t}\right]\right)}_{G}}.$$
 We define $$V_{t}(.):=\esssup_{\tau(.)\in\T_{t,T}{(\widehat{\F})}}\widehat{\E}\left[(S_{\tau(.)}-K)^{+}\alpha_{\tau(.)}(.)|\widehat{\f}_{t}\right],\quad t\in[0,T].$$
From known results about the Snell envelope, we have $\tau^{*}(.)=\inf{\{s\in[t,T] : V_{s}(.)=L_s\alpha_s(.)\}}\wedge{T}$ is optimal in the sense
 $$V^{G}_{t}={\frac{1}{\alpha_{t}(G)}{\left(\widehat{\E}\left[(S_{\tau^{*}(.)}-K)^{+}\alpha_{\tau^{*}(.)}(.)|\widehat{\f}_{t}\right]\right)}_{G}}.$$
Now from Proposition \ref{my proposition}, $$V^{G}_{t}={\frac{1}{\alpha_{t}(G)}}\E\left[(S_{\tau^{*}(u)}-K)^{+}\alpha_{\tau^{*}(u)}(u)|\f_{t}\right]_{u=G},\ \ \p-a.s.$$
The process $S$ is a semimartingale. So from Tanaka's formula the following decomposition for $V^{G}$ is obtained for $t\in[0,T]$:
$$\alpha_t(G)\,\,V^{G}_{t}=(S_{0}-K)^{+}\alpha_{t}(G)+\E[\alpha_{\tau^{*}(u)}(u)\int_{0}^{\tau^{*}(u)}I\{S_{s}>K\}dS_{s}|\f_{t}]_{u=G}+\frac{1}{2}{\E[\alpha_{\tau^*(u)}(u) l^{K}_{\tau^{*}(u)}(S)|\f_{t}]_{u=G}},\ \ \p-a.s.$$
where $l^{K}(S)$ is the local time of $S$ at $K$. Since in particular $\alpha_{0}(G)=1$ and $\f_{0}$ is trivial, we have
$$V^{G}=(S_{0}-K)^{+}+\E[\alpha_{\tau^{*}(u)}(u)\int_{0}^{\tau^{*}(u)}I\{S_{s}>K\}dS_{s}]_{u=G}+\frac{1}{2}{\E[\alpha_{\tau^{*}(u)}(u)\,l^{K}_{\tau^{*}(u)}(S)]_{u=G}},\ \ \p-a.s.$$
On the other hand $S_{t}=S_{0}e^{\sigma B_{t}+(\mu-\frac{1}{2}{\sigma}^2)t}, t\in[0,T].$ Therefore $L=(S-K)^{+}$, $\xi=(S_{T}-K)^{+}$ and $\alpha$ satisfies Assumption
\ref{boundedness-alpha} since $e^{\sigma B}$ is a continuous function and $\E(e^{\sigma B_{t}})=e^{\frac{1}{2}t^{2}\sigma^{2}}<\infty$ for each $t\in[0,T].$ Hence Lemma \ref{lem1}
provides a representation for the solution of RBSDE (\ref{grbsde}) with barrier $(S-K)^{+}$ and final value $(S_{T}-K)^{+}$, where $S$ is a geometric Brownian motion.
\end{exa}

 \begin{rem}
 We have $\widehat{Y}(.)=\frac{Y(.)}{\alpha(.)}$, so if we replace $Y(\cdot)$ by $\widehat{Y}(\cdot)\alpha(\cdot)$ in (\ref{equ}), we get
 $$ \esssup_{\tau^{'}\in\T_{t,T}{(\G)}}\E{\left[R(\tau^{'})|\f_{t}\right]}=\int_{\R}{\widehat{Y}_{t}(u)\alpha_{t}(u)dP^{G}(u)}=\E[\widehat{Y}_{t}(G)|\f_{t}].$$
 where $\widehat{Y}(G)$ solves the RBSDE (\ref{grbsde}) in the initially enlarged filtration. The last equation is due to the definition of $\alpha(\cdot)$.
 \label{rem3}\end{rem}

\begin{rem}
 Under Assumption (\ref{boundedness-alpha}), RBSDE (\ref{dparametrized}) has a unique solution with first component $Y(.)$. Furthermore, RBSDE (\ref{grbsde}) has a unique solution whose first component
 coincides with $V^{G}$. On the other hand, from Lemma \ref{lem1} we have $V^{G}=\widehat{Y}(G),\ \p-a.s.$ Thus $\widehat{Y}(G)$ is the unique solution of RBSDE (\ref{grbsde}). 
  However, square integrability of other components of the solution remains open. They are not necessarily unique, being derived from the Doob-Meyer decomposition for continuous supermartingales, as shown in \citep{Nichole}.
\end{rem}

\section{Cost of additional information}\label{sec4}
For American contingent claims, the buyer has to select a stopping time $\tau\in \T_{0,T}$ at which he exercises his option in such a way that the expected payoff $R(\tau)$ is maximized. If he has privileged information, he has access to a larger set of exercise times leading to a higher expected payoff. The value of the additional information
can be interpreted as the price he should pay to obtain it. From a utility indifference point of view, the price should be defined as the difference of the maximal expected payoff the buyer receives with additional information and the maximal expected payoff without.

\subsection{Definition and primary results}

To investigate this value in our framework. We denote the cost of the extra information with $CEI$, and define more formally
\begin{defn}
$$CEI(t):=\esssup_{\tau^{'}\in\T_{t,T}{(\G)}}\E{\left[R(\tau^{'})|\g_{t}\right]}-\esssup_{\tau\in\T_{t,T}{(\F)}}\E{\left[R(\tau)|\f_{t}\right]},\quad t\in[0,T],$$
and
\begin{eqnarray*}
CEI:= CEI(0) = \esssup_{\tau^{'}\in\T_{0,T}{(\G})}\E{\left[R(\tau^{'})|\sigma(G)\right]}-\sup_{\tau\in\T_{0,T}{(\F)}}\E{\left[R(\tau)\right]}.
\end{eqnarray*}

The last equation follows from the triviality of $\f_{0}$ and $\g_{0}=\sigma(G)$ ( see Remark \ref{r2.1}). We call
$CEI(\cdot)$ the \emph{value function} of the additional information.
\end{defn}

We have for $t\in[0,T]$
$$CEI(t)=\left(\esssup_{\tau^{'}\in\T_{t,T}{(\G)}}\E{\left[R(\tau^{'})|\g_{t}\right]}-\esssup_{\tau^{'}\in\T_{t,T}{(\G)}}\E{\left[R(\tau^{'})|\f_{t}\right]}\right)+\left(\esssup_{\tau^{'}\in\T_{t,T}{(\G)}}\E{\left[R(\tau^{'})|\f_{t}\right]}-\esssup_{\tau\in\T_{t,T}{(\F)}}\E{\left[R(\tau)|\f_{t}\right]}\right).$$
The second expression is a non-negative random variable. We prove that the expectation of the first expression is also positive and thus $\E[CEI(t)]$ is a positive quantity. 
By the tower property of conditional expectation we have
\begin{eqnarray*}
 \esssup_{\tau^{'}\in\T_{t,T}{(\G)}}\E{\left[R(\tau^{'})|\f_{t}\right]}=\esssup_{\tau^{'}\in\T_{t,T}{(\G)}}\E{\left[\E{\left[R(\tau^{'})|\g_{t}\right]}|\f_{t}\right]}\leq \esssup_{\tau^{'}\in\T_{t,T}{(\G)}}\E{\left[V_{t}^{G}|\f_{t}\right]}={\E\left[V_{t}^{G}|\f_{t}\right]},\quad\p-a.s.
\end{eqnarray*}
Therefore we obtain that $\E[CEI(t)]\geq 0$ for $t\geq 0$.

\quad

If we suppose again that $\F$ is a Brownian filtration as in section 4, we are able to link $CEI(t)$ to RBSDE as follows:
\begin{cor}
 Under Assumption (\ref{boundedness-alpha}), Lemma \ref{lem1} and Remark \ref{rem1}  yield the equation
\begin{equation}
 CEI(t)=\frac{Y_{t}(G)}{\alpha_{t}(G)}-Y_{t}=\widehat{Y}_{t}(G)-Y_{t},\quad t\in[0,T],
\label{CEIt}\end{equation}\\
where $Y(.)$ is the solution of (\ref{dparametrized}), $\widehat{Y}(G)$ the solution of (\ref{grbsde}), and $Y$ is the solution of the RBSDE

\begin{equation}
\left\{
\begin{aligned}
-dY_{t}&=dK_{t}-Z_{t}dB^{\F}_{t},\quad\quad 0\leq t \leq T,\\
Y_{T}&=\xi,\\
Y_{t}&\geq L_{t},\quad 0\leq t \leq T,\quad\quad\textstyle\int_{0}^{T}{(Y_{t}-L_{t})dK_{t}}=0.
 \end{aligned}
 \right.
\end{equation}
Since in particular $\alpha_{0}(G)=1$, we can express $CEI$ as the difference of the initial values of solutions of two RBSDE, namely
\begin{equation}
 CEI=Y_{0}(G)-Y_{0}=\widehat{Y}_{0}(G)-Y_{0}.
\label{CEI}\end{equation}
\end{cor}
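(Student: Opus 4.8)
The plan is to read off each of the two essential suprema in the definition of $CEI(t)$ from representations already established, and then subtract; everything reduces to Lemma \ref{lem1}, Remark \ref{rem1}, and a check that Assumption \ref{boundedness-alpha} supplies the integrability needed for the relevant RBSDE to have (unique) solutions.

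First I would apply Lemma \ref{lem1}. Its hypotheses are conditions (iv') and (vi') of Section \ref{shart2} for the data $\xi(\cdot)=\xi\alpha_T(\cdot)$ and $L(\cdot)=L\alpha(\cdot)$, and these are implied by Assumption \ref{boundedness-alpha}: boundedness of $\alpha$ together with $\xi\in L^2(\Omega,\f_T,\p)$ and $\E(\sup_{0\le t\le T}L_t^2)<\infty$ gives $\xi\alpha_T(\cdot)\in L^2(\widehat\Omega,\widehat\f_T,\widehat\p)$ and $\widehat\E(\sup_{0\le t\le T}(L_t\alpha_t(\cdot))^2)<\infty$, while $L_T\alpha_T(\cdot)\le\xi\alpha_T(\cdot)$ since $\alpha_T(\cdot)>0$ and $L_T\le\xi$. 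Hence RBSDE (\ref{dparametrized}) has a unique solution, so $Y(\cdot)$ and $\widehat Y(G)=Y(G)/\alpha(G)$ (the solution of (\ref{grbsde})) are well defined, and Lemma \ref{lem1} yields
\[
\esssup_{\tau\in\T_{t,T}(\G)}\E[R(\tau)|\g_t]=\frac{Y_t(G)}{\alpha_t(G)}=\widehat Y_t(G),\qquad t\in[0,T].
\]

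Next I would treat the reference-filtration term. Parts (1) and (2) of Assumption \ref{boundedness-alpha} are precisely conditions (iv) and (vii) of Section \ref{shart}, which together with the standing hypothesis $L_T\le\xi$ guarantee a unique solution $(Y,Z,K)$ of the RBSDE displayed in the statement, i.e. of (\ref{BSDE}) with driver $f\equiv 0$, barrier $L$ and terminal value $\xi$. By Remark \ref{rem1} (equivalently Proposition \ref{optimal} with $f\equiv 0$) its first component is the value function of the American contingent claim of Definition \ref{payoff}, namely
\[
Y_t=\esssup_{\tau\in\T_{t,T}(\F)}\E[R(\tau)|\f_t],\qquad t\in[0,T].
\]
Subtracting the two displays in the definition of $CEI(t)$ gives (\ref{CEIt}). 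For (\ref{CEI}) I would then set $t=0$: since $\alpha_0\equiv 1$ one has $\widehat Y_0(G)=Y_0(G)$, and since $\f_0$ is trivial and $\g_0=\sigma(G)$ (Remark \ref{r2.1}) the definition of $CEI=CEI(0)$ reduces to $\esssup_{\tau\in\T_{0,T}(\G)}\E[R(\tau)|\sigma(G)]-\sup_{\tau\in\T_{0,T}(\F)}\E[R(\tau)]$, so (\ref{CEIt}) at $t=0$ reads $CEI=Y_0(G)-Y_0=\widehat Y_0(G)-Y_0$.

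This corollary is essentially bookkeeping, so I do not expect a genuine obstacle; the only points demanding care are (a) verifying that Assumption \ref{boundedness-alpha} delivers the integrability hypotheses required simultaneously by Lemma \ref{lem1}, Proposition \ref{optimal}, and the existence/uniqueness theory of \cite{Nichole} for all three RBSDE, and (b) keeping the data of the three equations correctly aligned under the change of variables $\widehat Y(\cdot)=Y(\cdot)/\alpha(\cdot)$ — product-space obstacle $L\alpha(\cdot)$ and terminal value $\xi\alpha_T(\cdot)$ for (\ref{dparametrized}), against $L$ and $\xi$ for (\ref{grbsde}) and for the reference RBSDE — so that the symbols $Y(\cdot)$, $\widehat Y(G)$, $Y$ appearing in (\ref{CEIt}) and (\ref{CEI}) are unambiguous.
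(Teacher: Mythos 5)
Your proposal is correct and follows exactly the route the paper intends: the corollary is pure bookkeeping obtained by substituting the representation of Lemma \ref{lem1} for the $\G$-term and that of Remark \ref{rem1} (Proposition \ref{optimal} with $f\equiv 0$) for the $\F$-term in the definition of $CEI(t)$, then setting $t=0$ and using $\alpha_0\equiv 1$. Your explicit verification that Assumption \ref{boundedness-alpha} delivers conditions (iv'), (vi') for the product-space data and (iv), (vii) for the reference RBSDE is precisely the reduction the paper itself sketches just before stating that assumption, so nothing is missing.
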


\begin{rem}
From the remarks preceding the above corollary, we conclude that $\E[\widehat{Y}_{t}(G)]\geq \E[Y_{t}]$ for $t\geq 0$. In other words, the average of the solution of the initially enlarged RBSDE is bigger than the average of the solution of the initial RBSDE.
\end{rem}

Let us briefly comment on $CEI(T)$, the value of extra information at exercise time $T$ from the perspective of the RBSDE. By definition we have $$CEI(T):=\esssup_{\tau^{'}\in\T_{T,T}{(\G)}}\E{\left[R(\tau^{'})|\g_{T}\right]}-\esssup_{\tau\in\T_{T,T}{(\F)}}\E{\left[R(\tau)|\f_{T}\right]}=\E{\left[R(T)|\g_{T}\right]}-\E{\left[R(T)|\f_{T}\right]}=\xi-\xi=0.$$
Looking at this value with the underlying RBSDE, we get (see \ref{CEIt})
 $$CEI(T)=\frac{Y_{T}(G)}{\alpha_{T}(G)}-Y_{T}=\widehat{Y}_{T}(G)-Y_{T}.$$
 But $\frac{Y_{T}(G)}{\alpha_{T}(G)}=\frac{\xi\alpha_{T}(G)}{\alpha_{T}(G)}=\xi$, and $Y_{T}=\widehat{Y}_{T}(G)=\xi$, which confirms $CEI(T)=0.$
This is what we expect, since additional information at
exercise time does not help the buyer to do better by a better strategy. It would be interesting to find a more precise description of the price of the additional information. As it stands, it is given by the difference of the first components $Y$ of two solution processes of RBSDE with identical terminal conditions, drivers, and obstacles, but on two spaces of different complexity. We conjecture that $Y$ is an increasing function of the complexity of the spaces, but at the moment cannot substantiate this claim.


\subsection{A special case}

We briefly discuss a simple case for which $CEI$ can be explicitly calculated. Assume that  $\F=(\f_{t})_{t\in[0,T]}$ is a Brownian standard filtration and $G$ is independent of $\f_{t}$ for all $t\in[0,T]$.
In this case we have for $t\in[0,T], u\in\R$
$$\alpha_{t}(u) =\frac{dP_{t}^{G}(u,\cdot)}{dP^{G}(u)}=1,\quad \p- a.s,$$ so from formula (\ref{CEI})
$CEI=0$. This is because we face the RBSDE

\begin{equation}
\left\{
\begin{aligned}
-dY_{t}(.)&=dK_{t}(.)-Z_{t}(.)dB^{\F}_{t},\quad\ \quad 0\leq t \leq T,\\
Y_{T}(.)&=\xi\alpha_{T}(.)=\xi,\\
Y_{t}(.)&\geq L_{t}\alpha_{t}(.)=L_{t},\quad0\leq t \leq T,\quad\quad\textstyle\int_{0}^{T}{(Y_{t}(.)-L_{t}\alpha_{t}(.))dK_{t}(.)}=\textstyle\int_{0}^{T}{(Y_{t}(.)-L_{t})dK_{t}(.)}=0.\\
\end{aligned}
\right.
\label{akhari}\end{equation}

By uniqueness of the solution of the RBSDE, $Y(.)\equiv Y$.\\
 In addition, $V^{G}$, the value of the American contingent claim with additional information coincides with the value of the same American contingent claim
 without this information. This follows from Remark \ref{value} stating $V^{G}=Y_{0}(G),$ where $Y(.)$ is the solution of (\ref{akhari}), and uniqueness of its solution giving $Y(G)=Y.$

\quad

\vspace{1cm}
{\Large{\bf Acknowledgements}}

\quad

{\bf We would like to thank the referees for their careful reading and helpful comments.}
This work is a part of the first author's PhD thesis at Sharif University of Technology under supervision of Professor Bijan Z.Zangeneh. She wishes to thank her supervisor for his support, encouragement and guidance. Furthermore, she wants to thank Viktor Feunou, PhD candidate at Humboldt-Universit\"at zu Berlin, for helpful discussions. The financial support from Humboldt-Universit\"at zu Berlin is gratefully acknowledged.
In particular, she wants to thank Professor Peter Imkeller for making this possible.

 \newpage

 \end{document}